\newtheorem{theorem}{Theorem}
\newtheorem{proposition}[theorem]{Proposition}
\theoremstyle{definition}
\newtheorem{definition}[theorem]{Definition}
\theoremstyle{remark}
\numberwithin{theorem}{section}
\tikzset{
        coord/.style={draw, circle, inner sep=0pt, minimum size=14pt},
        coordlabel/.style={anchor=center, minimum size=14pt, label distance=14pt},
        bcoord/.style={coord, label={[coordlabel]270:#1}},
        lcoord/.style={coord, label={[coordlabel]180:#1}},
        minicoord/.style={draw,circle,inner sep=0pt, minimum size=9pt,execute at begin node=\footnotesize},
        rmatrix/.style={
            matrix of math nodes, 
            nodes={coord,anchor=center}, 
            column sep={20pt,between origins}, 
            row sep={20pt,between origins}
            },
        spin/.style={draw=none, fill, circle, inner sep=2pt, minimum size=0pt},
        wcoord/.style={coord, append after command={(\tikzlastnode)-- +(0.5,0) (\tikzlastnode)-- +(-0.5,0)}},
        zcoord/.style={coord, append after command={(\tikzlastnode)-- +(0,0.5) (\tikzlastnode)-- +(0,-0.5)}}
        }
\newcommand*{\C}{\mathbb{C}}
\newcommand{\col}[1]{\tikz{\node[minicoord]{$#1$};}}
\newcommand*{\rmi}{\mathrm{I}}
\newcommand*{\rmii}{\mathrm{I\!I}}
\title{Higher rank elliptic partition functions
and \\ multisymmetric elliptic functions}
\author[1]{Allan John Gerrard\footnote{a.j.gerrard$\otimes$rs.tus.ac.jp}}
\affil[1]{Department of Physics, Tokyo University of Science}
\author[2]{Kohei Motegi\footnote{kmoteg0$\otimes$kaiyodai.ac.jp}}
\affil[2]{Faculty of Marine Technology, Tokyo University of Marine Science and Technology}
\author[1]{Kazumitsu Sakai\footnote{k.sakai$\otimes$rs.tus.ac.jp}}
\begin{document}

\maketitle

\begin{abstract}
We introduce and investigate a class of $\mathfrak{gl}_{M+1}$ partition functions which
is an extension of the one introduced by Foda-Manabe.
We characterize the partition functions by a nested version of Izergin-Korepin analysis,
and determine the explicit forms, for each of the rational, trigonometric and elliptic versions.
The resulting multisymmetric functions can be regarded
as extensions of the rational, trigonometric and elliptic weight functions.
\end{abstract}

    
    
    
    

\section{Introduction}
Partition functions of integrable lattice models
\cite{Bethe,Baxter,KBI} are not just important objects in statistical physics
but also have deep connections with mathematics.
The first type investigated extensively was the domain wall boundary partition functions
\cite{Ko,Iz,Ku1,Ku2,Tsuchiya}.
More recently, partition functions have been investigated as special functions and symmetric functions.
For the quantum group case, see
\cite{BBF,TarVarSigma,BWZ,BP,vDE,Borodin,Tak,metaplectic,FM,BW,BFHTW} to list a few.
For the dynamical/elliptic case, see
\cite{FV2,TVAst,PRS,Ros,FK,Chinesegroup,Chinesegroup2,Galleasone,GL,Lamers,Motegi,
Aggarwal,Borodin2,Motegi2}.

A class of special functions which are known as weight functions
appeared in the context of quantum integrable systems,
as parts of integrands of solutions to the $q$KZ equations and
as off-shell Bethe wavefunctions in algebraic Bethe ansatz.
For rational/trigonometric weight functions, see
\cite{TVAst,Matsuo,TV,Mimachi,TarVarSigma,RTV1,Sh} for example.
Partition functions corresponding to these special functions first appeared in \cite{Reshetikhin}
and were further studied in \cite{TarVarSigma,BW,FM,Motegi2}.
See also \cite{KT1,KT2} for new expressions of weight functions
and \cite{Smirnov} for an extension to a different direction. 
Higher rank elliptic weight functions were introduced in
\cite{Konno1,Konno2,FRV,RTV2}.
These weight functions also appear as a geometric basis named as stable basis
\cite{MO,AO}, motivated by mathematically formulating \cite{NS1,NS2}.

In this paper, we introduce and investigate a generalization of the type of partition functions
introduced by Foda-Manabe for all rational/trigonometric/elliptic models.
We call the analysis of the partition functions used in this paper nested Izergin-Korepin analysis,
since it is a higher rank version of Izergin-Korepin analysis which was first
used to determine the explicit forms of the domain wall boundary partition functions
of the six-vertex model.
We characterize the partition functions by the nested Izergin-Korepin analysis,
and determine the explicit forms as multisymmetric functions.
The multisymmetric functions introduced generalize the ones by \cite{FM} and \cite{Motegi2} for the elliptic $\mathfrak{gl}_3$ case,
and can be regarded as extensions of the rational, trigonometric and elliptic weight functions.
The analysis used in this paper can treat all three types of models in a unified way.
The nested version of the Izergin-Korepin analysis is similar to \cite{Wheeler}
which Slavnov's determinant formula for the scalar product \cite{Slavnov} was
characterized by using essentially the domain wall boundary partition function as the initial condition
and the scalar product as the final output of the recursion.
Here we use essentially the $\mathfrak{gl}_M$ partition function as the initial condition
and construct a recursion relation between the $\mathfrak{gl}_{M+1}$ partition functions.

This paper is organized as follows.
In the next section, we introduce the partition functions of rational/trigonometric type and give a detailed description
of the labels and symbols. We characterize the partition functions by
constructing a nested version of Korepin's Lemma.
Then we introduce multisymmetric functions and show they satisfy all the properties of Korepin's Lemma.
We give the details of the computations for the rational version.
In section 3, we introduce an elliptic version of the partition functions
and perform the same analysis.
We show in section 4, the correspondence between
the special cases of the elliptic multisymmetric functions introduced in section 3
to the elliptic weight functions by Konno and Rimanyi-Tarasov-Varchenko is explained.

\section{Rational and trigonometric $\mathfrak{gl}_{M+1}$ partition functions}
In this section,
we introduce and study partition functions constructed from $R$-matrices
of the Yangian $Y(\mathfrak{gl}_{M+1})$ and the quantum affine algebra $U_q(\widehat{\mathfrak{sl}}_{M+1})$
\cite{Drinfeld,Jimbo,RTF}.

\subsection{$R$-matrix}

Let $V$ be an $(M+1)$-dimensional vector space spanned by standard basis $e_k$,  $1 \leq k \leq M+1$.
We call the subscripts of the standard basis $1 \leq k \leq M+1$ colors in this paper.

The rational $R$-matrix is
\begin{align}
R(x,y)=\sum_{i=1}^{M+1} (x-y+1) E_{ii} \otimes E_{ii}+\sum_{i \neq j} (x-y) E_{ii} \otimes E_{jj}
+\sum_{i \neq j}E_{ij} \otimes E_{ji},
\end{align}
where $E_{ij}$ are matrix units, satisfying $E_{ij}e_k=\delta_{jk}e_i$.

The $R$-matrix is
graphically represented as Figure \ref{rationalrmatrix-components}.
Here, only the nonzero matrix elements of the rational $R$-matrix
are shown.
 \begin{figure}[ht]
        \centering
        \begin{tabular}{ccc}
            \includegraphics{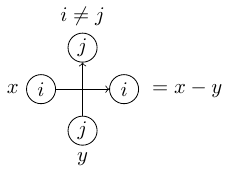}
            &
            \includegraphics{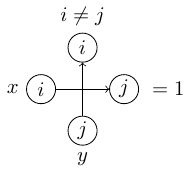}
            &
            \includegraphics{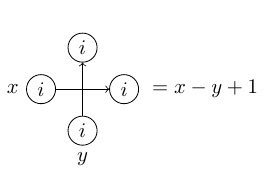}
        \end{tabular}        
        \caption{Matrix elements of the rational $R$-matrix $R(x,y)$.}
\label{rationalrmatrix-components}
    \end{figure}

The $R$-matrix satisfies the Yang-Baxter equation
\begin{align}
R_{23}(x_2-x_3)R_{13}(x_1-x_3)R_{12}(x_1-x_2)=
R_{12}(x_1-x_2)R_{13}(x_1-x_3)R_{23}(x_2-x_3),
\label{rationalYBE}
\end{align}
acting on $V_1 \otimes V_2 \otimes V_3$.
Each $R_{ij}(x,y)$ acts nontrivially on $V_i$ and $V_j$, and on the remaining space as identity, for example $R_{12}(x,y) = R(x,y) \otimes I$.

See Figure \ref{figureyangbaxter} for a graphical description of \eqref{rationalYBE}.
    \begin{figure}[ht]
        \centering
        \includegraphics{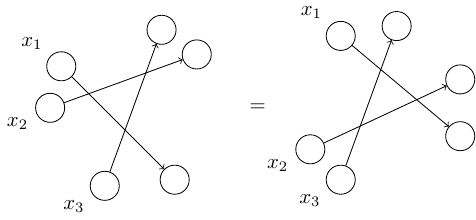}
        \caption{The Yang-Baxter equation \eqref{rationalYBE}.}
\label{figureyangbaxter}
    \end{figure}

\subsection{Partition functions}
We introduce a class of partition functions graphically represented by Figure
\ref{fig:rationalpartitionfunctions}.
This figure is basically the same for the rational, trigonometric and elliptic versions, although we need additional explanation for the elliptic case because of the presence of the
dynamical parameters which will be explained in the next section.

\begin{figure}[ht] 
    \includegraphics[width=\textwidth]{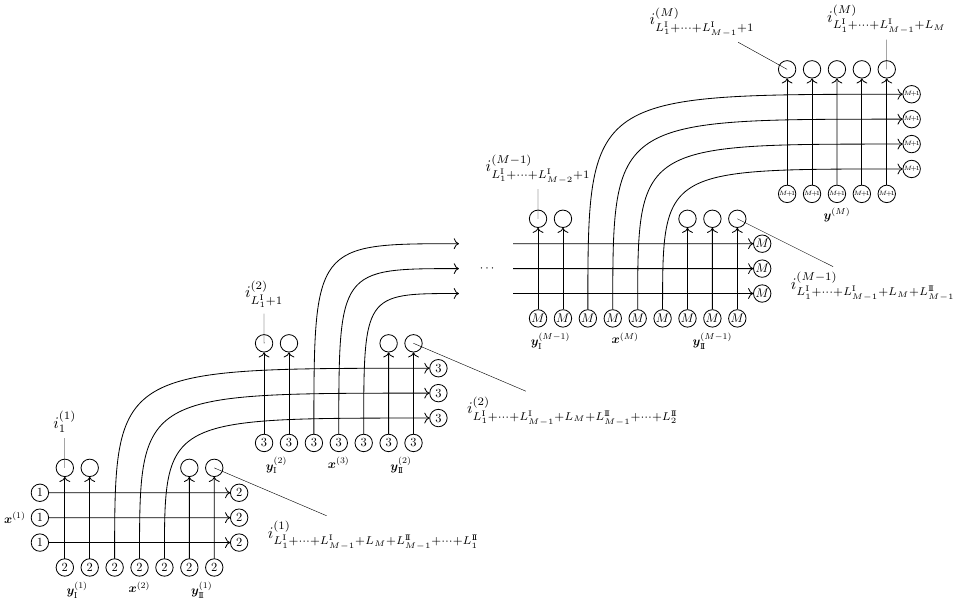}
\caption{Rational and trigonometric $\mathfrak{gl}_{M+1}$ partition functions
\eqref{rationalpartitionfunctions} and \eqref{trigonometricpartitionfunctions}.
For the trigonometric case, the $x$- and $y$-variables are replaced by $u$- and $v$-variables.
}
\label{fig:rationalpartitionfunctions}
\end{figure}

In Figure \ref{fig:rationalpartitionfunctions},
there are $M$ layers.
We call the layer in the bottom-left the \emph{first layer},
the layer northeast to it the \emph{second layer}, and so on.
Each layer consists of horizontal lines and vertical lines, which represent vector spaces.
In keeping with tradition, we call the horizontal lines \emph{auxiliary spaces}
and vertical lines \emph{quantum spaces}.
Note that some of the quantum spaces in the $j$-th layer become auxiliary spaces in the $(j+1)$-th layer.
To each vector space, we also assign a spectral variable.
The set of spectral variables of the auxiliary spaces in the $j$-th layer is denoted by ${\bm x}^{(j)}
=\{x^{(j)}_1,\dots,x^{(j)}_{k_j} \}$ ($k_j=|{\bm x}^{(j)}|$).
Note that the ordering of the variables in the auxiliary spaces can be arbitrary due to the Yang-Baxter relation. 
The spectral variables in the quantum spaces in the $j$-th layer are ${\bm y}_{\rmi}^{(j)}
=\{y_{\rmi,1}^{(j)},\dots,y_{\rmi,L_j^{\rmi}}^{(j)} \}$ $(L_j^{\rmi}=|{\bm y}_{\rmi}^{(j)}|)$, ${\bm x}^{(j+1)}$, ${\bm y}_\rmii^{(j)}=\{
y_{\rmii,1}^{(j)},\dots,y_{\rmii,L_j^{\rmii}}^{(j)} \}$ ($L_j^{\rmii}=|{\bm y}_{\rmii}^{(j)}|$).
Note that the ordering of variables matters for quantum spaces in each layer,
and the variables are ordered as $y_{\rmi,1}^{(j)},\dots,y_{\rmi,L_j^{\rmi}}^{(j)}, x^{(j+1)}_1,\dots,x^{(j+1)}_{k_j}, y_{\rmii,1}^{(j)},\dots,y_{\rmii,L_j^{\rmii}}^{(j)}$
from left to right in the $j$-th layer ($j=1,\dots,M-1$).
The exception is the $M$-th layer, where the set of spectral variables of the quantum spaces is ${\bm y}^{(M)}=\{y^{(M)}_1,\dots,y^{(M)}_{L_M} \}$
$(L_M=|{\bm y}^{(M)}|)$.

To each quantum space with spectral variable in
${\bm y}_{\rmi}^{(j)}$, ${\bm y}_{\rmii}^{(j)}$, $(j=1,\dots,M-1)$, ${\bm y}^{(M)}$, we assign a color, which represents contraction by the corresponding basis vector. 
In Figure \ref{fig:rationalpartitionfunctions}, this is denoted by a circle 
on top of the vertical line (quantum space).
To label this assignment of colors, we introduce \emph{coordinates}.
We order the variables as ${\bm y}_{\rmi}^{(1)},{\bm y}_{\rmi}^{(2)},\cdots,{\bm y}_{\rmi}^{(M-1)},{\bm y}^{(M)}, {\bm y}_{\rmii}^{(M-1)},\dots,
{\bm y}_{\rmii}^{(2)},{\bm y}_{\rmii}^{(1)}$,
and call the quantum space to
which the $j$-th spectral variable is associated
the \emph{$j$-th quantum space}.
We call the place to which a color is assigned in the $j$-th quantum space the \emph{$j$-th coordinate}, 
running from $1, \dots, L^{\rmi}_1 + \dots + L^{\rmi}_{M-1} + L_M + L^{\rmii}_{M-1} + \dots + L^{\rmii}_1$.
If the place is in the $k$-th layer, we denote the color assigned as $i_j^{(k)}$,
see Figure \ref{fig:rationalpartitionfunctions}.
Knowledge of the colors at each coordinate and the spectral parameters is enough to specify the partition function. 
However, for the purposes of what follows, there is a more convenient way of parametrising the assignment of colors, which was introduced by Foda-Manabe.
Accordingly, we will refer to the above description as the \emph{naive labelling} of configurations, and we proceed to introduce the \emph{Foda-Manabe labelling} below.

Let $\widehat{{\bm I}}_{k_j}^{(j)}$ denote the ordered set of all coordinates from the $j$-th to the $M$-th layer, so
\begin{equation}
    \widehat{{\bm I}}_{k_j}^{(j)} = \left\{
        L^{\rmi}_1 + \cdots + L^{\rmi}_{j-1} + 1 < \cdots < L^{\rmi}_1 + \cdots + L^{\rmi}_{M-1} + L_M + L^{\rmii}_{M-1} + \cdots + L^{\rmii}_{j}
    \right\},
\end{equation}
with $L^\rmi_0 :=0$.
Let ${\bm I}_{k_j}^{(j)} \subset \widehat{{\bm I}}_{k_j}^{(j)}$ denote the coordinates among these which are colored by $1,2,\dots,j$.
Then $\widetilde{{\bm I}}_{k_j}^{(j)}$ is the set of coordinates which is \emph{induced} from ${\bm I}_{k_j}^{(j)}$ and $\widehat{{\bm I}}_{k_j}^{(j)}$, a process that is defined as follows.
We remove from $\widehat{{\bm I}}_{k_j}^{(j)}$ all coordinates colored by $j+2,\dots,M+1$ and relabel the remaining coordinates to $\{1,2,\dots,k_{j+1}+L_j^{\rmi}+L_j^{\rmii} \}$, preserving ordering.
Accordingly, the subset ${\bm I}_{k_j}^{(j)}$ of $\widehat{{\bm I}}_{k_j}^{(j)}$ is mapped
to some subset in $\{1,2,\dots,
k_{j+1}+L_j^{\rmi}+L_j^{\rmii} \}$; this is the definition of $\widetilde{{\bm I}}_{k_j}^{(j)}$.
To be more concrete,
see Figure \ref{inducedlabelone} for the understanding of the induced label
where $\{ I_1 < I_2 < \cdots < I_{k_{j+1}+L_j^{\rmi}+L_j^{\rmii}} \}$ is the set of
all coordinates colored by $1,2,\dots,j+1$ from the $j$-th to the $M$-th layer 
and is a subset of $\widehat{{\bm I}}_{k_j}^{(j)}$.
See also
Figure \ref{inducedlabeltwo}
for the graphical understanding of the sets introduced.
We denote the tuple $({\bm I}_{k_1}^{(1)}, {\bm I}_{k_2}^{(2)},\dots, {\bm I}_{k_M}^{(M)}   )$ by ${\bm I}$.
We mainly use this for the label of configurations.

Let us also give some explanation on the size of sets.
For example, note $|{\bm I}_{k_j}^{(j)}|=|\widetilde{{\bm I}}_{k_j}^{(j)}|=k_j$.
 This can be shown as follows. 
    The argument hinges on the fact that outputs of colour $j$ must occur in the layer $j-1$ or higher. 
    In the extreme, we can consider the top layer, and the set $\bm {I}^{(M)}_{k_M}$, which is the set of all coordinates in the top layer except those of colour $M+1$.  
    Comparing inputs and outputs of $M+1$, we have $L_M$ inputs and $k_M$ outputs in the auxiliary spaces. 
    Therefore, we must have $L_M-k_M$ outputs of $M+1$ in the top layer.
    The total number of top layer coordinates is $L_M$. 
    Hence, $|\bm {I}^{(M)}_{k_M}| = L_M - (L_M-k_M) = k_M$.
    The argument generalises to each $j$ similarly. 

    Another way of viewing the result is by starting from a position with no auxiliary spaces, in which case the input and output must be identical for each quantum space; in other words, all outputs at layer $j$ are of colour $j$.
    Then, for each auxiliary space that has output of colour $j$, the number of outputs that must be of colour $j$ is reduced by one. 
    The set of coordinates that are not colour $j$ is $\bm{I}^{(j)}_{k_j}$, and so it has size $k_j$.
We remark that ${\bm I}_{k_j}^{(j)}$ and $\widetilde{{\bm I}}_{k_j}^{(j)}$
are natural generalizations of the ones in Foda-Manabe whereas the symbols
$\widehat{{\bm I}}_{k_j}^{(j)}$ do not correspond to the ones used in their paper.
We also introduce notations for the elements of the sets $\bm I_{k_j}^{(j)}=\{ I_1^{(j)}< \cdots < I_{k_j}^{(j)} \}$.

\begin{figure}[ht] 
    \centering
        \includegraphics{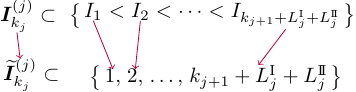}
\caption{Pictorial explanation of the induced label.}
\label{inducedlabelone}
    \end{figure}

\begin{figure}[ht]
        \centering
        \includegraphics{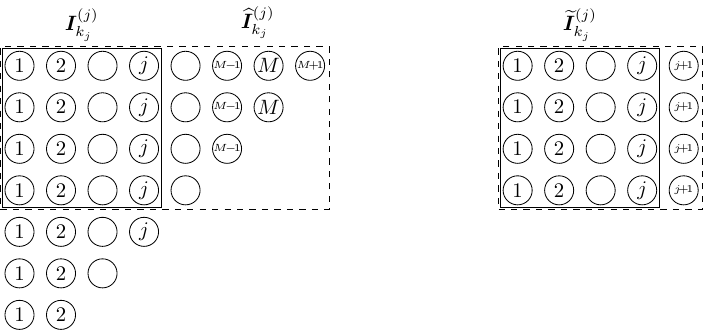}
\caption{Sets used to describe the configurations.
On the left panel, each row corresponds to each layer of the partition functions.
The bottom row corresponds to the first layer where only colors 1 and 2 are allowed.
The row above the bottom row corresponds to the second layer where color 3 is additionally allowed, and so on.
 ${\bm I}_{k_j}^{(j)}$ denote the set of coordinates which are colored by $1,2,\dots,j$ from the $j$-th to the $M$-th layer,
which is represented as the thick line.
The set of all coordinates  from the $j$-th to the $M$-th layer is denoted as $\widehat{{\bm I}}_{k_j}^{(j)}$
and corresponds to the dashed line.
The right panel represents the induced set $\widetilde{{\bm I}}_{k_j}^{(j)}$.
The set obtained by removing from $\widehat{{\bm I}}_{k_j}^{(j)}$ coordinates colored by $j+2,\dots,M+1$ and squeezing to $\{1,2,\dots,k_{j+1}+L_j^{\rmi}+L_j^{\rmii} \}$
is represented as the dashed line,
and the subset ${\bm I}_{k_j}^{(j)}$ is accordingly
mapped to some subset in $\{1,2,\dots,
k_{j+1}+L_j^{\rmi}+L_j^{\rmii} \}$ which is denoted as $\widetilde{{\bm I}}_{k_j}^{(j)}$ and corresponds to the thick line.
}
\label{inducedlabeltwo}
    \end{figure}

We denote the rational partition function as
  \begin{equation}
            \psi\Big(\bm x^{(1)}, \dots,\bm x^{(M)} \Big| \left\{ \bm y^{(1)}_{\rmi}, \bm y^{(1)}_{\rmii} \right\} , \dots, \left\{ \bm y^{(M-1)}_{\rmi}, \bm y^{(M-1)}_{\rmii} \right\} , \bm y^{(M)} \Big|
            k_1, \dots, k_M \Big| \{L_1^{\rmi}, L_1^{\rmii} \}, \dots, \{L_{M-1}^{\rmi}, L_{M-1}^{\rmii} \}, L_M \Big| \bm I \Big).
\label{rationalpartitionfunctions}
        \end{equation}
We will often abbreviate this to $\psi$ when we do not have to specify the detailed information of the partition functions.

The type of partition function we consider is a generalization of the one considered by Foda-Manabe.
The difference is the presence of additional ``left'' and ``right'' quantum sites at every level of the chain.
Setting ${\bm y}_{\rmi}^{(j)}=\emptyset$ $(L_j^{\rmi}=0)$ or ${\bm y}_{\rmii}^{(j)}=\emptyset$ $(L_j^{\rmii}=0)$ for all $1 \leq j \leq M-1$ reduces to the Foda-Manabe type. 
Note, however, that the top level of sites is unchanged; we may regard the left and right sites as being unified here.

\subsection{Characterization}
We determine the explicit form of the partition function
based on Lagrange interpolation, which we call the Izergin-Korepin method.
We first characterize the properties of partition functions
by a nested version of Korepin's lemma.
Korepin's lemma corresponds to constructing the recursion relations and initial conditions
of the partition functions.

\begin{proposition} \label{rationalKorepinlemma}
    Let $L^\rmi := L_1^\rmi + \cdots + L_{M-1}^\rmi + L_M$. 
    The partition function  $\psi$ satisfies the following properties: 
    \begin{enumerate}
        \item \emph{Case A}: If $i_{L^{\rmi}}^{(M)} \neq M+1$,
the degree of $y_{L_M}^{(M)}$ in $\psi$ is at most $k_M-1$.
        \item $\psi$ is symmetric with respect to $\bm x^{(M)}$.
        \item \emph{Case A}: If $i^{(M)}_{L^\rmi} \neq M+1$, 
        \begin{multline}
            \psi\Big(\bm x^{(1)}, \dots,\bm x^{(M)} \Big| \left\{ \bm y^{(1)}_{\rmi}, \bm y^{(1)}_{\rmii} \right\} , \dots, \left\{ \bm y^{(M-1)}_{\rmi}, \bm y^{(M-1)}_{\rmii} \right\} , \bm y^{(M)} \Big|
            \\
            k_1, \dots, k_M \Big| \{L_1^{\rmi}, L_1^{\rmii} \}, \dots, \{L_{M-1}^{\rmi}, L_{M-1}^{\rmii} \}, L_M \Big| \bm I \Big) \Bigg|_{y^{(M)}_{L_M}=x^{(M)}_{k_M}+1}
            \\
            = \prod_{j=1}^{L_M-1} \left(x^{(M)}_{k_M}-y^{(M)}_j\right)
            \prod_{j=1}^{k_M-1} \left(x^{(M)}_j-x^{(M)}_{k_M}-1\right)
            \\
            \times 
            \psi\Big(\bm x^{(1)}, \dots,\bm x^{(M)}  \setminus \{ x^{(M)}_{k_M} \} \Big| \left\{ \bm y^{(1)}_{\rmi}, \bm y^{(1)}_{\rmii} \right\} , \dots, \left\{ \bm y^{(M-1)}_{\rmi}, \{x^{(M)}_{k_M} \} \cup \bm y^{(M-1)}_{\rmii} \right\} , \bm y^{(M)} \setminus \{ y^{(M)}_{L_M} \} \Big|
            \\
            k_1, \dots, k_{M-1}, k_M-1 \Big| \{L_1^{\rmi}, L_1^{\rmii} \}, \dots, \{L_{M-1}^{\rmi}, L_{M-1}^{\rmii}+1 \}, L_M-1 \Big| \bm J \Big),
        \end{multline}
where the $\bm J$ are related to the $\bm I$ by 
        \begin{gather*}
            {\bm J}^{(i)}_{k_i} = {\bm I}^{(i)}_{k_i} \qquad 1 \leq i \leq M-1; \\
            {\bm J}^{(M)}_{k_M-1} = {\bm I}^{(M)}_{k_M} \setminus \{L^{\rmi} \}; \\
            \widetilde{{\bm J}}^{(i)}_{k_i} = \widetilde{{\bm I}}^{(i)}_{k_i} \qquad 1 \leq i \leq M-1.
        \end{gather*}
        \item \emph{Case B}: If $i^{(M)}_{L^{\rmi}} = M+1$,
        \begin{multline}
            \psi\Big(\bm x^{(1)}, \dots,\bm x^{(M)} \Big| \left\{ \bm y^{(1)}_{\rmi}, \bm y^{(1)}_{\rmii} \right\} , \dots, \left\{ \bm y^{(M-1)}_{\rmi}, \bm y^{(M-1)}_{\rmii} \right\} , \bm y^{(M)} \Big|
            \\
            k_1, \dots, k_M \Big| \{L_1^{\rmi}, L_1^{\rmii} \}, \dots, \{L_{M-1}^{\rmi}, L_{M-1}^{\rmii} \}, L_M \Big| \bm I \Big)
            \\
            =
            \prod_{j=1}^{k_M}\left(x_j - y^{(M)}_{L_M}+1 \right)
            \times 
            \psi\Big(\bm x^{(1)}, \dots,\bm x^{(M)} \Big| \left\{ \bm y^{(1)}_{\rmi}, \bm y^{(1)}_{\rmii} \right\} , \dots, \left\{ \bm y^{(M-1)}_{\rmi}, \bm y^{(M-1)}_{\rmii} \right\} , \bm y^{(M)} \setminus \{y^{(M)}_{L_M}\} \Big|
            \\
            k_1, \dots, k_M \Big| \{L_1^{\rmi}, L_1^{\rmii} \}, \dots, \{L_{M-1}^{\rmi}, L_{M-1}^{\rmii} \}, L_M-1 \Big| \bm J \Big),
        \end{multline}
 where the $\bm J$ are related to the $\bm I$ by 
        \begin{gather*}
            {\bm J}^{(i)}_{k_i} = \{j | j \in {\bm I}^{(i)}_{k_i}, j\leq L^{\rmi} \} \cup \{j-1 | j \in {\bm I}^{(i)}_{k_i}, j > L^{\rmi} \} \qquad 1 \leq i \leq M-1; \\
            {\bm J}^{(M)}_{k_M} = {\bm I}^{(M)}_{k_M} \\
            \widetilde{{\bm J}}^{(i)}_{k_i} = \widetilde{{\bm I}}^{(i)}_{k_i} \qquad 1 \leq i \leq M-1.
        \end{gather*}
        \item \emph{Initial condition}: If $k_M=1$ and $i^{(M)}_{L^{\rmi}} \neq M+1$,
            \begin{multline}
                \psi\Big(\bm x^{(1)}, \dots,\bm x^{(M)} \Big| \left\{ \bm y^{(1)}_{\rmi}, \bm y^{(1)}_{\rmii} \right\} , \dots, \left\{ \bm y^{(M-1)}_{\rmi}, \bm y^{(M-1)}_{\rmii} \right\} , \bm y^{(M)} \Big|
                \\
                k_1, \dots, k_M \Big| \{L_1^{\rmi}, L_1^{\rmii} \}, \dots, \{L_{M-1}^{\rmi}, L_{M-1}^{\rmii} \}, L_M \Big| \bm I \Big)
                \\
                =
                \prod_{j=1}^{L_M-1}\left(x^{(M)}_1-y^{(M)}_j\right) \times
                \\
                \times 
                \psi\Big(\bm x^{(1)}, \dots,\bm x^{(M-1)} \Big| \left\{ \bm y^{(1)}_{\rmi}, \bm y^{(1)}_{\rmii} \right\} , \dots, \left\{ \bm y^{(M-2)}_{\rmi}, \bm y^{(M-2)}_{\rmii} \right\} , \bm y^{(M-1)}_{\rmi} \cup \{x^{(M)}_1\} \cup \bm y^{(M-1)}_{\rmii} \Big|
                \\
                k_1, \dots, k_M \Big| \{L_1^{\rmi}, L_1^{\rmii} \}, \dots, \{L_{M-2}^{\rmi}, L_{M-2}^{\rmii} \}, L^{\rmi}_{M-1} + L^{\rmii}_{M-1} + 1 \Big| \bm J \Big),
            \end{multline}
where the $\bm J$ are related to the $\bm I$ by
    \begin{gather*}
        {{\bm J}}_{k_i}^{(i)} =\{ j| j \in {{\bm I}}_{k_i}^{(i)}, j<L^{\rmi}   \} \cup \{ j-L_M+1| j \in {{\bm I}}_{k_i}^{(i)}, j \geq L^{\rmi} \}  \qquad 1 \leq i \leq M-1;
\\
        \widetilde{{\bm J}}_{k_i}^{(i)} = \widetilde{{\bm I}}_{k_i}^{(i)} \qquad 1 \leq i \leq M-1.
    \end{gather*}
    \end{enumerate}
\end{proposition}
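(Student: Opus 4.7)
The plan is to establish the five properties by local graphical analysis at the top layer of the partition function, using the fact that each rational R-matrix entry is at most linear in each of its two spectral parameters. Throughout, I would treat the lower $M-1$ layers as an opaque weight that multiplies the top-layer contribution, so the argument reduces to studying the top layer subject to the boundary colours imposed by the labels $\bm I$.

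Property 1 is a counting argument. The variable $y^{(M)}_{L_M}$ appears only in the R-matrices along the rightmost quantum line of the top layer, which has $k_M$ crossings with the auxiliary lines $x^{(M)}_1, \ldots, x^{(M)}_{k_M}$. Each crossing contributes a factor of degree at most one in $y^{(M)}_{L_M}$, giving an a priori bound of $k_M$. When the output colour $i^{(M)}_{L^\rmi}$ differs from $M+1$ the colour carried by the line must change at least once, which forces one of the crossings to be an off-diagonal entry $E_{ij}\otimes E_{ji}$ that is constant in the spectral parameters, reducing the degree to $k_M-1$. Property 2 is the standard railway argument: insert $R(x^{(M)}_a-x^{(M)}_b)$ and its inverse at the left of the top-layer auxiliary lines, propagate one copy through the chain using \eqref{rationalYBE}, and check that the boundary contribution is trivial since the auxiliary inputs are all of the distinguished colour $M+1$.

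The core step is property 3, which is the main obstacle. Setting $y^{(M)}_{L_M}=x^{(M)}_{k_M}+1$ annihilates the diagonal weight $x-y+1$ of the R-matrix at the crossing of $x^{(M)}_{k_M}$ with $y^{(M)}_{L_M}$, so this particular crossing cannot pass the colours through unchanged. Combined with the assumption $i^{(M)}_{L^\rmi}\neq M+1$, this produces a unique frozen configuration: the auxiliary line $x^{(M)}_{k_M}$ is forced to exchange its colour-$(M+1)$ input with the quantum line at that crossing, and from that point onward travels as colour $M+1$ through all remaining crossings in the top layer, picking up diagonal $E_{M+1,M+1}\otimes E_{ii}$ weights with the other auxiliaries and diagonal $E_{ii}\otimes E_{M+1,M+1}$ weights with the other quantum lines. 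Collecting these weights yields exactly $\prod_{j=1}^{L_M-1}(x^{(M)}_{k_M}-y^{(M)}_j)\prod_{j=1}^{k_M-1}(x^{(M)}_j-x^{(M)}_{k_M}-1)$, and the residual lattice is the reduced partition function in which the former auxiliary parameter $x^{(M)}_{k_M}$ appears as a new quantum parameter inserted into the right-hand block of layer $M-1$ (precisely where the auxiliary line enters that layer). Verification that the labels transform as $\bm I\to\bm J$ then amounts to tracking how the coordinates in $\widehat{\bm I}^{(i)}_{k_i}$ are renumbered when $L^\rmii_{M-1}$ is incremented and one coordinate is removed from the top layer.

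Properties 4 and 5 are degenerate variants of the same mechanism. For property 4, the hypothesis $i^{(M)}_{L^\rmi}=M+1$ together with the colour-$(M+1)$ auxiliary inputs forces the rightmost quantum line to carry colour $M+1$ throughout, so each of the $k_M$ crossings contributes only the diagonal weight $x^{(M)}_i-y^{(M)}_{L_M}+1$, and the line can be removed to produce the claimed scalar factor and residual partition function; the label shift $j\mapsto j-1$ for $j>L^\rmi$ records the removal of this single coordinate. For property 5 with $k_M=1$, the single top-layer auxiliary must absorb the unique colour change needed to realise $i^{(M)}_{L^\rmi}\neq M+1$ and thereafter propagates as colour $M+1$ across the $L_M-1$ other quantum crossings, producing $\prod_{j=1}^{L_M-1}(x^{(M)}_1-y^{(M)}_j)$; the auxiliary then becomes a quantum line at the top of layer $M-1$, which is the claimed reduction. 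In all of these, the only delicate point is the explicit relabelling $\bm I\to\bm J$, which must be checked case-by-case against Figures \ref{inducedlabelone} and \ref{inducedlabeltwo}.
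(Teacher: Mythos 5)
Your proposal follows essentially the same route as the paper: a degree count on the rightmost column for Property 1, the train argument via \eqref{rationalYBE} for Property 2, and the Izergin--Korepin freezing of the last row and column (resp.\ only the last column, resp.\ the entire top layer) for Properties 3--5, with the frozen $b$-weights producing exactly the stated prefactors and the relabelling $\bm I\to\bm J$ checked diagrammatically, which is precisely what the paper does in Figures \ref{frozenillustrationone}--\ref{initialconditionfigure}. One slip of wording in your Property 3: the line $x^{(M)}_{k_M}$ does not cross the other auxiliary lines (they are parallel within the top layer); the two frozen products arise from its $b$-vertices with the columns $y^{(M)}_1,\dots,y^{(M)}_{L_M-1}$ and from the $b$-vertices of the column $y^{(M)}_{L_M}$ with the rows $x^{(M)}_1,\dots,x^{(M)}_{k_M-1}$, the corner being the unique $c$-vertex --- but this is a misdescription of the geometry rather than of the mechanism, and does not affect the argument.
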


\begin{proof}
Property 1 can be checked by inserting the completeness relation between the rightmost column and the column left to it
of the top layer and using the definition of the $R$-matrix.

Property 2 follows from the standard train argument using the Yang-Baxter equation \eqref{rationalYBE}.

To prove Property 3, first note that in Case A, a number of nodes are ``frozen''
when $y^{(M)}_{L_M}$ is specialized to
$y^{(M)}_{L_M}=x^{(M)}_{k_M}+1$, as indicated in Figure \ref{frozenillustrationone}.
    \begin{figure*}[ht]
        \includegraphics[width=\textwidth]{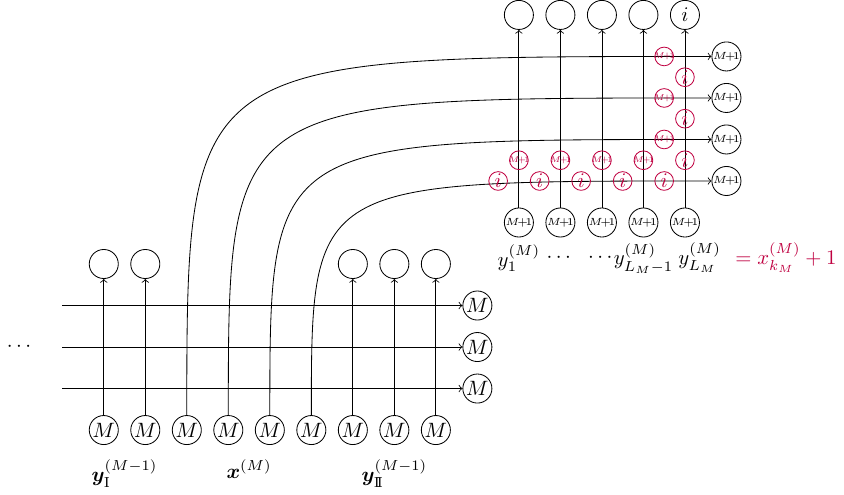}
        \caption{Pictorial explanation of property 3 of the partition function. At this value of $y^{(M)}_{L_M}$, the nodes are frozen in this configuration. }
\label{frozenillustrationone}
    \end{figure*}
These frozen nodes are evaluated as in Figure \ref{frozenweights}.
\begin{figure*}[ht]
    \centering
    \includegraphics{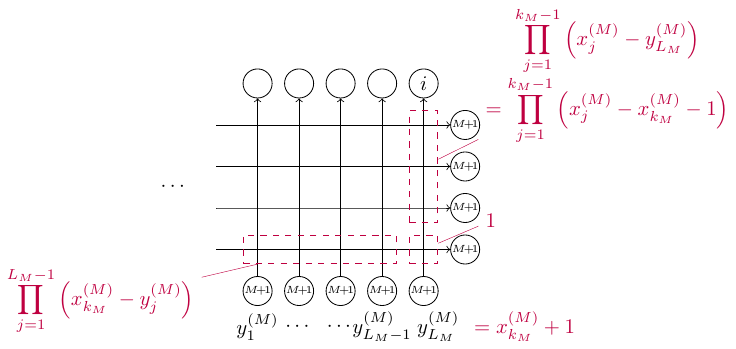}
\caption{Evaluation of the weights from the frozen nodes.}
\label{frozenweights}
\end{figure*}
This gives the recursion relation 
    \[
        \psi \left(\cdots \middle| {\bm I} \right)\Bigg|_{y^{(M)}_{L_M}=x^{(M)}_{k_M}+1}
        = 
 \prod_{j=1}^{L_M-1} \left(x^{(M)}_{k_M}-y^{(M)}_j\right)
            \prod_{j=1}^{k_M-1} \left(x^{(M)}_j-x^{(M)}_{k_M}-1\right)
 \psi \left(\cdots \middle| {\bm J} \right).
    \]

To prove Property 4, note that  In Case B, the rightmost column is frozen out
as Figure \ref{factorizationillustration}.
\begin{figure*}[ht]
    \centering
    \includegraphics{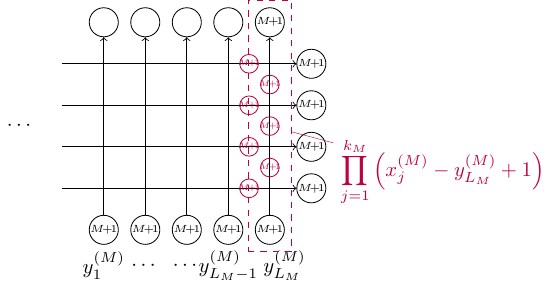}
\caption{Pictorial explanation for Case B.}
\label{factorizationillustration}
\end{figure*}
    This gives the recursion 
    \[
        \psi\left(\cdots \middle| {\bm I} \right) = \prod_{j=1}^{k_M} \left(x_j^{(M)}-y^{(M)}_{L_M}+1 \right) \, \psi \left(\cdots \middle| {\bm J} \right).
    \]

Finally let us check Property 5
corresponding to the initial condition for recursion: $k_M=1$ and $i^{(M)}_{L^{\rmi}} =j \neq M+1$.
See Figure \ref{initialconditionfigure}.
    Diagrammatically we can see that this freezes out the entire top layer to give a factor
    \[
        \prod_{i=1}^{L_M-1} \left(x_1^{(M)} - y_i^{(M)} \right).
    \]
Removing the top layer, we obtain partition functions of $\mathfrak{gl}_M$
labelled by $\bm J$, as given in Property 5.
    \begin{figure*}[ht]
        \centering
        \includegraphics[width=0.7\textwidth]{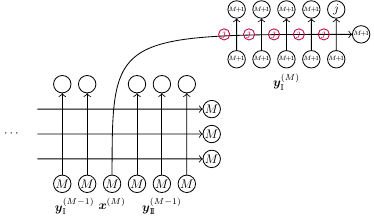}
\caption{The case when $k_M=1$ and $i^{(M)}_{L^{\rmi}} =j \neq M+1$.
The top layer is frozen, and removing that part gives partition function of $\mathfrak{gl}_M$.}
\label{initialconditionfigure}
    \end{figure*}

\end{proof}


Before going to the next subsection,
we explain the nested Izergin-Korepin method.
We use induction on $(M,L_M)$, ordered lexicographically, $k_M \leq L_M$. 
Suppose the explicit forms of the $\mathfrak{gl}_M$ partition functions are determined,
and we wish to determine the $\mathfrak{gl}_{M+1}$ partition functions.
If $k_M=1$, we may apply Property 5 to reduce the expression to a known $\mathfrak{gl}_M$ partition function.
Otherwise, we apply a nested induction on $L_M$; the remaining properties may be used to reduce any 
partition function to one with $L_M$ replaced by $L_M-1$, and since $k_M \leq L_M$ at all times, we eventually arrive at $k_M=1$.
The inductive step is clear when $i_{L_1^{\rmi}+\cdots+L_{M-1}^{\rmi}+L_M}^{(M)} = M+1$, using Property 4. 
In the case $i_{L_1^{\rmi}+\cdots+L_{M-1}^{\rmi}+L_M}^{(M)} \neq M+1$, Property 1 implies that we need only evaluate $\psi$ at $k_M$ points to determine its $y_{L_M}^{(M)}$-dependence, Property 3 gives one such point as $y_{L_M}^{(M)} = x_{k_M}^{(M)} + 1$, and Property 2 allows us to replace $x_{k_M}^{(M)}$ by any other $x_{j}^{(M)}$, giving a total of $k_M$ points and completing the induction.

\subsection{Multisymmetric functions}
In this subsection, we introduce a class of multisymmetric functions
and show they are explicit forms of the partition functions for the rational case.
\begin{definition}
We define the extended rational weight function as
\begin{multline} \label{4-extended-wavefunction}
        W \bigg( \bm x^{(1)} ,\dots, \bm x^{(M)} \bigg| 
        \{\bm y_{\rmi}^{(1)},\bm y_{\rmii}^{(1)} \} ,\dots, \{\bm y_{\rmi}^{(M-1)},\bm y_{\rmii}^{(M-1)} \}, \bm y^{(M)} \bigg| 
        \\
        k_1, \dots, k_M \bigg| \{L_1^{\rmi} , L_1^{\rmii} \}, \dots, \{L_{M-1}^{\rmi} , L_{M-1}^{\rmii} \}, L_M \bigg|  \bm I \bigg)
        \\
        = \sum_{\sigma_1 \in S_{k_1}} \cdots \sum_{\sigma_M \in S_{k_M}}
        \prod_{p=1}^{M-1} \Bigg\{
            \prod_{a=1}^{k_p} \Bigg(
                \prod_{i=1}^{\widetilde{I}_{a}^{(p)}-1-L_1^{\rmi}-\cdots -L_{p-1}^{\rmi}} 
                    \left(x^{(p)}_{\sigma_p(a)}-m_i^{L_p^{\rmi},k_{p+1},L_p^{\rmii}} \left(\bm y_{\rmi}^{(p)}\middle| \bm x^{(p+1)}_{\sigma_{p+1}} \middle| \bm y^{(p)}_{\rmii}\right) \right)
                \\
                \times 
                \prod_{i=\widetilde{I}_{a}^{(p)}+1-L_1^{\rmi}-\cdots -L_{p-1}^{\rmi}}^{L^{\rmi}_p+L^{\rmii}_p + k_{p+1}}
                    \left(x^{(p)}_{\sigma_p(a)}-m_i^{L_p^{\rmi},k_{p+1},L_p^{\rmii}} \left(\bm y_{\rmi}^{(p)}\middle| \bm x^{(p+1)}_{\sigma_{p+1}} \middle| \bm y^{(p)}_{\rmii}\right)+1 \right)
            \Bigg)
            \prod_{a<b}^{k_p} 
                \frac{x^{(p)}_{\sigma_p(a)}-x^{(p)}_{\sigma_p(b)}-1}{x^{(p)}_{\sigma_p(a)}-x^{(p)}_{\sigma_p(b)}}
        \Bigg\}
        \\
        \times 
        \prod_{a=1}^{k_M} \left(
            \prod_{i=1}^{I_a^{(M)}-1-L_1^{\rmi}-\cdots - L_{M-1}^{\rmi}} 
                \left(x_{\sigma_M(a)}^{(M)}-y_i^{(M)}\right)
            \prod_{i=I_a^{(M)}+1-L_1^{\rmi}-\cdots - L_{M-1}^{\rmi}}^{L_M} 
                \left(x_{\sigma_M(a)}^{(M)}-y_i^{(M)}+1\right)
        \right)
        \\ \times 
        \prod_{a<b}^{k_M} 
            \frac{x^{(M)}_{\sigma_M(a)}-x^{(M)}_{\sigma_M(b)}-1}{x^{(M)}_{\sigma_M(a)}-x^{(M)}_{\sigma_M(b)}},
    \end{multline}
    where
    \begin{equation*}
        m_i^{L_p^{\rmi},k_{p+1},L_p^{\rmii}} \left(\bm y_{\rmi}^{(p)}\middle| \bm x^{(p+1)} \middle| \bm y^{(p)}_{\rmii}\right) = 
        \begin{cases}
            y^{(p)}_{\rmi,i} & 1 \leq i \leq L_p^{\rmi} \\
            x^{(p+1)}_{i-L_p^{\rmi}} & L_p^{\rmi}+1 \leq i \leq L_p^{\rmi}+k_{p+1} \\
            y^{(p)}_{\rmii,i-L_p^{\rmi}-k_{p+1}} & L_p^{\rmi}+k_{p+1}+1 \leq i \leq L_p^{\rmi}+k_{p+1}+L_p^{\rmii}
        \end{cases},
    \end{equation*}
    and ${\bm x}^{(j)}_{\sigma_j} := \left\{ x^{(j)}_{\sigma_j(1)}, \dots, x^{(j)}_{\sigma_j(k_j)}\right\}$ as an ordered set. 
\end{definition}

\begin{theorem}
We have
\begin{multline}
 \psi \bigg( \bm x^{(1)} ,\dots, \bm x^{(M)} \bigg| 
        \{\bm y_{\rmi}^{(1)},\bm y_{\rmii}^{(1)} \} ,\dots, \{\bm y_{\rmi}^{(M-1)},\bm y_{\rmii}^{(M-1)} \}, \bm y^{(M)} \bigg| \\
        k_1, \dots, k_M \bigg| \{L_1^{\rmi} , L_1^{\rmii} \}, \dots, \{L_{M-1}^{\rmi} , L_{M-1}^{\rmii} \}, L_M \bigg|  \bm I \bigg)
        \\
        =
 W \bigg( \bm x^{(1)} ,\dots, \bm x^{(M)} \bigg| 
        \{\bm y_{\rmi}^{(1)},\bm y_{\rmii}^{(1)} \} ,\dots, \{\bm y_{\rmi}^{(M-1)},\bm y_{\rmii}^{(M-1)} \}, \bm y^{(M)} \bigg| \\
        k_1, \dots, k_M \bigg| \{L_1^{\rmi} , L_1^{\rmii} \}, \dots, \{L_{M-1}^{\rmi} , L_{M-1}^{\rmii} \}, L_M \bigg|  \bm I \bigg).
\end{multline}
\end{theorem}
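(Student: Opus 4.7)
The plan is to invoke the uniqueness implicit in Proposition \ref{rationalKorepinlemma}: the five properties listed there, combined with the lexicographic induction on $(M, L_M)$ described in the preceding subsection, determine $\psi$ completely. It therefore suffices to verify that the multisymmetric function $W$ of \eqref{4-extended-wavefunction} satisfies the very same five properties. Once this is done, induction immediately forces $\psi = W$.

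I would dispatch the two structural properties first. Property 1 is read off directly from the summand: in Case A the coordinate $L^{\rmi}$ lies in $\bm I^{(M)}_{k_M}$, so exactly one index $a^{\ast}$ satisfies $I^{(M)}_{a^{\ast}} - L^{\rmi}_1 - \cdots - L^{\rmi}_{M-1} = L_M$, and for that $a^{\ast}$ the variable $y^{(M)}_{L_M}$ is excluded from both sub-products in the $M$-th level factor, while for the remaining $k_M - 1$ values of $a$ it contributes linearly through the shifted factor $(x^{(M)}_{\sigma_M(a)} - y^{(M)}_{L_M} + 1)$; hence $\deg_{y^{(M)}_{L_M}} W \leq k_M - 1$. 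Property 2 is automatic from the sum over $\sigma_M \in S_{k_M}$ together with the standard rational cocycle $\prod_{a<b} (x^{(M)}_{\sigma_M(a)} - x^{(M)}_{\sigma_M(b)} - 1)/(x^{(M)}_{\sigma_M(a)} - x^{(M)}_{\sigma_M(b)})$, which is the familiar construction guaranteeing $W$ is symmetric in each $\bm x^{(p)}$.

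Next I would treat the recursions. For Property 3 (Case A), substituting $y^{(M)}_{L_M} = x^{(M)}_{k_M} + 1$ into \eqref{4-extended-wavefunction} forces $\sigma_M(k_M) = a^{\ast}$ in each surviving term: any other choice produces a zero factor $(x^{(M)}_{\sigma_M(a^{\ast})} - y^{(M)}_{L_M} + 1)$ that is not cancelled elsewhere, while pairs of permutations related by an adjacent transposition at the $k_M$-th slot combine via the cocycle to eliminate the unwanted contributions. The surviving sum factorises into the expected prefactor $\prod_{j=1}^{L_M - 1}(x^{(M)}_{k_M} - y^{(M)}_j) \prod_{j=1}^{k_M-1}(x^{(M)}_j - x^{(M)}_{k_M} - 1)$ together with a residual expression that one identifies, via the shift rule for $m^{L^{\rmi}_{M-1}, k_M - 1, L^{\rmii}_{M-1} + 1}_i$, with $W$ of the data on the right-hand side of Property 3 --- the key point being that appending $x^{(M)}_{k_M}$ to $\bm y^{(M-1)}_{\rmii}$ produces exactly the right pattern of shifted and unshifted factors at the $(M-1)$-st level. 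Property 4 (Case B) is easier: since $L^{\rmi} \notin \bm I^{(M)}_{k_M}$, every index $a$ yields a factor $(x^{(M)}_{\sigma_M(a)} - y^{(M)}_{L_M} + 1)$, and these pull out cleanly to the common prefactor $\prod_{j=1}^{k_M}(x^{(M)}_j - y^{(M)}_{L_M} + 1)$. The initial condition, Property 5, follows immediately once one notes that for $k_M = 1$ the sum over $\sigma_M$ has a single term and $I^{(M)}_1 = L^{\rmi}$, so that the top-layer contribution collapses to $\prod_{j=1}^{L_M - 1}(x^{(M)}_1 - y^{(M)}_j)$, with the remainder matching a level-$(M-1)$ copy of $W$ in which $x^{(M)}_1$ has been inserted into the quantum strand at position $L^{\rmi}_{M-1} + 1$.

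The step I expect to absorb the most effort is the bookkeeping in Property 3: one has to check that the induced labels $\widetilde{\bm J}^{(i)}_{k_i} = \widetilde{\bm I}^{(i)}_{k_i}$ are genuinely preserved after $x^{(M)}_{k_M}$ is promoted to a $\bm y^{(M-1)}_{\rmii}$-variable in the $(M-1)$-st layer, and that the delicate split between ``unshifted'' factors $(x - m)$ and ``shifted'' factors $(x - m + 1)$ in the definition of $m^{L^{\rmi}_p, k_{p+1}, L^{\rmii}_p}_i$ survives the reindexing at every level $p \leq M - 1$. This is essentially a careful matching of index intervals, but it is where the multi-level structure of the label $\bm I$ interacts nontrivially with the recursion; once it is verified, the induction closes and the theorem follows.
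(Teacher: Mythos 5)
Your proposal is correct and follows essentially the same route as the paper: uniqueness via the nested Izergin--Korepin induction on Proposition \ref{rationalKorepinlemma}, then a direct verification that $W$ satisfies Properties 1--5, with the key step in Property 3 being that the specialization $y^{(M)}_{L_M}=x^{(M)}_{k_M}+1$ forces $\sigma_M(k_M)=k_M$ and the surviving terms factorize. One small remark: the paper's mechanism there is that every term with $\sigma_M(k_M)\neq k_M$ vanishes individually through the factor $\prod_{a=1}^{k_M-1}\bigl(x^{(M)}_{\sigma_M(a)}-x^{(M)}_{k_M}\bigr)$, so your extra appeal to pairwise cancellation of permutations via the cocycle is unnecessary (and not quite how the cancellation works), but this does not affect the validity of the argument.
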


\begin{proof}
What we need to prove is that the function \eqref{4-extended-wavefunction}
satisfies all the properties listed in Proposition \ref{rationalKorepinlemma}, as these uniquely determine the partition function.
Properties 1 and 2 are easy to check from the definition, while the confirmations of Properties 3, 4 and 5 are given as propositions below.
\end{proof}

 \begin{proposition}
If $i_{L^{\rmi}} \neq M+1$,
the functions \eqref{4-extended-wavefunction} satisfy the recursion relation 
        \begin{multline} \label{4-caseA}
            W \bigg( \bm x^{(1)} ,\dots, \bm x^{(M)} \bigg| 
            \{\bm y_{\rmi}^{(1)},\bm y_{\rmii}^{(1)} \} ,\dots, \{\bm y_{\rmi}^{(M-1)},\bm y_{\rmii}^{(M-1)} \}, \bm y^{(M)} \bigg| \\
            k_1, \dots, k_M \bigg| \{L_1^{\rmi} , L_1^{\rmii} \}, \dots, \{L_{M-1}^{\rmi} , L_{M-1}^{\rmii} \}, L_M \bigg|  \bm I \bigg) \Bigg|_{y^{(M)}_{L_M}=x^{(M)}_{k_M}+1}
            \\
            =
            \prod_{j=1}^{L_M-1} \left(x^{(M)}_{k_M}-y^{(M)}_j\right)
            \prod_{j=1}^{k_M-1} \left(x^{(M)}_{j}-x^{(M)}_{k_M}-1\right)
            \\
            \times 
           W \bigg( \bm x^{(1)} ,\dots, \bm x^{(M)} \setminus \{x^{(M)}_{k_M}\}\bigg| 
            \{\bm y_{\rmi}^{(1)},\bm y_{\rmii}^{(1)} \} ,\dots, \{\bm y_{\rmi}^{(M-1)},\{x^{(M)}_{k_M}\}\cup \bm y_{\rmii}^{(M-1)} \}, \bm y^{(M)} \setminus \{y^{(M)}_{L_M}\} \bigg| 
            \\
            k_1, \dots, k_{M-1}, k_M-1 \bigg| \{L_1^{\rmi} , L_1^{\rmii} \}, \dots, \{L_{M-1}^{\rmi} , L_{M-1}^{\rmii}+1 \}, L_M-1 \bigg|  \bm J \bigg),
        \end{multline}
        where the $\bm J$ are related to the $\bm I$ by 
        \begin{gather*}
           {\bm J}^{(i)}_{k_i} ={\bm I}^{(i)}_{k_i} \qquad 1 \leq i \leq M-1; \\
           {\bm J}^{(M)}_{k_M-1} ={\bm I}^{(M)}_{k_M} \setminus \{L^{\rmi}\}; \\
            \widetilde{{\bm J}}^{(i)}_{k_i} = \widetilde{{\bm I}}^{(i)}_{k_i} \qquad 1 \leq i \leq M-1.
        \end{gather*}
    \end{proposition}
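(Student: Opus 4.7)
The plan is to specialize $y^{(M)}_{L_M}=x^{(M)}_{k_M}+1$ on the right-hand side of \eqref{4-extended-wavefunction} and identify which permutations $\sigma_M\in S_{k_M}$ contribute nontrivially. The hypothesis $i^{(M)}_{L^{\rmi}}\neq M+1$ implies $L^{\rmi}\in \bm I^{(M)}_{k_M}$, and since $L^{\rmi}$ is the largest element of $\widehat{\bm I}^{(M)}_{k_M}=\{L^{\rmi}-L_M+1,\dots,L^{\rmi}\}$ it must equal $I^{(M)}_{k_M}$. Writing $L'=L_1^{\rmi}+\cdots+L_{M-1}^{\rmi}$, for $a=k_M$ one has $I^{(M)}_{k_M}-L'=L_M$, so the first product in the $M$-th layer factor runs only up to $i=L_M-1$ and the second product is empty; in particular $y^{(M)}_{L_M}$ is absent from the $a=k_M$ factor. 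For $a<k_M$, $i=L_M$ lies in the second product and contributes $(x^{(M)}_{\sigma_M(a)}-y^{(M)}_{L_M}+1)=(x^{(M)}_{\sigma_M(a)}-x^{(M)}_{k_M})$, which vanishes when $\sigma_M(a)=k_M$. Requiring non-vanishing for every $a<k_M$ forces $\sigma_M(k_M)=k_M$, reducing the $\sigma_M$-sum to permutations of $\{1,\dots,k_M-1\}$.

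For such $\sigma_M$, the $a=k_M$ contribution is $\prod_{i=1}^{L_M-1}(x^{(M)}_{k_M}-y^{(M)}_i)$, while the combined $i=L_M$ factors from $a<k_M$ together with the $b=k_M$ pairs of $\prod_{a<b}^{k_M}\frac{x^{(M)}_{\sigma_M(a)}-x^{(M)}_{\sigma_M(b)}-1}{x^{(M)}_{\sigma_M(a)}-x^{(M)}_{\sigma_M(b)}}$ telescope to $\prod_{a=1}^{k_M-1}(x^{(M)}_{\sigma_M(a)}-x^{(M)}_{k_M}-1)=\prod_{j=1}^{k_M-1}(x^{(M)}_j-x^{(M)}_{k_M}-1)$, independent of $\sigma_M$. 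Pulling these out of the sum reproduces exactly the prefactor of \eqref{4-caseA}. Identifying $\sigma_M$ with its restriction $\sigma'_M\in S_{k_M-1}$, the remaining $M$-th layer product and $\prod_{a<b}^{k_M-1}$ become those of the reduced $W$ with $L_M\to L_M-1$ and $\bm J^{(M)}_{k_M-1}=\bm I^{(M)}_{k_M}\setminus\{L^{\rmi}\}$.

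For the $(M-1)$-th layer, since $\sigma_M(k_M)=k_M$, the ordered sequence $m_i^{L^{\rmi}_{M-1},k_M,L^{\rmii}_{M-1}}(\bm y^{(M-1)}_{\rmi}|\bm x^{(M)}_{\sigma_M}|\bm y^{(M-1)}_{\rmii})$ places $x^{(M)}_{k_M}$ at position $L^{\rmi}_{M-1}+k_M$. In the reduced $W$, with $\bm x^{(M)}\to\bm x^{(M)}\setminus\{x^{(M)}_{k_M}\}$, $\bm y^{(M-1)}_{\rmii}\to\{x^{(M)}_{k_M}\}\cup\bm y^{(M-1)}_{\rmii}$, and $L^{\rmii}_{M-1}\to L^{\rmii}_{M-1}+1$, the sequence $m_i^{L^{\rmi}_{M-1},k_M-1,L^{\rmii}_{M-1}+1}$ again places $x^{(M)}_{k_M}$ at position $L^{\rmi}_{M-1}+(k_M-1)+1=L^{\rmi}_{M-1}+k_M$, producing exactly the same ordered list of $m_i$ values. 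Combined with the condition $\widetilde{\bm J}^{(i)}_{k_i}=\widetilde{\bm I}^{(i)}_{k_i}$ for $i\le M-1$, the $(M-1)$-th layer factor matches; the factors for $p\le M-2$ are visibly unaffected since neither the $\bm x^{(p+1)}$, the $\bm y^{(p)}_{\rmi,\rmii}$, nor the labels $\widetilde{\bm I}^{(p)}_{k_p}$ are modified.

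The main obstacle is the bookkeeping at $p=M-1$: one must verify explicitly that shifting $x^{(M)}_{k_M}$ from the middle $\bm x^{(M)}$-block into the leading position of the $\bm y^{(M-1)}_{\rmii}$-block preserves the $m_i$ sequence pointwise, so that the index ranges selected by $\widetilde{I}^{(M-1)}_a$ in the two expressions cut out identical products. Once this identification is made, every remaining piece of the reduced $W$ appears verbatim in the leftover sum, completing the proof.
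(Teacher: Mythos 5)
Your proposal is correct and follows essentially the same route as the paper's proof: you specialise $y^{(M)}_{L_M}=x^{(M)}_{k_M}+1$, observe that the $i=L_M$ factors kill every permutation with $\sigma_M(k_M)\neq k_M$, cancel the surviving $(x^{(M)}_{\sigma_M(a)}-x^{(M)}_{k_M})$ factors against the denominators of the cross terms to extract the prefactor $\prod_{j=1}^{L_M-1}(x^{(M)}_{k_M}-y^{(M)}_j)\prod_{j=1}^{k_M-1}(x^{(M)}_{j}-x^{(M)}_{k_M}-1)$, and then check that the $m_i$ sequence at level $M-1$ is preserved pointwise when $x^{(M)}_{k_M}$ migrates from the $\bm x^{(M)}$-block to the head of the $\bm y^{(M-1)}_{\rmii}$-block. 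The final bookkeeping step you flag as the "main obstacle" is exactly the paper's Step 2, and your position count $L^{\rmi}_{M-1}+(k_M-1)+1=L^{\rmi}_{M-1}+k_M$ resolves it correctly.
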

    
    \begin{proof}
        We check by the following two steps.

        \emph{Step 1. -- recursion of the ``final factor''.} 
        We will refer to the following as the ``final factor'' of the weight functions.
        \begin{equation}\label{4:caseA-final-factor}
            \underbrace{\prod_{a=1}^{k_M} 
                \prod_{i=1}^{I_a^{(M)}-1 -\sum_{q=1}^{M-1}L^{\rmi}_q} 
                    \left(x_{\sigma_M(a)}^{(M)}-y_i^{(M)}\right)}_{X}
            \underbrace{
            \prod_{a=1}^{k_M} 
                \prod_{i=I_a^{(M)}+1-\sum_{q=1}^{M-1}L^{\rmi}_q}^{L_M} 
                    \left(x_{\sigma_M(a)}^{(M)}-y_i^{(M)}+1\right)}_{Y}
            \underbrace{
            \prod_{a<b}^{k_M} 
                \frac{x^{(M)}_{\sigma_M(a)}-x^{(M)}_{\sigma_M(b)}-1}{x^{(M)}_{\sigma_M(a)}-x^{(M)}_{\sigma_M(b)}}}_{Z}.
        \end{equation}
        The expression above is made up of three products, which we will analyse in turn.
        First, note that for Case A, we have $I^{(M)}_{k_M} = L_1^{\rmi}+\cdots +L_{M-1}^{\rmi}+L_M$, so the second product of $Y$ is empty when $a=k_M$:
        \begin{equation*}
            Y =
            \prod_{a=1}^{k_M-1} 
                \prod_{i=I_a^{(M)}+1-\sum_{q=1}^{M-1}L^{\rmi}_q}^{L_M} 
                    \left(x_{\sigma_M(a)}^{(M)}-y_i^{(M)}+1\right)
        \end{equation*}
        Specialising to $y^{(M)}_{L_M}=x^{(M)}_{k_M}+1$, 
        \begin{align*}
            Y &= \prod_{a=1}^{k_M-1} 
            \prod_{i=I_a^{(M)}+1 -\sum_{q=1}^{M-1}L^{\rmi}_q}^{L_M-1} 
                \left(x_{\sigma_M(a)}^{(M)}-y_i^{(M)}+1\right)
            \prod_{a=1}^{k_M-1} 
            \left(x_{\sigma_M(a)}^{(M)}-y_{L_M}^{(M)}+1\right)
            \\
            &=\prod_{a=1}^{k_M-1} 
                \prod_{i=I_a^{(M)}+1 -\sum_{q=1}^{M-1}L^{\rmi}_q}^{L_M-1} 
                    \left(x_{\sigma_M(a)}^{(M)}-y_i^{(M)}+1\right)
            \prod_{a=1}^{k_M-1} 
                \left(x_{\sigma_M(a)}^{(M)}-x_{k_M}^{(M)}\right),
        \end{align*}
        we see that this product is zero unless $\sigma_{M}(k_M) = k_M$. 
        We rearrange the final product to arrive at
        \[
            Y = \prod_{a=1}^{k_M-1} 
                \prod_{i=I_a^{(M)}+1 -\sum_{q=1}^{M-1}L^{\rmi}_q}^{L_M-1} 
                    \left(x_{\sigma_M(a)}^{(M)}-y_i^{(M)}+1\right)
            \prod_{a=1}^{k_M-1} 
                \left(x_{a}^{(M)}-x_{k_M}^{(M)}\right).
        \]
        Next, we focus on $Z$, which can be split as follows:
        \begin{align*}
            Z&=
            \prod_{a<b}^{k_M-1} 
                \frac{x^{(M)}_{\sigma_M(a)}-x^{(M)}_{\sigma_M(b)}-1}{x^{(M)}_{\sigma_M(a)}-x^{(M)}_{\sigma_M(b)}}
            \prod_{a=1}^{k_M-1} 
                \frac{x^{(M)}_{\sigma_M(a)}-x^{(M)}_{\sigma_M(k_M)}-1}{x^{(M)}_{\sigma_M(a)}-x^{(M)}_{\sigma_M(k_M)}}
                \\
                &=
            \prod_{a<b}^{k_M-1} 
                \frac{x^{(M)}_{\sigma_M(a)}-x^{(M)}_{\sigma_M(b)}-1}{x^{(M)}_{\sigma_M(a)}-x^{(M)}_{\sigma_M(b)}}
            \prod_{a=1}^{k_M-1} 
                \frac{x^{(M)}_{\sigma_M(a)}-x^{(M)}_{k_M}-1}{x^{(M)}_{\sigma_M(a)}-x^{(M)}_{k_M}},
        \end{align*}
        where we have used the fact that $\sigma_{M}(k_M) = k_M$.
        We can then reorder the second product here to give 
        \[
            Z=
            \prod_{a<b}^{k_M-1} 
                \frac{x^{(M)}_{\sigma_M(a)}-x^{(M)}_{\sigma_M(b)}-1}{x^{(M)}_{\sigma_M(a)}-x^{(M)}_{\sigma_M(b)}}
            \prod_{a=1}^{k_M-1} 
                \frac{x^{(M)}_{a}-x^{(M)}_{k_M}-1}{x^{(M)}_{a}-x^{(M)}_{k_M}}.
        \]
        Finally, we turn to $X$, which we split as 
        \begin{align*}
            X &=
            \prod_{a=1}^{k_M-1} 
                \prod_{i=1}^{I_a^{(M)} -1 -\sum_{q=1}^{M-1}L^{\rmi}_q} 
                    \left(x_{\sigma_M(a)}^{(M)}-y_i^{(M)}\right)
            \prod_{i=1}^{L_M-1} 
                \left(x_{\sigma_M(k_M)}^{(M)}-y_i^{(M)}\right).
                \\
                &=
            \prod_{a=1}^{k_M-1} 
                \prod_{i=1}^{I_a^{(M)} -1-\sum_{q=1}^{M-1}L^{\rmi}_q} 
                    \left(x_{\sigma_M(a)}^{(M)}-y_i^{(M)}\right)
            \prod_{i=1}^{L_M-1} 
                \left(x_{k_M}^{(M)}-y_i^{(M)}\right).
        \end{align*}
        Multiplying together the three factors, we obtain 
        \begin{multline*}
            \prod_{a=1}^{k_M-1} 
                \prod_{i=1}^{I_a^{(M)} -1-\sum_{q=1}^{M-1}L^{\rmi}_q} 
                    \left(x_{\sigma_M(a)}^{(M)}-y_i^{(M)}\right)
            \prod_{i=1}^{L_M-1} 
                \left(x_{k_M}^{(M)}-y_i^{(M)}\right)
                \\ \times
            \prod_{a=1}^{k_M-1} 
                \prod_{i=I_a^{(M)}+1 -\sum_{q=1}^{M-1}L^{\rmi}_q}^{L_M-1} 
                    \left(x_{\sigma_M(a)}^{(M)}-y_i^{(M)}+1\right)
            \cancel{\prod_{a=1}^{k_M-1} 
                \left(x_{a}^{(M)}-x_{k_M}^{(M)}\right)}
                \\ \times 
            \prod_{a<b}^{k_M-1} 
                \frac{x^{(M)}_{\sigma_M(a)}-x^{(M)}_{\sigma_M(b)}-1}{x^{(M)}_{\sigma_M(a)}-x^{(M)}_{\sigma_M(b)}}
            \prod_{a=1}^{k_M-1} 
                \frac{x^{(M)}_{a}-x^{(M)}_{k_M}-1}{\cancel{x^{(M)}_{a}-x^{(M)}_{k_M}}}.
        \end{multline*}
        This is equal to 
        \begin{multline*}
            \prod_{i=1}^{L_M-1} 
                \left(x_{k_M}^{(M)}-y_i^{(M)}\right)
            \prod_{a=1}^{k_M-1} 
                \left(x^{(M)}_{a}-x^{(M)}_{k_M}-1\right)
                \\ \times 
            \prod_{a=1}^{k_M-1} 
                \prod_{i=1}^{I_a^{(M)} -1-\sum_{q=1}^{M-1}L^{\rmi}_q} 
                    \left(x_{\sigma_M(a)}^{(M)}-y_i^{(M)}\right)
            \prod_{a=1}^{k_M-1} 
                \prod_{i=I_a^{(M)}+1 -\sum_{q=1}^{M-1}L^{\rmi}_q}^{L_M-1} 
                    \left(x_{\sigma_M(a)}^{(M)}-y_i^{(M)}+1\right)
                    \\ \times 
            \prod_{a<b}^{k_M-1} 
                \frac{x^{(M)}_{\sigma_M(a)}-x^{(M)}_{\sigma_M(b)}-1}{x^{(M)}_{\sigma_M(a)}-x^{(M)}_{\sigma_M(b)}}
        \end{multline*}
        We observe that this is equal to the residual factor from \eqref{4-caseA}, along with the ``final factor'' of the wavefunction on the right hand side of \eqref{4-caseA}.
        
        \emph{Step 2. -- recursion of the remaining factors.}
        In the second to last factor, 
        \begin{align*}
            &m_i^{L_{M-1}^{\rmi},k_{M},L_{M-1}^{\rmii}} \left(\bm y_{\rmi}^{(M-1)}\middle| \bm x^{(M)}_{\sigma_M} \middle| {\bm y}^{(M-1)}_{\rmii}\right) 
            \\
            &\qquad = 
            \begin{cases}
                y^{(M-1)}_{\rmi,i} & 1 \leq i \leq L_{M-1}^{\rmi} \\
                x^{(M)}_{\sigma_M(i-L_{M-1}^{\rmi})} & L_{M-1}^{\rmi}+1 \leq i \leq L_{M-1}^{\rmi}+k_{M} 
                \\
                y^{(M-1)}_{\rmii,i-L_{M-1}^{\rmi}-k_{M}} & L_{M-1}^{\rmi}+k_{M}+1 \leq i \leq L_{M-1}^{\rmi}+k_{M}+L_{M-1}^{\rmii}
            \end{cases}
            \\
            &\qquad = 
            \begin{cases}
                y^{(M-1)}_{\rmi,i} & 1 \leq i \leq L_{M-1}^{\rmi} \\
                x^{(M)}_{\sigma_M(i-L_{M-1}^{\rmi})} & L_{M-1}^{\rmi}+1 \leq i \leq L_{M-1}^{\rmi}+k_{M}-1 \\
                x^{(M)}_{k_M} & i = L_{M-1}^{\rmi}+k_{M} \\
                y^{(M-1)}_{\rmii,i-L_{M-1}^{\rmi}-k_{M}} & L_{M-1}^{\rmi}+k_{M}+1 \leq i \leq L_{M-1}^{\rmi}+k_{M}+L_{M-1}^{\rmii},
            \end{cases}
        \end{align*}
        where we have used $\sigma_M(k_M) = k_M$, which was established in Step 1. 
        Hence, this is equal to the symbol
        \[
            m_i^{L_{M-1}^{\rmi},k_{M}-1,L_{M-1}^{\rmii}+1} \left(\bm y_{\rmi}^{(M-1)}\middle| \bm x^{(M)}_{\sigma_M}\setminus \{x^{(M)}_{k_M}\} \middle| \{x^{(M)}_{k_M}\} \cup {\bm y}^{(M-1)}_{\rmii}\right) .
        \]
        Finally, we change all instances of $\widetilde{I}^{(M-1)}_a$ to $\widetilde{J}^{(M-1)}_a$ to prove the recursion.
    \end{proof}

    \begin{proposition}
If $i_{L^{\rmi}} = M+1$,
        the function \eqref{4-extended-wavefunction} satisfies the recursion relation 
        \begin{multline} \label{4-caseB}
           W \bigg( \bm x^{(1)} ,\dots, \bm x^{(M)} \bigg| 
            \{\bm y_{\rmi}^{(1)},\bm y_{\rmii}^{(1)} \} ,\dots, \{\bm y_{\rmi}^{(M-1)},\bm y_{\rmii}^{(M-1)} \}, \bm y^{(M)} \bigg| 
            k_1, \dots, k_M \bigg| \{L_1^{\rmi} , L_1^{\rmii} \}, \dots, \{L_{M-1}^{\rmi} , L_{M-1}^{\rmii} \}, L_M \bigg|  \bm I \bigg) 
            \\
            =
            \prod_{j=1}^{k_M}\left(x_j - y^{(M)}_{L_M}+1 \right)
            \times 
            W\Big(\bm x^{(1)}, \dots,\bm x^{(M)} \Big| \left\{ \bm y^{(1)}_{\rmi}, \bm y^{(1)}_{\rmii} \right\} , \dots, \left\{ \bm y^{(M-1)}_{\rmi}, \bm y^{(M-1)}_{\rmii} \right\} , \bm y^{(M)} \setminus \{y^{(M)}_{L_M}\} \Big|
            \\
            k_1, \dots, k_M \Big| \{L_1^{\rmi}, L_1^{\rmii} \}, \dots, \{L_{M-1}^{\rmi}, L_{M-1}^{\rmii} \}, L_M-1 \Big| \bm J \Big),
        \end{multline}
        where the $\bm J$ are related to the $\bm I$ by 
        \begin{gather*}
            {\bm J}^{(i)}_{k_i} = \{j | j \in {\bm I}^{(i)}_{k_i}, j\leq L^{\rmi} \} \cup \{j-1 | j \in {\bm I}^{(i)}_{k_i}, j > L^{\rmi} \} \qquad 1 \leq i \leq M-1; \\
            {\bm J}^{(M)}_{k_M} = {\bm I}^{(M)}_{k_M} ;
            \\
            \widetilde{{\bm J}}^{(i)}_{k_i} = \widetilde{{\bm I}}^{(i)}_{k_i} \qquad 1 \leq i \leq M-1.
        \end{gather*}
    \end{proposition}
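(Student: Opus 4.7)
The plan is to mirror the two-step strategy used for \eqref{4-caseA}: first extract the desired prefactor from the ``final factor'' of \eqref{4-extended-wavefunction} and identify the residual expression with the top-layer factor of a reduced $W$, then verify that the remaining middle-layer factors are literally unchanged under the reduction.

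\textbf{Step 1: Extracting the prefactor.} The hypothesis $i^{(M)}_{L^{\rmi}} = M+1$ forces the rightmost top-layer coordinate to be colored by $M+1$ and hence not to belong to $\bm I^{(M)}_{k_M}$. Since $\bm I^{(M)}_{k_M}$ consists of top-layer coordinates, this gives the bound $I^{(M)}_a - \sum_{q=1}^{M-1} L^{\rmi}_q \leq L_M - 1$ for every $a = 1,\dots,k_M$. Consequently, the term $i = L_M$ appears in the inner product
\[
\prod_{i=I_a^{(M)}+1-\sum_{q=1}^{M-1}L^{\rmi}_q}^{L_M}\bigl(x^{(M)}_{\sigma_M(a)}-y_i^{(M)}+1\bigr)
\]
for every $a$. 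Collecting these $k_M$ terms yields $\prod_{a=1}^{k_M}(x^{(M)}_{\sigma_M(a)}-y^{(M)}_{L_M}+1) = \prod_{j=1}^{k_M}(x^{(M)}_j - y^{(M)}_{L_M}+1)$, which is $\sigma_M$-invariant and thus pulls out of the sum. The residual top-layer product has its upper index reduced to $L_M - 1$, and using $\bm J^{(M)}_{k_M} = \bm I^{(M)}_{k_M}$ this matches the top-layer factor of $W$ with $L_M$ replaced by $L_M - 1$. The Vandermonde-type factor $\prod_{a<b}^{k_M}(\cdots)$ is untouched.

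\textbf{Step 2: Middle layers are unchanged.} For $1 \leq p \leq M-1$, the symbol $m_i^{L^{\rmi}_p,k_{p+1},L^{\rmii}_p}$ depends only on $\bm y^{(p)}_{\rmi}$, $\bm x^{(p+1)}$, and $\bm y^{(p)}_{\rmii}$, none of which are altered by removing $y^{(M)}_{L_M}$. The explicit formula for the middle-layer factor involves only the induced labels $\widetilde{I}^{(p)}_a$; since the removed raw position $L^{\rmi}$ has color $M+1$, it is already excluded in passing from $\widehat{\bm I}^{(p)}_{k_p}$ to $\widetilde{\bm I}^{(p)}_{k_p}$, so deleting it from $\widehat{\bm I}^{(p)}_{k_p}$ does not change any induced label. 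This gives $\widetilde{\bm J}^{(p)}_{k_p} = \widetilde{\bm I}^{(p)}_{k_p}$, and each middle-layer factor on the left of \eqref{4-caseB} matches its counterpart on the right term by term under $\sigma_p \leftrightarrow \sigma_p$.

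\textbf{Main obstacle.} The bookkeeping is the chief subtlety: raw coordinates $j \in \bm I^{(i)}_{k_i}$ with $j > L^{\rmi}$ do shift to $j-1$ in $\bm J^{(i)}_{k_i}$, reflecting removal of one top-layer slot from the global ordering. However, because $\widetilde{\bm J}^{(i)}_{k_i} = \widetilde{\bm I}^{(i)}_{k_i}$ and only induced labels enter \eqref{4-extended-wavefunction} for $p < M$, this shift produces no algebraic change. Combining Steps 1 and 2 yields \eqref{4-caseB}.
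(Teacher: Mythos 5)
Your proposal is correct and follows essentially the same route as the paper: the paper's proof likewise isolates the "final factor," uses $I^{(M)}_{k_M} < L^{\rmi}$ to extract the $i=L_M$ term $\prod_{a=1}^{k_M}\left(x^{(M)}_a - y^{(M)}_{L_M}+1\right)$ from the middle product, and observes that the residue is the final factor with $L_M-1$ sites. Your Step 2 merely makes explicit what the paper leaves implicit (that the lower-layer factors, depending only on the induced labels $\widetilde{I}^{(p)}_a$, are unchanged), which is a welcome but not substantively different addition.
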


    \begin{proof}
        Consider the ``final factor'' again:
        \begin{equation*}
            \prod_{a=1}^{k_M}
                \prod_{i=1}^{I_a^{(M)}-1-L_1^{\rmi}-\cdots - L_{M-1}^{\rmi}} 
                    \left(x_{\sigma_M(a)}^{(M)}-y_i^{(M)}\right)
            \prod_{a=1}^{k_M}
                \prod_{i=I_a^{(M)}+1-L_1^{\rmi}-\cdots - L_{M-1}^{\rmi}}^{L_M} 
                    \left(x_{\sigma_M(a)}^{(M)}-y_i^{(M)}+1\right)
            \prod_{a<b}^{k_M}
                \frac{x^{(M)}_{\sigma_M(a)}-x^{(M)}_{\sigma_M(b)}-1}{x^{(M)}_{\sigma_M(a)}-x^{(M)}_{\sigma_M(b)}}.
        \end{equation*}
        Of these three products, we focus on the middle product. 
        We have $I^{(M)}_{k_M} < L_1^{\rmi} + \cdots + L_{M_1}^{\rmi} + L_M$, so the product is non-empty. 
        Taking out the factor corresponding to $i=L_M$, we obtain
        \begin{multline*}
            \prod_{a=1}^{k_M} \left(x_{a}^{(M)}-y_{L_M}^{(M)}+1\right)
            \\ \times 
            \prod_{a=1}^{k_M} 
                \prod_{i=1}^{I_a^{(M)}-1-L_1^{\rmi}-\cdots - L_{M-1}^{\rmi}} 
                    \left(x_{\sigma_M(a)}^{(M)}-y_i^{(M)}\right)
            \prod_{a=1}^{k_M} 
                \prod_{i=I_a^{(M)}+1-L_1^{\rmi}-\cdots - L_{M-1}^{\rmi}}^{L_M-1} 
                    \left(x_{\sigma_M(a)}^{(M)}-y_i^{(M)}+1\right)
            \\ \times 
            \prod_{a<b}^{k_M} 
                \frac{x^{(M)}_{\sigma_M(a)}-x^{(M)}_{\sigma_M(b)}-1}{x^{(M)}_{\sigma_M(a)}-x^{(M)}_{\sigma_M(b)}}.
        \end{multline*}
        We see that this gives the recursion in \eqref{4-caseB}.
    \end{proof}

\begin{proposition}
If $k_M=1$ and $i_{L^{\rmi}} \neq M+1$,
        the functions \eqref{4-extended-wavefunction} satisfy
        \begin{multline*}
           W \bigg( \bm x^{(1)} ,\dots, \bm x^{(M)} \bigg| 
            \{\bm y_{\rmi}^{(1)},\bm y_{\rmii}^{(1)} \} ,\dots, \{\bm y_{\rmi}^{(M-1)},\bm y_{\rmii}^{(M-1)} \}, \bm y^{(M)} \bigg| \\
            k_1, \dots, k_{M-1}, 1 \bigg| \{L_1^{\rmi} , L_1^{\rmii} \}, \dots, \{L_{M-1}^{\rmi} , L_{M-1}^{\rmii} \}, L_M \bigg|  \bm I \bigg) 
            \\
            =  \left(\prod_{j=1}^{L_M-1}\left(x^{(M)}_1-y^{(M)}_j\right) \right) \times 
            \\
                W\Big(\bm x^{(1)}, \dots,\bm x^{(M-1)} \Big| \left\{ \bm y^{(1)}_{\rmi}, \bm y^{(1)}_{\rmii} \right\} , \dots, \left\{ \bm y^{(M-2)}_{\rmi}, \bm y^{(M-2)}_{\rmii} \right\} , \bm y^{(M-1)}_{\rmi} \cup \{x^{(M)}_1\} \cup \bm y^{(M-1)}_{\rmii} \Big|
                \\
                k_1, \dots, k_M \Big| \{L_1^{\rmi}, L_1^{\rmii} \}, \dots, \{L_{M-2}^{\rmi}, L_{M-2}^{\rmii} \}, L^{\rmi}_{M-1} + L^{\rmii}_{M-1} + 1 \Big| \bm J \Big),
        \end{multline*}
where the $\bm J$ are related to the $\bm I$ by
    \begin{gather*}
        {{\bm J}}_{k_i}^{(i)} =\{ j| j \in {{\bm I}}_{k_i}^{(i)}, j<L^{\rmi}   \} \cup \{ j-L_M+1| j \in {{\bm I}}_{k_i}^{(i)}, j \geq L^{\rmi} \}  \qquad 1 \leq i \leq M-1;
\\
        \widetilde{{\bm J}}_{k_i}^{(i)} = \widetilde{{\bm I}}_{k_i}^{(i)} \qquad 1 \leq i \leq M-1.
    \end{gather*}
    \end{proposition}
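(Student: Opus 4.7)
The plan is to verify the recursion by direct substitution into the definition \eqref{4-extended-wavefunction}, exploiting the collapse $S_{k_M} = S_1$ that occurs when $k_M = 1$. My first step would be to evaluate the ``final factor'' (the products over $a = 1, \ldots, k_M$ appearing in the last two lines of \eqref{4-extended-wavefunction}) at $k_M = 1$. Under the hypothesis $i^{(M)}_{L^{\rmi}} \neq M+1$, the sole element of $\bm I^{(M)}_{k_M}$ must be $I_1^{(M)} = L^{\rmi}$; consequently the second $(x - y + 1)$-product is indexed over the empty set, and the $a < b$ product is trivially empty. What remains is precisely $\prod_{j=1}^{L_M - 1}(x_1^{(M)} - y_j^{(M)})$, reproducing the scalar prefactor of the claimed identity.

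My second step is to compare the remaining $\prod_{p=1}^{M-1}\{\cdots\}$ factor with the right-hand side layer by layer. The crucial observation is that with $k_M = 1$, the symbol $m_i^{L_{M-1}^{\rmi}, 1, L_{M-1}^{\rmii}}\bigl(\bm y_{\rmi}^{(M-1)} \mid \{x_1^{(M)}\} \mid \bm y_{\rmii}^{(M-1)}\bigr)$ enumerates exactly the entries of the ordered concatenation $(\bm y_{\rmi}^{(M-1)}, x_1^{(M)}, \bm y_{\rmii}^{(M-1)})$, which is by construction the new top-layer spectral list $\bm y^{(M-1), \mathrm{new}}$ of the reduced $\mathfrak{gl}_M$ partition function on the right-hand side. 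Moreover, the upper bound $L^{\rmi}_{M-1} + L^{\rmii}_{M-1} + k_M$ in the original $p = M-1$ factor becomes $L^{\rmi}_{M-1} + L^{\rmii}_{M-1} + 1$ when $k_M = 1$, matching the upper bound of the new top-layer factor.

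For the interior layers $p \leq M - 2$ the spectral variables are unchanged between the two sides, and the stipulation $\widetilde{\bm J}^{(i)}_{k_i} = \widetilde{\bm I}^{(i)}_{k_i}$ ensures that each of these factors agrees term by term. The main obstacle I anticipate lies in the $p = M - 1$ factor: on the original side it is written using the induced labels $\widetilde{\bm I}^{(M-1)}_{k_{M-1}}$ (viewed inside $\{1, \ldots, k_M + L_{M-1}^{\rmi} + L_{M-1}^{\rmii}\}$ after stripping coordinates colored $M+1$ from layers $M-1$ and $M$), whereas on the reduced side the same layer has become the top, and its factor uses the raw labels $\bm J^{(M-1)}_{k_{M-1}}$ directly. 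Reconciling the two parametrizations amounts to verifying $\bm J^{(M-1)}_{k_{M-1}} = \widetilde{\bm I}^{(M-1)}_{k_{M-1}}$ as subsets; this follows from the relabeling rule $\widetilde{\bm J}^{(M-1)}_{k_{M-1}} = \widetilde{\bm I}^{(M-1)}_{k_{M-1}}$ combined with the observation that, once the original top layer is deleted, the hat and tilde operations collapse for the new top layer (there is no higher layer left from which to strip coordinates colored $M+1$). Once this bookkeeping step is settled, the matching of the $p = M - 1$ factor is immediate and the recursion assembles.
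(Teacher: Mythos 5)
Your proposal is correct and follows essentially the same route as the paper's proof: evaluate the final factor at $k_M=1$ using $\bm I^{(M)}_1=\{L^{\rmi}\}$ to extract the prefactor $\prod_{j=1}^{L_M-1}(x^{(M)}_1-y^{(M)}_j)$, then observe that the $p=M-1$ factor, via the identification $m_i^{L_{M-1}^{\rmi},1,L_{M-1}^{\rmii}}(\bm y_{\rmi}^{(M-1)}|\{x_1^{(M)}\}|\bm y_{\rmii}^{(M-1)})=\widetilde{y}^{(M-1)}_i$, becomes the final factor of the reduced $\mathfrak{gl}_M$ weight function with top-layer list $\bm y^{(M-1)}_{\rmi}\cup\{x^{(M)}_1\}\cup\bm y^{(M-1)}_{\rmii}$ and labels $J^{(M-1)}_a=\widetilde{I}^{(M-1)}_a$. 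Your explicit treatment of the label bookkeeping at $p=M-1$ is slightly more detailed than the paper's, which simply declares $\widetilde{I}^{(M-1)}_a=J^{(M-1)}_a$ as notation, but the argument is the same.
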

    
    \begin{proof}
        Consider once again the ``final factor'',
        \begin{equation*}
            \prod_{a=1}^{k_M}
                \prod_{i=1}^{I_a^{(M)}-1-L_1^{\rmi}-\cdots - L_{M-1}^{\rmi}} 
                    \left(x_{\sigma_M(a)}^{(M)}-y_i^{(M)}\right)
            \prod_{a=1}^{k_M}
                \prod_{i=I_a^{(M)}+1-L_1^{\rmi}-\cdots - L_{M-1}^{\rmi}}^{L_M} 
                    \left(x_{\sigma_M(a)}^{(M)}-y_i^{(M)}+1\right)
            \prod_{a<b}^{k_M}
                \frac{x^{(M)}_{\sigma_M(a)}-x^{(M)}_{\sigma_M(b)}-1}{x^{(M)}_{\sigma_M(a)}-x^{(M)}_{\sigma_M(b)}}.
        \end{equation*}
        Noting that $k_M=1$ and $\bm I^{(M)}_1= \{ L^{\rmi}\}$, we see that the factor is greatly reduced to
        \[
            \prod_{i=1}^{L_M-1} 
                    \left(x_{\sigma_M(1)}^{(M)}-y_i^{(M)}\right).
        \]
        For the second to last factor of the wavefunction, corresponding to level $M-1$, we use notation 
        \begin{align*}
            \widetilde{I}^{(M-1)}_a &= J_a^{(M-1)} \\
            \widetilde{L}_{M-1} &= 1 + L^{\rmi}_{M-1} + L^{\rmii}_{M-1} \\
            \widetilde{\bm y}^{(M-1)} &= \bm y_{\rmi}^{(M-1)} \cup \{ x_1^{(M)} \} \cup \bm y_{\rmii}^{(M-1)}.
        \end{align*}
        The second to last factor assumes the form
        \begin{multline*} 
            \prod_{a=1}^{k_{M-1}} \Bigg(
                \prod_{i=1}^{\widetilde{I}_{a}^{(M-1)}-1-L_1^{\rmi}-\cdots -L_{M-2}^{\rmi}} 
                    \left(x^{(M-1)}_{\sigma_{M-1}(a)}-m_i^{L_{M-1}^{\rmi},1,L_{M-1}^{\rmii}} \left(\bm y_{\rmi}^{(M-1)}\middle| \{ x_1^{(M)} \} \middle| \bm y^{(M-1)}_{\rmii}\right) \right)
                \\
                \times 
                \prod_{i=\widetilde{I}_{a}^{(M-1)}+1-L_1^{\rmi}-\cdots -L_{M-2}^{\rmi}}^{L^{\rmi}_{M-1}+L^{\rmii}_{M-1}+1}
                    \left(x^{(M-1)}_{\sigma_{M-1}(a)}-m_i^{L_{M-1}^{\rmi},1,L_{M-1}^{\rmii}} \left(\bm y_{\rmi}^{(M-1)} \middle|  \{ x_1^{(M)} \} \middle| \bm y^{(M-1)}_{\rmii}\right)+1 \right)
            \Bigg)
            \\
            \times
            \prod_{a<b}^{k_{M-1}} 
                \frac{x^{(M-1)}_{\sigma_{M-1}(a)}-x^{(M-1)}_{\sigma_{M-1}(b)}-1}{x^{(M-1)}_{\sigma_{M-1}(a)}-x^{(M-1)}_{\sigma_{M-1}(b)}},
        \end{multline*}
        where we see that
        \begin{align*}
            m_i^{L_{M-1}^{\rmi},1,L_{M-1}^{\rmii}} \left(\bm y_{\rmi}^{(M-1)}\middle| \{ x_1^{(M)} \} \middle| \bm y^{(M-1)}_{\rmii}\right) 
            &= 
            \begin{cases}
                y^{(M-1)}_{\rmi,i} & 1 \leq i \leq L_{M-1}^{\rmi} \\
                x^{(M)}_{i-L_{M-1}^{\rmi}} &  i = L_{M-1}^{\rmi}+1 \\
                y^{(M-1)}_{\rmii,i-L_{M-1}^{\rmi}-1} & L_{M-1}^{\rmi}+2 \leq i \leq L_{M-1}^{\rmi}+1+L_{M-1}^{\rmii}
            \end{cases}
            \\
            &= \widetilde{y}^{(M-1)}_{i}.
        \end{align*}

        Therefore, the second to last factor becomes
        \begin{multline*} 
            \prod_{a=1}^{k_{M-1}} \Bigg(
                \prod_{i=1}^{J_{a}^{(M-1)}-1-L_1^{\rmi}-\cdots -L_{M-2}^{\rmi}} 
                    \left(x^{(M-1)}_{\sigma_{M-1}(a)}-\widetilde{y}^{(M-1)}_{i} \right)
                \prod_{i=J_{a}^{(M-1)}+1-L_1^{\rmi}-\cdots -L_{M-2}^{\rmi}}^{\widetilde{L}_{M-1}}
                    \left(x^{(M-1)}_{\sigma_{M-1}(a)}-\widetilde{y}^{(M-1)}_{i}+1 \right)
            \Bigg)
            \\
            \times
            \prod_{a<b}^{k_{M-1}} 
                \frac{x^{(M-1)}_{\sigma_{M-1}(a)}-x^{(M-1)}_{\sigma_{M-1}(b)}-1}{x^{(M-1)}_{\sigma_{M-1}(a)}-x^{(M-1)}_{\sigma_{M-1}(b)}},
        \end{multline*}
        which is the form of the former ``final factor''. 
    \end{proof}

\subsection{Trigonometric version}
For the trigonometric case, we replace the $x$- and $y$-variables
by $u$- and $v$-variables and denote the partition function as
  \begin{multline}
            \psi\Big(\bm u^{(1)}, \dots,\bm u^{(M)} \Big| \left\{ \bm v^{(1)}_{\rmi}, \bm v^{(1)}_{\rmii} \right\} , \dots, \left\{ \bm v^{(M-1)}_{\rmi}, \bm v^{(M-1)}_{\rmii} \right\} , \bm v^{(M)} \Big| \\
            k_1, \dots, k_M \Big| \{L_1^{\rmi}, L_1^{\rmii} \}, \dots, \{L_{M-1}^{\rmi}, L_{M-1}^{\rmii} \}, L_M \Big| \bm I \Big).
\label{trigonometricpartitionfunctions}
        \end{multline}

The version of quantum (stochastic) $R$-matrix that we will use is defined by
\begin{align}
R(u,v)&=\sum_{i=1}^{M+1} (u-qv) E_{ii} \otimes E_{ii}+\sum_{i > j} q(u-v) E_{ii} \otimes E_{jj}
+\sum_{i < j} (u-v) E_{ii} \otimes E_{jj} \nonumber \\
&+\sum_{i > j} (1-q)u E_{ij} \otimes E_{ji}+\sum_{i < j} (1-q)v E_{ij} \otimes E_{ji}.
\end{align}
See Figure \ref{trigonometricrmatrix-components} for a graphical description
of the matrix elements.

The $R$-matrix satisfies the Yang-Baxter equation
\begin{align}
R_{23}(u_2,u_3)R_{13}(u_1,u_3)R_{12}(u_1,u_2)=
R_{12}(u_1,u_2)R_{13}(u_1,u_3)R_{23}(u_2,u_3),
\end{align}
acting on $V_1 \otimes V_2 \otimes V_3$.

\begin{figure}[ht]
    \centering
    \begin{tabular}{cc}
        \includegraphics{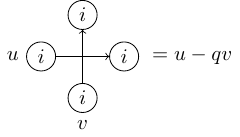}
        &
        \includegraphics{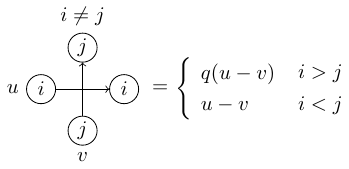}
    \end{tabular}
    \includegraphics{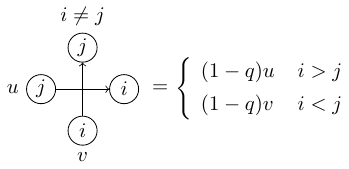}
 \caption{Matrix elements of the trigonometric $R$-matrix $R(u,v)$.}
\label{trigonometricrmatrix-components}
\end{figure}

The strategy to determine the explicit forms of the partition function
is identical the rational case.
Below, we state Korepin's lemma and give a characterization
of the partition function,
and then state that an extended version of the trigonometric weight function
satisfies the properties.

\begin{proposition}  \label{trigonometricKorepinlemma}
    Let $L^\rmi := L^{\rmi}_1+ \dots + L^{\rmi}_{M-1}+L_M$. 
    The functions $\psi$ satisfy the following properties:
    \begin{enumerate}
        \item \emph{Case A}: If $i_{L^{\rmi}}^{(M)} \neq M+1$,
the degree of $v_{L_M}^{(M)}$ in $\psi$ is at most $k_M-1$.
        \item $\psi$ is symmetric with respect to $\bm u^{(M)}$.
        \item \emph{Case A}: If $i^{(M)}_{L^{\rmi}} \neq M+1$, 
        \begin{multline}
            \psi\Big(\bm u^{(1)}, \dots,\bm u^{(M)} \Big| \left\{ \bm v^{(1)}_{\rmi}, \bm v^{(1)}_{\rmii} \right\} , \dots, \left\{ \bm v^{(M-1)}_{\rmi}, \bm v^{(M-1)}_{\rmii} \right\} , \bm v^{(M)} \Big|
            \\
            k_1, \dots, k_M \Big| \{L_1^{\rmi}, L_1^{\rmii} \}, \dots, \{L_{M-1}^{\rmi}, L_{M-1}^{\rmii} \}, L_M \Big| \bm I \Big) \Bigg|_{v^{(M)}_{L_M}=q^{-1}u^{(M)}_{k_M}}
            \\
            = (1-q) u^{(M)}_{k_M}\prod_{j=1}^{L_M-1} \left(u^{(M)}_{k_M}-v^{(M)}_j\right)
            \prod_{j=1}^{k_M-1} \left(qu^{(M)}_j-u^{(M)}_{k_M}\right)
            \\
            \times 
            \psi\Big(\bm u^{(1)}, \dots,\bm u^{(M)}  \setminus \{ u^{(M)}_{k_M} \} \Big| \left\{ \bm v^{(1)}_{\rmi}, \bm v^{(1)}_{\rmii} \right\} , \dots, \left\{ \bm v^{(M-1)}_{\rmi}, \{u^{(M)}_{k_M} \} \cup \bm v^{(M-1)}_{\rmii} \right\} , \bm v^{(M)} \setminus \{ v^{(M)}_{L_M} \} \Big|
            \\
            k_1, \dots,k_{M-1}, k_M-1 \Big| \{L_1^{\rmi}, L_1^{\rmii} \}, \dots, \{L_{M-1}^{\rmi}, L_{M-1}^{\rmii}+1 \}, L_M-1 \Big| \bm J \Big),
        \end{multline}
where the $\bm J$ are related to the $\bm I$ by 
        \begin{gather*}
           {\bm J}^{(i)}_{k_i} ={\bm I}^{(i)}_{k_i} \qquad 1 \leq i \leq M-1; \\
           {\bm J}^{(M)}_{k_M-1} ={\bm I}^{(M)}_{k_M} \setminus \{L^{\rmi} \}; \\
            \widetilde{{\bm J}}^{(i)}_{k_i} = \widetilde{{\bm I}}^{(i)}_{k_i} \qquad 1 \leq i \leq M-1.
        \end{gather*}
        \item \emph{Case B}: If $i^{(M)}_{L^{\rmi}} = M+1$,
        \begin{multline}
            \psi\Big(\bm u^{(1)}, \dots,\bm u^{(M)} \Big| \left\{ \bm v^{(1)}_{\rmi}, \bm v^{(1)}_{\rmii} \right\} , \dots, \left\{ \bm v^{(M-1)}_{\rmi}, \bm v^{(M-1)}_{\rmii} \right\} , \bm v^{(M)} \Big|
            \\
            k_1, \dots, k_M \Big| \{L_1^{\rmi}, L_1^{\rmii} \}, \dots, \{L_{M-1}^{\rmi}, L_{M-1}^{\rmii} \}, L_M \Big| \bm I \Big)
            \\
            =
            \prod_{j=1}^{k_M}\left(u_j - qv^{(M)}_{L_M}\right)
            \times 
            \psi\Big(\bm u^{(1)}, \dots,\bm u^{(M)} \Big| \left\{ \bm v^{(1)}_{\rmi}, \bm v^{(1)}_{\rmii} \right\} , \dots, \left\{ \bm v^{(M-1)}_{\rmi}, \bm v^{(M-1)}_{\rmii} \right\} , \bm v^{(M)} \setminus \{v^{(M)}_{L_M}\} \Big|
            \\
            k_1, \dots, k_M \Big| \{L_1^{\rmi}, L_1^{\rmii} \}, \dots, \{L_{M-1}^{\rmi}, L_{M-1}^{\rmii} \}, L_M-1 \Big| \bm J \Big),
        \end{multline}
where the $\bm J$ are related to the $\bm I$ by 
        \begin{gather*}
            {\bm J}^{(i)}_{k_i} = \{j | j \in {\bm I}^{(i)}_{k_i}, j\leq L^{\rmi} \} \cup \{j-1 | j \in {\bm I}^{(i)}_{k_i}, j > L^{\rmi} \} \qquad 1 \leq i \leq M-1; \\
            {\bm J}^{(M)}_{k_M} ={\bm I}^{(M)}_{k_M} 
            \\
            \widetilde{{\bm J}}^{(i)}_{k_i} = \widetilde{{\bm I}}^{(i)}_{k_i} \qquad 1 \leq i \leq M-1.
        \end{gather*}
        \item \emph{Initial condition}: If $k_M=1$ and $i^{(M)}_{L^{\rmi}} \neq M+1$,
            \begin{multline}
                \psi\Big(\bm u^{(1)}, \dots,\bm u^{(M)} \Big| \left\{ \bm v^{(1)}_{\rmi}, \bm v^{(1)}_{\rmii} \right\} , \dots, \left\{ \bm v^{(M-1)}_{\rmi}, \bm v^{(M-1)}_{\rmii} \right\} , \bm v^{(M)} \Big|
                \\
                k_1, \dots, k_M \Big| \{L_1^{\rmi}, L_1^{\rmii} \}, \dots, \{L_{M-1}^{\rmi}, L_{M-1}^{\rmii} \}, L_M \Big|\bm I \Big)
                \\
                =
                (1-q)u^{(M)}_1 \prod_{j=1}^{L_M-1}\left(u^{(M)}_1-v^{(M)}_j\right)
                \\
                \times 
                \psi\Big(\bm u^{(1)}, \dots,\bm u^{(M)} \Big| \left\{ \bm v^{(1)}_{\rmi}, \bm v^{(1)}_{\rmii} \right\} , \dots, \left\{ \bm v^{(M-2)}_{\rmi}, \bm v^{(M-2)}_{\rmii} \right\} , \bm v^{(M-1)}_{\rmi} \cup \{u^{(M)}_1\} \cup \bm v^{(M-1)}_{\rmii} \Big|
                \\
                k_1, \dots, k_M \Big| \{L_1^{\rmi}, L_1^{\rmii} \}, \dots, \{L_{M-2}^{\rmi}, L_{M-2}^{\rmii} \}, L^{\rmi}_{M-1} + L^{\rmii}_{M-1} + 1 \Big| \bm J \Big),
            \end{multline}
where the $\bm J$ are related to the $\bm I$ by
    \begin{gather*}
        {{\bm J}}_{k_i}^{(i)} =\{ j| j \in {{\bm I}}_{k_i}^{(i)}, j<L^{\rmi}   \} \cup \{ j-L_M+1| j \in {{\bm I}}_{k_i}^{(i)}, j \geq L^{\rmi} \}  \qquad 1 \leq i \leq M-1;
\\
        \widetilde{{\bm J}}_{k_i}^{(i)} = \widetilde{{\bm I}}_{k_i}^{(i)} \qquad 1 \leq i \leq M-1.
    \end{gather*}
    \end{enumerate}
\end{proposition}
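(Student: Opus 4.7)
The plan is to repeat the proof of Proposition \ref{rationalKorepinlemma} verbatim in the trigonometric setting, since the combinatorial structure of frozen configurations is dictated purely by the non-zero pattern of matrix elements of the $R$-matrix, which coincides with the rational case. Only the numerical vertex weights change, so the task reduces to verifying that the trigonometric weights multiply to give the stated prefactors.

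First I would establish Property 1 by inserting the completeness relation between the rightmost and next-to-rightmost columns of the top layer. Under the assumption $i^{(M)}_{L^{\rmi}}\neq M+1$, the admissible $R$-matrix elements along the rightmost column are each at most linear in $v^{(M)}_{L_M}$, and one of the $k_M$ vertices along that column must involve a colour-changing element with weight $(1-q)u$ (i.e.\ not carrying $v^{(M)}_{L_M}$), lowering the total degree by one; thus the degree in $v^{(M)}_{L_M}$ is at most $k_M-1$. Property 2 is the standard train argument: one attaches the identity $R_{ab}(u^{(M)}_a,u^{(M)}_b)R_{ab}^{-1}(u^{(M)}_a,u^{(M)}_b)$ to a pair of adjacent top-layer horizontal lines and slides the left factor through the lattice using the trigonometric Yang--Baxter relation, yielding the transposed ordering.

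For Property 3, the specialisation $v^{(M)}_{L_M}=q^{-1}u^{(M)}_{k_M}$ kills the diagonal pass-through weight $u^{(M)}_{k_M}-qv^{(M)}_{L_M}$, forcing exactly the same frozen pattern depicted in Figures \ref{frozenillustrationone}--\ref{frozenweights} for the rational case. Reading off the trigonometric weights along the frozen L-shaped path produces the factor $(1-q)u^{(M)}_{k_M}$ from the single colour-changing vertex at the corner, $\prod_{j=1}^{L_M-1}(u^{(M)}_{k_M}-v^{(M)}_j)$ from the vertical frozen segment of diagonal $E_{ii}\otimes E_{jj}$ weights with $i<j$, and $\prod_{j=1}^{k_M-1}(qu^{(M)}_j-u^{(M)}_{k_M})$ from the horizontal frozen segment running through the remaining top auxiliaries with $i>j$ type weights. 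The remaining lattice is precisely the partition function with $\bm J$ as described. Property 4 is analogous but simpler: when $i^{(M)}_{L^{\rmi}}=M+1$, the rightmost top column freezes to pass-through $(M+1,M+1)$ vertices with weight $u_j-qv^{(M)}_{L_M}$, yielding the stated product factor, and detaching this column gives the reduced partition function.

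Finally for Property 5, with $k_M=1$ and $i^{(M)}_{L^{\rmi}}\neq M+1$, the single auxiliary of the top layer carries colour $i^{(M)}_{L^{\rmi}}$ in and out; the entire top layer freezes into a single coloured path entering at the rightmost vertical and exiting at the leftmost horizontal. Reading off the weights gives $(1-q)u^{(M)}_1$ from the lone colour-changing corner vertex together with $\prod_{j=1}^{L_M-1}(u^{(M)}_1-v^{(M)}_j)$ from the diagonal vertices along the vertical segment. Removing the frozen top layer exposes the bottom $M-1$ layers, which form a $\mathfrak{gl}_M$ partition function whose rightmost top auxiliary now carries spectral parameter $u^{(M)}_1$ and is absorbed into the $\rmi/\rmii$ split as in the stated $\bm J$.

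The main obstacle is disciplined bookkeeping of the trigonometric weights, in particular keeping track of the $i<j$ versus $i>j$ split in the $R$-matrix (which, unlike the rational case, affects whether a diagonal vertex contributes $u-v$ or $q(u-v)$, and whether a colour-changing vertex contributes $(1-q)u$ or $(1-q)v$). Once the frozen colour on each segment is identified, the products collapse as claimed and all five properties follow exactly as in the rational proof.
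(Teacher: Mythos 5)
Your proposal is correct and follows essentially the same route as the paper, which for the trigonometric case simply declares the strategy "identical to the rational case" and relies on the same freezing arguments; your identification of which trigonometric weights ($u-qv$, $q(u-v)$, $u-v$, $(1-q)u$, $(1-q)v$) produce each stated prefactor under the specialisation $v^{(M)}_{L_M}=q^{-1}u^{(M)}_{k_M}$ matches the formulas in the proposition. If anything, you have supplied more detail than the paper does for this proposition.
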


\begin{definition}
We introduce the extended trigonometric weight functions as
\begin{multline} \label{6-extended-wavefunction}
    W\Big(\bm u^{(1)}, \dots,\bm u^{(M)} \Big| \left\{ \bm v^{(1)}_{\rmi}, \bm v^{(1)}_{\rmii} \right\} , \dots, \left\{ \bm v^{(M-1)}_{\rmi}, \bm v^{(M-1)}_{\rmii} \right\} , \bm v^{(M)} \Big|
    \\
    k_1, \dots, k_M \Big| \{L_1^{\rmi}, L_1^{\rmii} \}, \dots, \{L_{M-1}^{\rmi}, L_{M-1}^{\rmii} \}, L_M \Big| {\bm I} \Big)
    \\
    = \sum_{\sigma_1 \in S_{k_1}} \cdots \sum_{\sigma_M \in S_{k_M}}
    \prod_{p=1}^{M-1} \Bigg\{
        \prod_{a=1}^{k_p} \Bigg(
            \prod_{i=1}^{\widetilde{I}_{a}^{(p)}-1-L_1^{\rmi}-\cdots -L_{p-1}^{\rmi}} 
                \left(u^{(p)}_{\sigma_p(a)}-m_i^{L_p^{\rmi},k_{p+1},L_p^{\rmii}} \left(\bm v_{\rmi}^{(p)}\middle| \bm u^{(p+1)}_{\sigma_{p+1}} \middle| \bm v^{(p)}_{\rmii}\right) \right)(1-q)u^{(p)}_{\sigma_p(a)}
            \\
            \times 
            \prod_{i=\widetilde{I}_{a}^{(p)}+1-L_1^{\rmi}-\cdots -L_{p-1}^{\rmi}}^{L^{\rmi}_p+L^{\rmii}_p + k_{p+1}}
                \left(u^{(p)}_{\sigma_p(a)}-q m_i^{L_p^{\rmi},k_{p+1},L_p^{\rmii}} \left(\bm v_{\rmi}^{(p)}\middle| \bm u^{(p+1)}_{\sigma_{p+1}} \middle| \bm v^{(p)}_{\rmii}\right) \right)
        \Bigg)
        \prod_{a<b}^{k_p} 
            \frac{qu^{(p)}_{\sigma_p(a)}-u^{(p)}_{\sigma_p(b)}}{u^{(p)}_{\sigma_p(a)}-u^{(p)}_{\sigma_p(b)}}
    \Bigg\}
    \\
    \times 
    \prod_{a=1}^{k_M} \Bigg(
        \prod_{i=1}^{I_a^{(M)}-1-L_1^{\rmi}-\cdots - L_{M-1}^{\rmi}} 
            \left(u_{\sigma_M(a)}^{(M)}-v_i^{(M)}\right)
        \times (1-q) u^{(M)}_{\sigma_M(a)}
        \prod_{i=I_a^{(M)}+1-L_1^{\rmi}-\cdots - L_{M-1}^{\rmi}}^{L_M} 
            \left(u_{\sigma_M(a)}^{(M)}-q v_i^{(M)} \right) \Bigg)
    \\ \times 
    \prod_{a<b}^{k_M} 
        \frac{qu^{(M)}_{\sigma_M(a)}-u^{(M)}_{\sigma_M(b)}}{u^{(M)}_{\sigma_M(a)}-u^{(M)}_{\sigma_M(b)}},
\end{multline}
 where
    \begin{equation*}
        m_i^{L_p^{\rmi},k_{p+1},L_p^{\rmii}} \left(\bm v_{\rmi}^{(p)}\middle| \bm u^{(p+1)} \middle| \bm v^{(p)}_{\rmii}\right) = 
        \begin{cases}
            v^{(p)}_{\rmi,i} & 1 \leq i \leq L_p^{\rmi} \\
            u^{(p+1)}_{i-L_p^{\rmi}} & L_p^{\rmi}+1 \leq i \leq L_p^{\rmi}+k_{p+1} \\
            v^{(p)}_{\rmii,i-L_p^{\rmi}-k_{p+1}} & L_p^{\rmi}+k_{p+1}+1 \leq i \leq L_p^{\rmi}+k_{p+1}+L_p^{\rmii}
        \end{cases}.
    \end{equation*}
\end{definition}

\begin{theorem}
We have
\begin{multline}
 \psi \bigg( \bm u^{(1)} ,\dots, \bm u^{(M)} \bigg| 
        \{\bm v_{\rmi}^{(1)},\bm v_{\rmii}^{(1)} \} ,\dots, \{\bm v_{\rmi}^{(M-1)},\bm v_{\rmii}^{(M-1)} \}, \bm v^{(M)} \bigg| 
        k_1, \dots, k_M \bigg| \{L_1^{\rmi} , L_1^{\rmii} \}, \dots, \{L_{M-1}^{\rmi} , L_{M-1}^{\rmii} \}, L_M \bigg|  \bm I \bigg)
        \\
        =
 W \bigg( \bm u^{(1)} ,\dots, \bm u^{(M)} \bigg| 
        \{\bm v_{\rmi}^{(1)},\bm v_{\rmii}^{(1)} \} ,\dots, \{\bm v_{\rmi}^{(M-1)},\bm v_{\rmii}^{(M-1)} \}, \bm v^{(M)} \bigg| 
        k_1, \dots, k_M \bigg| \{L_1^{\rmi} , L_1^{\rmii} \}, \dots, \{L_{M-1}^{\rmi} , L_{M-1}^{\rmii} \}, L_M \bigg|  \bm I \bigg).
\end{multline}
\end{theorem}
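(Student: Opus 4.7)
The plan is to mimic the rational proof exactly: invoke Proposition \ref{trigonometricKorepinlemma}, which characterizes the trigonometric $\psi$ uniquely by five properties, and then verify that the explicit multisymmetric function $W$ in \eqref{6-extended-wavefunction} satisfies each of them. The induction scheme described after Proposition \ref{rationalKorepinlemma} (on $(M,L_M)$ lexicographically, with the base $k_M=1$ reducing to a $\mathfrak{gl}_M$ partition function) carries over with no change, so uniqueness is not at issue once the five properties are established.

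Properties 1 and 2 are immediate from the summand structure in \eqref{6-extended-wavefunction}. For Property 1 (Case A), $I^{(M)}_{k_M}=L^\rmi$, so in the last factor the product $\prod_{i=I^{(M)}_a+1-\cdots}^{L_M}(u^{(M)}_{\sigma_M(a)}-qv^{(M)}_i)$ is empty when $a=k_M$; consequently $v^{(M)}_{L_M}$ appears only for $a\neq k_M$, contributing degree at most $k_M-1$. Property 2 follows from the standard trick of recognising $\prod_{a<b}^{k_M}\frac{qu_{\sigma_M(a)}-u_{\sigma_M(b)}}{u_{\sigma_M(a)}-u_{\sigma_M(b)}}$ combined with the remaining $\sigma_M$-dependent factors as a symmetriser over $S_{k_M}$.

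For Properties 3, 4 and 5, I will formulate three companion propositions analogous to the three at the end of the rational subsection and verify them by the same two-step recipe: first isolate and manipulate the ``final factor'' (the product indexed by $p=M$ together with the $u^{(M)}$-Vandermonde $Z$), then show that the residual ``second-to-last'' factor correctly absorbs $\{u^{(M)}_{k_M}\}$ (Case A) or becomes the correct $\bm J$-labelled expression (Cases B and initial). Concretely for Case A, specialising $v^{(M)}_{L_M}=q^{-1}u^{(M)}_{k_M}$ converts the $a=k_M$ contribution in $Y$ to $\prod_{a=1}^{k_M-1}(u^{(M)}_{\sigma_M(a)}-u^{(M)}_{k_M})$, which vanishes unless $\sigma_M(k_M)=k_M$, exactly as in the rational case; the numerator $qu^{(M)}_{\sigma_M(a)}-u^{(M)}_{k_M}$ of $Z$ then combines with the remaining pieces to produce the prefactor $(1-q)u^{(M)}_{k_M}\prod_{j=1}^{L_M-1}(u^{(M)}_{k_M}-v^{(M)}_j)\prod_{j=1}^{k_M-1}(qu^{(M)}_j-u^{(M)}_{k_M})$. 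After this, the identification $\sigma_M(k_M)=k_M$ lets me rewrite the symbol $m_i^{L^{\rmi}_{M-1},k_M,L^{\rmii}_{M-1}}(\bm v^{(M-1)}_\rmi|\bm u^{(M)}_{\sigma_M}|\bm v^{(M-1)}_\rmii)$ as $m_i^{L^{\rmi}_{M-1},k_M-1,L^{\rmii}_{M-1}+1}(\bm v^{(M-1)}_\rmi|\bm u^{(M)}_{\sigma_M}\setminus\{u^{(M)}_{k_M}\}|\{u^{(M)}_{k_M}\}\cup\bm v^{(M-1)}_\rmii)$, and the change from $\widetilde I^{(M-1)}_a$ to $\widetilde J^{(M-1)}_a$ gives precisely the right-hand side of Property 3. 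Case B (Property 4) is even easier: $I^{(M)}_{k_M}<L^\rmi+L_M$, so the $i=L_M$ term in the $Y$-product survives for every $a$, factors out as $\prod_a(u^{(M)}_a-qv^{(M)}_{L_M})$, and the residual expression has exactly the shifted labels $\bm J$. The initial condition (Property 5) is handled identically to the rational case: $k_M=1$, $\bm I^{(M)}_1=\{L^\rmi\}$ collapses the final factor to $\prod_{i=1}^{L_M-1}(u^{(M)}_1-v^{(M)}_i)$ and the fresh $m$-symbol is recognised as the enlarged $\widetilde{\bm v}^{(M-1)}$.

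The main obstacle, which is purely bookkeeping rather than conceptual, is tracking the trigonometric prefactors: every row contributes an extra $(1-q)u^{(p)}_{\sigma_p(a)}$, the $+1$ shifts become $q$-rescalings, and the Vandermonde ratios change from $(x-x-1)/(x-x)$ to $(qu-u)/(u-u)$. Care is needed when the $a=k_M$ column vanishes and one reorders the residual products over $\sigma_M(a)$ with $a<k_M$: the $q$ inside the numerators must be handled before one identifies $\sigma_M(a)=a$ after symmetrising, because the cancellation pattern between $(qu^{(M)}_{\sigma_M(a)}-u^{(M)}_{k_M})$ and $(u^{(M)}_{\sigma_M(a)}-u^{(M)}_{k_M})$ in $Z$ is not literally the same as in the rational case; rather, the denominator cancels with the $Y$-produced factor, leaving $(qu^{(M)}_a-u^{(M)}_{k_M})$, which matches the prefactor in Property 3 after reindexing. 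Once these trigonometric analogues of the rational cancellations are checked, the three propositions assemble into the theorem exactly as in section 2.
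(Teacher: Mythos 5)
Your proposal is correct and follows exactly the route the paper intends: the paper itself gives no separate proof of the trigonometric theorem, stating only that the strategy is identical to the rational case, i.e.\ one verifies that the extended trigonometric weight function \eqref{6-extended-wavefunction} satisfies the five properties of Proposition \ref{trigonometricKorepinlemma}, which characterize $\psi$ uniquely via the nested Izergin--Korepin induction. Your tracking of the trigonometric substitutions (the specialization $v^{(M)}_{L_M}=q^{-1}u^{(M)}_{k_M}$ forcing $\sigma_M(k_M)=k_M$, the cancellation of the $Z$-denominators against the $Y$-produced factors leaving $\prod_{j=1}^{k_M-1}(qu^{(M)}_j-u^{(M)}_{k_M})$, and the extra $(1-q)u$ prefactors) matches what the rational proof becomes under the dictionary between the two $R$-matrices, so nothing further is needed.
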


\section{Elliptic partition functions}
In this section, we discuss the elliptic case.
The strategy to determine it is largely the same as
the rational/trigonometric case.
However, there are some additional parameters, and we need some more explanation,
so we discuss this case in this separate section.
We first recall the elliptic theta functions and the elliptic $R$-matrix associated with the ellipitic quantum group
\cite{Felder,FV1,Ca,FeldSch,ABF,eightvertex,FIJKMY,Fr,Konno,
JKOS,DJKMO,JKMO}. 

We fix two complex nonzero numbers $\tau,\gamma$.
The odd  theta function
\begin{align}
[z]=-\sum_{j \in \mathbb{Z}+1/2} e^{i \pi j^2 \tau+2 \pi i j (z+1/2)},
\end{align}
satisfies $[-z]=-[z]$
and the quasi-periodicities
    \begin{equation} \label{theta-quasiperiodicity}
        [z+1] = -[z]; \qquad [z+\tau] = e^{-2\pi i z - \pi i \tau}[z].
    \end{equation}

Let $\chi$ be a character, which is a group homomorphism from
multiplicative group $\Gamma=\mathbb{Z}+\tau \mathbb{Z}$ to $\mathbb{C}^\times$.
For each character $\chi$ and a positive integer $n$, we define an $n$-dimensional space
$\Theta_n(\chi)$ which consists of holomorphic functions $\phi(y)$ on $\mathbb{C}$
such that
\[
\phi(y+1)=\chi(1) \phi(y); \qquad \phi(y+\tau)=\chi(\tau)e^{-2 \pi i n y-\pi i n \tau} \phi(y).
\] 
We call the elements of $\Theta_n(\chi)$ \emph{elliptic polynomials}.
The following is an elliptic version of Lagrange interpolation.

\begin{proposition} \cite{FeldSch}
Let $P(y)$ and $Q(y)$ be elliptic polynomials in $\Theta_n(\chi)$
such that $\chi(1)=(-1)^n$ and $\chi(\tau)=(-1)^n e^\alpha$.
If $P(y_j)=Q(y_j)$ for $j=1,\dots,n$ such that $y_j-y_k \not\in \Gamma$ $(j \neq k)$,
$\sum_{k} y_k-\alpha \not\in \Gamma$,
then $P(y) \equiv Q(y)$.
\end{proposition}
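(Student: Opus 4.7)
The plan is to consider the difference $R(y) := P(y) - Q(y)$, which again lies in $\Theta_n(\chi)$ and vanishes at each of the points $y_1,\dots,y_n$, and to show that the hypothesis $\sum_k y_k - \alpha \notin \Gamma$ forces $R \equiv 0$.

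The key auxiliary object is
$$
h(y) := \frac{R(y)}{\prod_{j=1}^n [y - y_j]}.
$$
Since the $y_j$ are pairwise inequivalent modulo $\Gamma$, the zeros of the denominator are simple and are cancelled by those of $R$ (which has at least a simple zero at each $y_j$), so $h$ extends to an entire function on $\mathbb{C}$. Using the quasi-periodicities (\ref{theta-quasiperiodicity}) of $[\cdot]$ together with the defining transformation laws of $R \in \Theta_n(\chi)$, a short computation gives $h(y+1) = h(y)$ and $h(y+\tau) = c \cdot h(y)$, where $c$ is an explicit constant depending only on $\alpha$ and $\sum_j y_j$.

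Being entire and $1$-periodic, $h$ admits a Fourier expansion $h(y) = \sum_{k \in \mathbb{Z}} a_k e^{2\pi i k y}$, and the $\tau$-quasi-periodicity forces $a_k (e^{2\pi i k \tau} - c) = 0$ for every $k$. Since $\operatorname{Im}\tau \neq 0$ implies $|e^{2\pi i \tau}| \neq 1$, at most one integer $k_0$ can satisfy $e^{2\pi i k_0 \tau} = c$, and a direct computation identifies the existence of such a $k_0$ with the condition $\sum_j y_j - \alpha \in \Gamma$. The hypothesis excludes this, so every $a_k$ vanishes; hence $h \equiv 0$ and therefore $R \equiv 0$, proving $P \equiv Q$.

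The main obstacle is the careful tracking of signs and factors of $2\pi i$ required to translate the explicit form of the quasi-periodicity constant $c$ into precisely the condition $\sum_k y_k - \alpha \in \Gamma$; this amounts to the standard identification of the sum of zeros of a nonzero element of $\Theta_n(\chi)$ with the datum $\alpha$ specifying the character, and is essentially a bookkeeping exercise once the normalization conventions are pinned down.
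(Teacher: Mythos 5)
The paper offers no proof of this proposition --- it is quoted from \cite{FeldSch} --- so there is nothing in-paper to compare against; your argument is the standard one for statements of this type and is essentially sound. Setting $R=P-Q\in\Theta_n(\chi)$ and dividing by $D(y)=\prod_{j=1}^n[y-y_j]$ is exactly the right move: $D$ lies in a degree-$n$ space of the same kind, so the ratio $h=R/D$ is $1$-periodic and picks up only a constant multiplier $c$ under $y\mapsto y+\tau$, and the Fourier/Laurent argument in the variable $e^{2\pi i y}$ correctly reduces everything to whether $c=e^{2\pi i k_0\tau}$ for some integer $k_0$. Two points deserve to be made explicit. First, the denominator $D$ has simple zeros on all of $\bigcup_j(y_j+\Gamma)$, not only at the $y_j$ themselves, so you should note that $R$ vanishes on each full translate $y_j+\Gamma$; this is immediate from the transformation laws defining $\Theta_n(\chi)$, but it is the reason $h$ is entire. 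Second, carrying out the computation with the conventions as printed gives $h(y+1)=h(y)$ and
\begin{equation*}
h(y+\tau)=(-1)^n\,e^{\alpha-2\pi i\sum_{j}y_j}\,h(y),
\end{equation*}
and identifying the existence of $k_0$ with $e^{2\pi i k_0\tau}=c$ as the condition $\sum_k y_k-\alpha\in\Gamma$ requires fixing how the multiplicative datum $e^{\alpha}$ in $\chi(\tau)=(-1)^n e^{\alpha}$ is related to the additive $\alpha$ taken modulo $\Gamma$ (there is a residual factor of $2\pi i$ and a half-integer shift to absorb). That normalization is left implicit in the statement as quoted from \cite{FeldSch}, so your closing remark that this is bookkeeping is fair, but in a self-contained write-up you should pin the convention down and verify the equivalence once, e.g.\ on the example $P(y)=[y-w]$, $n=1$, whose unique zero class is $w+\Gamma$.
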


Next, we introduce the dynamical $R$-matrix
of $E_{\tau,\gamma}(\mathfrak{gl}_{M+1})$ or $U_{q,p}(\mathfrak{gl}_{M+1})$
\cite{Felder,FV1,Ca,Fr,Konno,JKOS},
which is a function
$R(z,\lambda):\mathbb{C} \otimes \mathfrak{h}^* \to \mathrm{End}(V \otimes V)$
where $\mathfrak{h}$ and $\mathfrak{h}^*$ denote a Cartan subalgebra of $\mathfrak{gl}_{M+1}$ and its dual respectively.
The $(M+1)$-dimensional space $V$ spanned by basis $e_k$ ($k=1,\dots,M+1$)
corresponds to an $\mathfrak{h}$-module.
For $h=\mathrm{diag}(h^1,\dots,h^{M+1}) \in \mathfrak{h}$, we define $\mu_i \in \mathfrak{h}^*$ as $\mu_{i}(h)=h^i$. We have $V=\oplus_{i=1}^N V^{\mu_i}$ where $V^{\mu_i}=\mathbb{C} e_i$.

The dynamical $R$-matrix we use is
\begin{align}
R(z,w,\lambda)=&\sum_{i=1}^{M+1} [z-w-\gamma] E_{ii} \otimes E_{ii}
+\sum_{i \neq j} \alpha(z-w,\lambda_i-\lambda_j) E_{ii} \otimes E_{jj} \nonumber \\
&+\sum_{i \neq j} \beta(z-w,\lambda_i-\lambda_j) E_{ij} \otimes E_{ji},
\end{align}
where $\lambda_i=\lambda(E_{ii})$ ($i=1,\dots,M+1$) for $\lambda \in \mathfrak{h}^*$
are called the \emph{dynamical variables}, which were not present
in the rational/trigonometric cases discussed previously, 
and
\begin{align}
\alpha(z,\lambda):=\frac{[z][\lambda+\gamma]}{[\lambda]};
\qquad \beta(z,\lambda):=-\frac{[z+\lambda][\gamma]}{[\lambda]}.
\end{align}
See Figure \ref{dynamicalRmatrixfigure} for a graphical description
of the matrix elements of the dynamical $R$-matrix.
This description comes from the statistical physics origin of the dynamical $R$-matrix
as local Boltzmann weights of the face model
\cite{ABF,DJKMO,JKMO}.

\begin{figure*}[ht]
        \centering
        \begin{tabular}{cc}
            \includegraphics{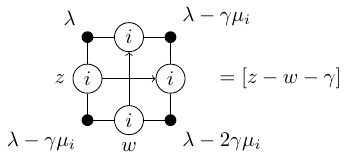}
            &
            \includegraphics{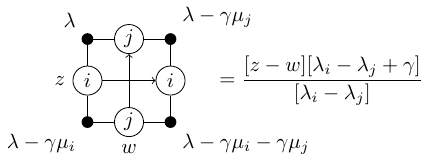}
        \end{tabular}

        \includegraphics{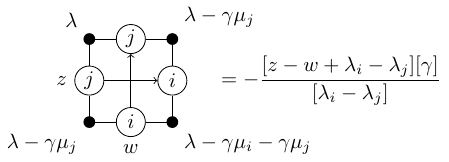}
\caption{The matrix elements of the dynamical $R$-matrix.
  Here $\lambda \in \mathfrak{h}^*$ is the dynamical parameter which can be viewed as $\lambda \in \C^n$ 
by the expansion $\lambda = \sum_{i=1}^n \lambda_i \mu_i$.
    The dynamical parameter associated to an $R$-matrix is the one that is in the upper left corner of the plaquet. 
    Then, the dynamical parameters of the remaining corners of the plaquet are deduced by subtracting $\gamma$ times the appropriate basis vector. 
}
\label{dynamicalRmatrixfigure}
    \end{figure*}

    The $R$ matrix satisfies the \emph{dynamical Yang-Baxter equation} acting on $V_1 \otimes V_2 \otimes V_3$
    \begin{multline} \label{dYBE}
        R_{23}(z_2-z_3,\lambda)\,R_{13}(z_1-z_3,\lambda-\gamma h_2)\,R_{12}(z_1-z_2,\lambda) \\ 
        = R_{12}(z_1-z_2,\lambda-\gamma h_3)\,R_{13}(z_1-z_3,\lambda-\gamma h_2)\,R_{23}(z_2-z_3,\lambda-\gamma h_1).
    \end{multline}
Here, the subscripts of $h$ refer to the space on which it acts.
For example,
\begin{align}
R_{12}(z_1-z_2,\lambda-\gamma h_3)(v_1 \otimes v_2 \otimes v_3)
=R(z_1-z_2,\lambda-\gamma \alpha)v_1 \otimes v_2 \otimes v_3,
\end{align}
if $v_3 \in V^\alpha$.
Historically, the dynamical Yang-Baxter equation corresponds to the star-triangle relation.
See Figure \ref{figuredynamicalyangbaxter} for a graphical description of \eqref{dYBE}.
    \begin{figure}[ht]
        \centering
        \includegraphics{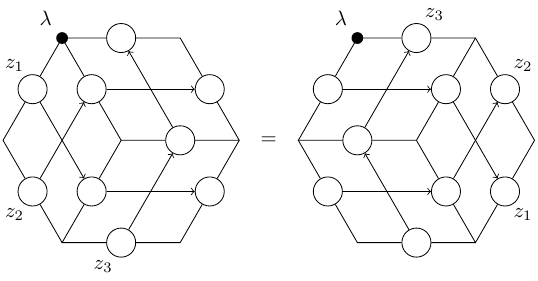}
        \caption{The dynamical Yang-Baxter equation \eqref{dYBE}
or equivalently the star-triangle relation.}
\label{figuredynamicalyangbaxter}
    \end{figure}

Next, we introduce elliptic partition functions.
We simplify the notation for elliptic weights of the dynamical $R$ matrix as Figure \ref{simplificationdynamicalRmatrix}.
\begin{figure}[ht]
\centering
    \includegraphics{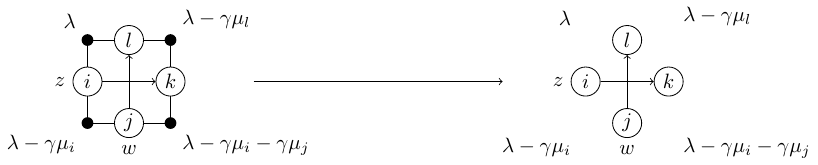}
\caption{Simplified drawing of the dynamical $R$-matrix.}
\label{simplificationdynamicalRmatrix}
\end{figure}
Then, the elliptic partition function is introduced pictorially in Figure \ref{figureellipticpartitionfunctions}, which we denote as
\begin{equation} \label{ellipticwavefunction}
        \psi\Big(\bm z^{(1)}, \dots,\bm z^{(M)} \Big| \left\{ \bm w^{(1)}_{\rmi}, \bm w^{(1)}_{\rmii} \right\} , \dots, \left\{ \bm w^{(M-1)}_{\rmi}, \bm w^{(M-1)}_{\rmii} \right\} , \bm w^{(M)} \Big|
        k_1, \dots, k_M \Big| \{L_1^{\rmi}, L_1^{\rmii} \}, \dots, \{L_{M-1}^{\rmi}, L_{M-1}^{\rmii} \}, L_M \Big| {\bm I} \Big| \lambda\Big).
\end{equation}
\begin{figure}[ht]
    \includegraphics[width=\textwidth]{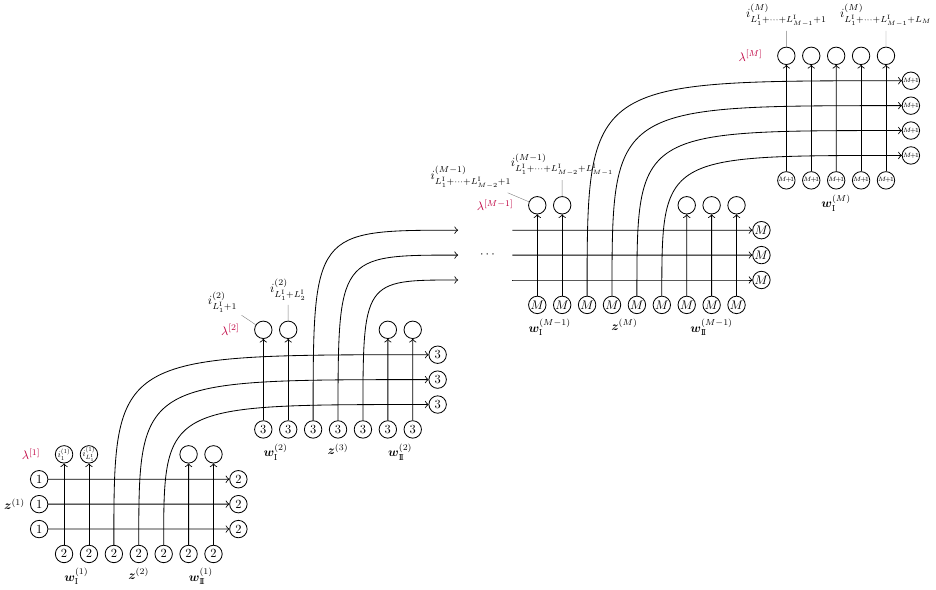}
\caption{The elliptic partition function.}
\label{figureellipticpartitionfunctions}
\end{figure}
Note that compared with the rational/trigonometric case, the dynamical $R$-matrix and hence
the partition functions also depend on $\lambda \in \mathfrak{h}^*$.

We introduce the notation
 \begin{align}
        \lambda^{[p]} := \lambda + \gamma \sum_{k=1}^{M-p} \sum_{j=1}^{L^{\rmi}_{M-k}} \mu_{i^{(M-k)}_{L^{\rmi}_1+\dots + L^{\rmi}_{M-k-1}+j}} \qquad  1 \leq p \leq M,
\label{shiftedlambda}
\end{align}
which will also be used for the elliptic multisymmetric functions later.
Note $\lambda^{[M]}=\lambda$.
In the partition functions, this is associated with the northwest corner in the top layer ($M$-th layer).
Note that a parameter associated with some corner changes if one crosses a line and move to another corner of the plaquet,
and the parameter associated with the northwest corner in the $j$-th layer (counted from bottom) becomes $\lambda^{[j]}$.

Finally, we introduce the notation $C^{(p)}(k,l)$ by
\begin{align}
       C^{(p)}(k,l):=|\{(a,j) | i^{(a)}_j=l,p \leq a \leq M , j \leq k \}|. \label{dynamicalsymbol}
\end{align}
The $C^{(p)}(k,l)$ notation has the meaning of counting the number of sites coloured by \col{l} until site number $k$ in the top $(M-p+1)$ layers.
See Figure \ref{figuresymbol}.

\begin{figure}[ht]
    \includegraphics{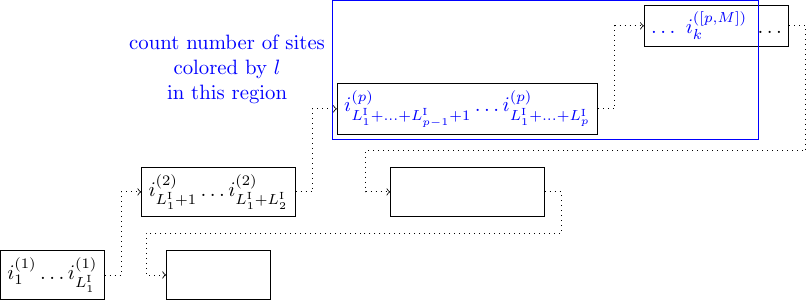}
\caption{The symbol $C^{(p)}(k,l)$ which
may be explained diagrammatically as follows. 
Among the sites on level $p$ or greater, but with site number no greater than $k$ (highlighted in blue), count the number of sites coloured by $l$. 
}
\label{figuresymbol}
\end{figure}

 \begin{proposition}
        Let $L^{\rmi} := L^{\rmi}_1 + \cdots + L^{\rmi}_{M-1} + L_M$.
        The functions $\psi$ satisfy the following properties:
        \begin{enumerate}
            \item \emph{Case A}: If $i_{L^{\rmi}}^{(M)} \neq M+1$,
$\psi$ are elliptic polynomials of $w^{(M)}_{L_M}$ in $\Theta_{k_M}(\chi)$ with the following quasiperiodicities
            \begin{align}
                \psi(w^{(M)}_{L_M}+1) &= (-1)^{k_M} \psi(w^{(M)}_{L_M}),
            \end{align}
            \begin{multline}
                \psi(w^{(M)}_{L_M}+\tau) = (-1)^{k_M} 
                \exp\Bigg(-2\pi i\bigg(k_M w^{(M)}_{L_M} - \sum_{i=1}^{k_M} z^{(M)}_i + \lambda_{i^{(M)}_{L^{\rmi} }} - \lambda_{M+1}
                \\
                \qquad  + \gamma\left(L_M-C^{(M)}(L^{\rmi},i^{(M)}_{L^{\rmi} } )\right)\bigg) - \pi i k_M \tau \Bigg)
                \psi(w^{(M)}_{L_M}).
            \end{multline}
            \item $\psi$ is symmetric with respect to $\bm z^{(M)}_{k_M}$.
            \item \emph{Case A}: If $i^{(M)}_{L^{\rmi}} \neq M+1$, 
            \begin{multline}
                \psi\Big(\bm z^{(1)}, \dots,\bm z^{(M)} \Big| \left\{ \bm w^{(1)}_{\rmi}, \bm w^{(1)}_{\rmii} \right\} , \dots, \left\{ \bm w^{(M-1)}_{\rmi}, \bm w^{(M-1)}_{\rmii} \right\} , \bm w^{(M)} \Big|
                \\
                k_1, \dots, k_M \Big| \{L_1^{\rmi}, L_1^{\rmii} \}, \dots, \{L_{M-1}^{\rmi}, L_{M-1}^{\rmii} \}, L_M \Big| {\bm I} \Big| \lambda\Big) \Bigg|_{w^{(M)}_{L_M}=z^{(M)}_{k_M}-\gamma}
                \\
                = \prod_{j=1}^{L_M-1} \left[z^{(M)}_{k_M}-w^{(M)}_j\right]
                \prod_{j=1}^{k_M-1} \left[z^{(M)}_j-z^{(M)}_{k_M}+\gamma\right]
                \frac{
                    [\gamma]\left[
                        \lambda_{M+1} - \lambda_{i^{(M)}_{L^{\rmi}}}+\gamma \left(
                        k_M-L_M+C^{(M)} \left(
                        L^{\rmi},i^{(M)}_{L^{\rmi}} \right)
                        \right)
                    \right]
                }{
                    \left[
                        \lambda_{i^{(M)}_{L^{\rmi}}} - \lambda_{M+1} +\gamma \left(1-C^{(M)} \left(
                        L^{\rmi},i^{(M)}_{L^{\rmi}} \right)
                        \right)
                    \right]}
                \\
                \times 
                \psi\Big(\bm z^{(1)}, \dots,\bm z^{(M)}  \setminus \{ z^{(M)}_{k_M} \} \Big| \left\{ \bm w^{(1)}_{\rmi}, \bm w^{(1)}_{\rmii} \right\} , \dots, \left\{ \bm w^{(M-1)}_{\rmi}, \{z^{(M)}_{k_M} \} \cup \bm w^{(M-1)}_{\rmii} \right\} , \bm w^{(M)} \setminus \{ w^{(M)}_{L_M} \} \Big|
                \\
                k_1, \dots,k_{M-1}, k_M-1 \Big| \{L_1^{\rmi}, L_1^{\rmii} \}, \dots, \{L_{M-1}^{\rmi}, L_{M-1}^{\rmii} \}, L_M-1 \Big| \bm J \Big| \lambda\Big),
            \end{multline}
where the $\bm J$ are related to the $\bm I$ by 
        \begin{gather*}
           {\bm J}^{(i)}_{k_i} ={\bm I}^{(i)}_{k_i} \qquad 1 \leq i \leq M-1; \\
           {\bm J}^{(M)}_{k_M-1} ={\bm I}^{(M)}_{k_M} \setminus \{L^{\rmi}\}; \\
            \widetilde{{\bm J}}^{(i)}_{k_i} = \widetilde{{\bm I}}^{(i)}_{k_i} \qquad 1 \leq i \leq M-1.
        \end{gather*}
            \item \emph{Case B}: If $i^{(M)}_{L^{\rmi}} = M+1$,
            \begin{multline}
                \psi\Big(\bm z^{(1)}, \dots,\bm z^{(M)} \Big| \left\{ \bm w^{(1)}_{\rmi}, \bm w^{(1)}_{\rmii} \right\} , \dots, \left\{ \bm w^{(M-1)}_{\rmi}, \bm w^{(M-1)}_{\rmii} \right\} , \bm w^{(M)} \Big|
                \\
                k_1, \dots, k_M \Big| \{L_1^{\rmi}, L_1^{\rmii} \}, \dots, \{L_{M-1}^{\rmi}, L_{M-1}^{\rmii} \}, L_M \Big|\bm I \Big| \lambda\Big)
                \\
                = 
                \prod_{j=1}^{k_M}\left[z_j - w^{(M)}_{L_M} - \gamma\right]
                \\
                \times
                \psi\Big(\bm z^{(1)}, \dots,\bm z^{(M)} \Big| \left\{ \bm w^{(1)}_{\rmi}, \bm w^{(1)}_{\rmii} \right\} , \dots, \left\{ \bm w^{(M-1)}_{\rmi}, \bm w^{(M-1)}_{\rmii} \right\} , \bm w^{(M)} \setminus \{w^{(M)}_{L_M}\} \Big|
                \\
                k_1, \dots, k_M \Big| \{L_1^{\rmi}, L_1^{\rmii} \}, \dots, \{L_{M-1}^{\rmi}, L_{M-1}^{\rmii} \}, L_M-1 \Big| \bm J \Big| \lambda\Big),
            \end{multline}
 where the $\bm J$ are related to the $\bm I$ by
        \begin{gather*}
            {\bm J}^{(i)}_{k_i} = \{j | j \in {\bm I}^{(i)}_{k_i}, j\leq L^{\rmi} \} \cup \{j-1 | j \in {\bm I}^{(i)}_{k_i}, j > L^{\rmi} \} \qquad 1 \leq i \leq M-1; \\
           {\bm J}^{(M)}_{k_M} ={\bm I}^{(M)}_{k_M}
           \\
            \widetilde{{\bm J}}^{(i)}_{k_i} = \widetilde{{\bm I}}^{(i)}_{k_i} \qquad 1 \leq i \leq M-1.
        \end{gather*}
            \item \emph{Initial condition}: If $k_M=1$ and $i^{(M)}_{L^{\rmi}} \neq M+1$,
            \begin{multline}
                \psi\Big(\bm z^{(1)}, \dots,\bm z^{(M)} \Big| \left\{ \bm w^{(1)}_{\rmi}, \bm w^{(1)}_{\rmii} \right\} , \dots, \left\{ \bm w^{(M-1)}_{\rmi}, \bm w^{(M-1)}_{\rmii} \right\} , \bm w^{(M)} \Big|
                \\
                k_1, \dots, k_M \Big| \{L_1^{\rmi}, L_1^{\rmii} \}, \dots, \{L_{M-1}^{\rmi}, L_{M-1}^{\rmii} \}, L_M \Big|\bm I \Big| \lambda\Big)
                \\
                =
                \frac{
                    [\gamma]\left[
                        z^{(M)}_1-w^{(M)}_{L_M}+\lambda_{M+1}-\lambda_{i^{(M)}_{L^{\rmi}}}-\gamma(L_M-1)
                        \right]
                }{
                    \left[
                        \lambda_{i^{(M)}_{L^{\rmi}}}-\lambda_{M+1}
                    \right]
                } \prod_{j=1}^{L_M-1}\left[z^{(M)}_1-w^{(M)}_j\right]
                \\
                \times 
                \psi\Big(\bm z^{(1)}, \dots,\bm z^{(M)} \Big| \left\{ \bm w^{(1)}_{\rmi}, \bm w^{(1)}_{\rmii} \right\} , \dots, \left\{ \bm w^{(M-2)}_{\rmi}, \bm w^{(M-2)}_{\rmii} \right\} , \bm w^{(M-1)}_{\rmi} \cup \{z^{(M)}_1\} \cup \bm w^{(M-1)}_{\rmii} \Big|
                \\
                k_1, \dots, k_M \Big| \{L_1^{\rmi}, L_1^{\rmii} \}, \dots, \{L_{M-2}^{\rmi}, L_{M-2}^{\rmii} \}, L^{\rmi}_{M-1} + L^{\rmii}_{M-1} + 1 \Big| \bm J \Big| \lambda^{[M-1]}\Big),
            \end{multline}
where the $\bm J$ are related to the $\bm I$ by
    \begin{gather*}
        {{\bm J}}_{k_i}^{(i)} =\{ j| j \in {{\bm I}}_{k_i}^{(i)}, j<L^{\rmi}   \} \cup \{ j-L_M+1| j \in {{\bm I}}_{k_i}^{(i)}, j \geq L^{\rmi} \}  \qquad 1 \leq i \leq M-1;
\\
        \widetilde{{\bm J}}_{k_i}^{(i)} = \widetilde{{\bm I}}_{k_i}^{(i)} \qquad 1 \leq i \leq M-1.
    \end{gather*}
        \end{enumerate}
    \end{proposition}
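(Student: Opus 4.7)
My plan is to follow the same overall strategy as in the rational case (Proposition \ref{rationalKorepinlemma}), verifying the five properties in turn, but with extra care for the dynamical parameter shifts and the elliptic quasi-periodicities. Properties 2, 4, and 5 will admit close analogues of the rational proofs, while Properties 1 and 3 introduce the genuinely new elliptic/dynamical content.

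For Property 2, I would run the standard train argument using the dynamical Yang-Baxter equation \eqref{dYBE}; the dynamical shifts $\gamma h_i$ track correctly along the train because the external spaces are held fixed and the internal horizontal auxiliary spaces of the top layer share the dynamical parameter $\lambda^{[M]}=\lambda$. For Property 4, when $i^{(M)}_{L^{\rmi}}=M+1$ the rightmost column of the top layer is frozen into a unique configuration using only diagonal color-preserving weights $[z-w-\gamma]$, yielding the product $\prod_{j=1}^{k_M}[z_j - w^{(M)}_{L_M}-\gamma]$ with no dynamical prefactor. For Property 5, $k_M=1$ forces the entire top layer into the configuration of Figure \ref{initialconditionfigure}: one crossing vertex of type $\beta$ combined with $L_M-1$ theta-product weights to the left. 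Evaluation of the $\beta$-weight produces the dynamical factor $[\gamma]/[\lambda_{i^{(M)}_{L^{\rmi}}}-\lambda_{M+1}]$ times a $[z-w+\lambda-\mu]$ factor in which the explicit shift $-\gamma(L_M-1)$ appears because the crossing vertex sits $L_M-1$ color-preserving $\gamma$-shifts away from the northwest corner of the top layer; stripping the frozen layer then exposes a $\mathfrak{gl}_M$ partition function whose effective dynamical parameter is $\lambda^{[M-1]}$ by the definition \eqref{shiftedlambda}.

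For Property 1, I would insert a resolution of the identity between the rightmost two columns of the top layer and enumerate the finitely many local configurations allowed by $i^{(M)}_{L^{\rmi}}\neq M+1$. Each configuration contributes a product of theta weights whose $w^{(M)}_{L_M}$-dependence has degree at most $k_M$, so the space $\Theta_{k_M}(\chi)$ claim reduces to the matching of the quasi-periodicities. The $+1$ shift is immediate from \eqref{theta-quasiperiodicity}. The $+\tau$ shift is more delicate: the $k_M$ weights of the top row crossed by the $w^{(M)}_{L_M}$-line yield the naive exponential $\exp(-2\pi i(k_M w^{(M)}_{L_M}-\sum_i z^{(M)}_i))e^{-\pi i k_M\tau}$, and the single non-diagonal $\beta$-weight contributes the additional exponential containing $\lambda_{i^{(M)}_{L^{\rmi}}}-\lambda_{M+1}$ shifted by $\gamma(L_M - C^{(M)}(L^{\rmi},i^{(M)}_{L^{\rmi}}))$. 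The counting function $C^{(M)}$ from \eqref{dynamicalsymbol} is exactly what records how many $\gamma$-shifts in the color $i^{(M)}_{L^{\rmi}}$ direction have been absorbed between the northwest corner $\lambda$ and the crossing vertex, so the exponent matches the stated formula.

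For Property 3, specializing $w^{(M)}_{L_M}=z^{(M)}_{k_M}-\gamma$ makes the diagonal theta weight $[z^{(M)}_{k_M}-w^{(M)}_{L_M}-\gamma]$ vanish on the rightmost crossing, forcing the top row into a frozen pattern analogous to Figure \ref{frozenillustrationone}. Evaluating the frozen diagonal weights reproduces the theta-products $\prod_{j=1}^{L_M-1}[z^{(M)}_{k_M}-w^{(M)}_j]\prod_{j=1}^{k_M-1}[z^{(M)}_j-z^{(M)}_{k_M}+\gamma]$ exactly as in the rational case, while the single surviving $\beta$-vertex contributes the dynamical prefactor $[\gamma][\lambda_{M+1}-\lambda_{i^{(M)}_{L^{\rmi}}}+\gamma(k_M-L_M+C^{(M)}(L^{\rmi},i^{(M)}_{L^{\rmi}}))]/[\lambda_{i^{(M)}_{L^{\rmi}}}-\lambda_{M+1}+\gamma(1-C^{(M)}(L^{\rmi},i^{(M)}_{L^{\rmi}}))]$, where the shifts by $C^{(M)}$ again encode the accumulated $\gamma$-movement of $\lambda$ from the northwest corner to the crossing vertex. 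The hard part throughout will be this dynamical bookkeeping: reconciling the $\gamma$-shifts of $\lambda$ picked up along the frozen paths with the exponents expressed via $C^{(M)}$, and confirming that the residual partition function, interpreted at the unchanged dynamical parameter $\lambda$, correctly matches the partition function obtained by deleting the frozen sites, with the inserted $z^{(M)}_{k_M}$ taking its proper place in $\bm w^{(M-1)}_{\rmii}$.
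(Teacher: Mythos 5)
Your overall strategy is the same as the paper's: Property 2 by the train argument with the dynamical Yang--Baxter equation, Properties 3--5 by freezing arguments run in parallel with the rational case, and Property 1 by isolating the rightmost column of the top layer and reading off the quasi-periodicities of the product of $k_M$ theta factors it carries (the paper writes this column explicitly as a function $h_l(w^{(M)}_{L_M})$ depending on the position $l$ of the colour-$i^{(M)}_{L^{\rmi}}$ auxiliary line, checks that its $\tau$-quasi-periodicity is independent of $l$, and uses $C^{(M)}(L^{\rmi},M+1)=L_M-k_M$ exactly as you indicate).

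The one place where your description, taken literally, would fail is the attribution of the dynamical prefactor in Properties 3 and 5 to ``the single surviving $\beta$-vertex,'' with the frozen weights giving the theta-products ``exactly as in the rational case.'' Neither half survives scrutiny: the frozen colour-preserving plaquettes carry $\alpha$-type weights $\alpha(z,\lambda)=[z][\lambda+\gamma]/[\lambda]$, so each contributes a dynamical ratio and not merely a theta factor; and a single $\beta$-weight $-[z+\lambda][\gamma]/[\lambda]$, evaluated at spectral argument $\gamma$, produces a ratio whose numerator and denominator arguments differ by exactly one $\gamma$, whereas in the stated formula (written in terms of the shifted parameter $\Lambda$ of the paper's proof) they differ by $(L_M+k_M-1)\gamma$ beyond that. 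In the paper's computation the discrepancy is precisely the residue of two telescoping products: the $k_M-1$ frozen $\alpha$-weights in the rightmost column shift the numerator by $\gamma(k_M-1)$ and the $L_M-1$ frozen $\alpha$-weights in the top row shift the denominator by $\gamma(L_M-1)$, and only after these cancellations does one obtain $[\gamma][\Lambda_{M+1}-\Lambda_{i}+\gamma]/[\Lambda_{M+1}-\Lambda_{i}-(k_M-L_M)\gamma]$, which translates into the $C^{(M)}$-form of the statement. You do flag ``dynamical bookkeeping'' as the hard part, but the bookkeeping is a telescoping over every frozen plaquette rather than a local evaluation at one vertex; as written, your Step for Property 3 would not reproduce the claimed prefactor. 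The analogous correction applies to the initial condition, where the $L_M-1$ frozen $\alpha$-weights of the top row supply the $-\gamma(L_M-1)$ shift you instead ascribe to the position of the crossing vertex.
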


\begin{proof}
   \emph{Property 1:} The quasiperiodicity of the partition function is proved as follows. 
    The dependence of $\psi$ on $w_{L_M}^{(M)}$ is contained in the rightmost column of the partition function, and the color assigned to this quantum space is $i := i^{(M)}_{L^\rmi} \neq M+1$. 
    By the ice rule, there must be exactly one auxiliary space which is assigned color $i$; let us denote by $l$ the label assigned to this space. 
    Therefore, the rightmost column assumes the form in Figure~\ref{ell-wavefunction-col}, where $\Lambda := \lambda - \gamma \sum_{j=1}^{M+1} C^{(M)}( L^{\rmi}, j) \mu_j + \gamma \mu_i = \sum_{j=1}^{M+1} \Lambda_j \mu_j$, indicating the dynamical variable associated to the upper left of this column. 
    From these factors, we collect the factors which depend on $w^{(M)}_{L_M}$ and denote their product as $h_l(w^{(M)}_{L_M})$, so
    \begin{multline}
        h_l(w^{(M)}_{L_M}) := 
        \prod_{j=1}^{l-1}
        [z_j^{(M)}-w^{(M)}_{L_M}]
        \prod_{j=l+1}^{k_M}
        [z_j^{(M)}-w^{(M)}_{L_M}-\gamma]
        \times \\ \times
        \left[z_j^{(M)}-w^{(M)}_{L_M}+\lambda_{M+1}-\lambda_{i^{(M)}_{L^\rmi}} + \gamma\left(C^{(M)}(L^{\rmi},i^{(M)}_{L^\rmi})-C^{(M)}(L^{\rmi},M+1)-l\right)\right].
    \end{multline}
    From the periodicity of the odd theta function \eqref{theta-quasiperiodicity}, we see immediately that 
    \[
        h_l(w^{(M)}_{L_M} + 1) = (-1)^{k_M}h_l(w^{(M)}_{L_M}).
    \]
    Further, the second periodicity of the odd theta function gives
    \begin{multline}
        h_l(w^{(M)}_{L_M} + \tau) = (-1)^{k_M} \exp\bigg( 
            -2\pi i \bigg(k_M w^{(M)}_{L_M} - \sum_{i=1}^{k_M}z_i +\lambda_{i^{(M)}_{L^\rmi}} - \lambda_{M+1} + \\ \gamma\left(k_M + C^{(M)}(L^\rmi,M+1)-C^{(M)}(L^\rmi,i^{(M)}_{L^\rmi}) \right)\bigg) - \pi i k_M\tau 
        \bigg)h_l(w^{(M)}_{L_M}).
    \end{multline}
    Using $C^{(M)}(L^\rmi,M+1) = L_M - k_M$, we obtain the quasiperiodicity of $h_l(w^{(M)}_{L_M})$. 
    Then, since all dependence of $\psi$ on $w^{(M)}_{L_M}$ is contained in $h_l(w^{(M)}_{L_M})$, this gives the quasiperiodicity of $\psi$.

    \begin{figure}
        \centering
        \includegraphics{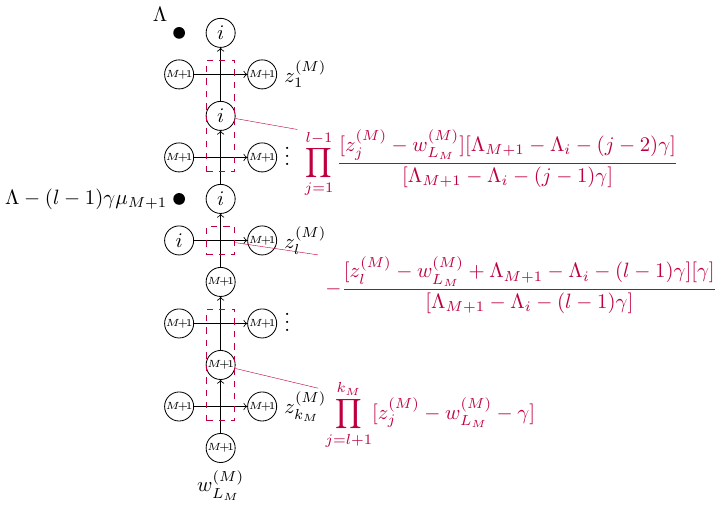}
        \caption{The evaluation of the rightmost column of the partition function $\psi$, in Case A.
        Here $\Lambda := \lambda - \gamma \sum_{j=1}^{M+1} C^{(M)}( L^{\rmi}, j) \mu_j + \gamma \mu_i$ and $i=i^{(M)}_{L^\rmi}$. }
        \label{ell-wavefunction-col}
    \end{figure}

    \emph{Property 2:} the proof runs parallel with the rational/trigonometric case.

    \emph{Property 3:} 
    \begin{figure}
        \label{ell-glM_caseA}
        \includegraphics[width=\textwidth]{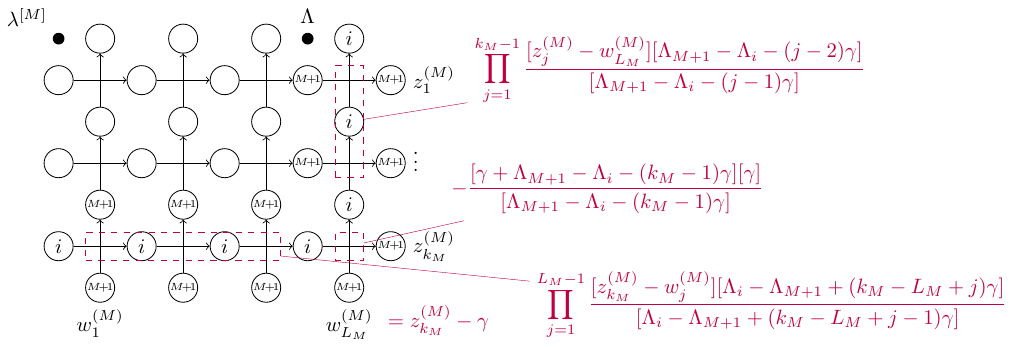}
        \caption{Diagram describing the evaluation of the elliptic partition function for Case A.}
    \end{figure}
    Recall the definitions of $\Lambda$ and $i$ from the proof of Property 1 above. 

    From Figure~\ref{ell-glM_caseA}, we obtain three factors. 
    The first is given by 
    \[
        \prod_{j=1}^{k_M-1} \frac{[z^{(M)}_j-z^{(M)}_{k_M}+\gamma][\Lambda_{M+1} - \Lambda_i - (j-2) \gamma]}{[\Lambda_{M+1} - \Lambda_i - (j-1) \gamma]}.
    \]
    The factors of this product cancel with one another to become
    \[
        \left(\prod_{j=1}^{k_M-1} [z^{(M)}_j-z^{(M)}_{k_M}+\gamma]\right) 
        \frac{[\Lambda_{M+1} - \Lambda_i + \gamma]}{[\Lambda_{M+1} - \Lambda_i - (k_M-2) \gamma]}.
    \]
    The second factor is 
    \[
        -\frac{[\Lambda_{M+1} - \Lambda_i - (k_M-2)\gamma][\gamma]}{[\Lambda_{M+1} - \Lambda_i - (k_M-1)\gamma]}.
    \]
    The third factor is 
    \[
        \prod_{j=1}^{L_M-1}\frac{[z_{k_M}^{(M)}-w_{j}^{(M)}][\Lambda_i-\Lambda_{M+1}+(k_M-L_M+j)\gamma]}{[\Lambda_i-\Lambda_{M+1}+(k_M-L_M+j-1)\gamma]}.
    \]
    After many cancellations, this becomes
    \[
        \left(\prod_{j=1}^{L_M-1}[z_{k_M}^{(M)}-w_{j}^{(M)}]\right)\frac{[\Lambda_{M+1}-\Lambda_i-(k_M-1)\gamma]}{[\Lambda_{M+1}-\Lambda_i-(k_M-L_M)\gamma]},
    \]
    where we have also used $[-z] = -[z]$.
    Combining the three factors gives 
    \begin{multline*}
        \left(\prod_{j=1}^{k_M-1} [z^{(M)}_j-z^{(M)}_{k_M}+\gamma]\right) 
        \frac{[\Lambda_{M+1} - \Lambda_i + \gamma]}{\cancel{[\Lambda_{M+1} - \Lambda_i - (k_M-2) \gamma]}} 
        \times 
        -\frac{\cancel{[\Lambda_{M+1} - \Lambda_i - (k_M-2)\gamma]}[\gamma]}{\cancel{[\Lambda_{M+1} - \Lambda_i - (k_M-1)\gamma]}}
        \\
        \times 
        \left(\prod_{j=1}^{L_M-1}[z_{k_M}^{(M)}-w_{j}^{(M)}]\right)\frac{\cancel{[\Lambda_{M+1}-\Lambda_i-(k_M-1)\gamma]}}{[\Lambda_{M+1}-\Lambda_i-(k_M-L_M)\gamma]},
    \end{multline*}
    which becomes 
    \begin{equation*}
        -\left(\prod_{j=1}^{k_M-1} [z^{(M)}_j-z^{(M)}_{k_M}+\gamma]\right) 
        \left(\prod_{j=1}^{L_M-1}[z_{k_M}^{(M)}-w_{j}^{(M)}]\right)
        \frac{[\Lambda_{M+1} - \Lambda_i + \gamma][\gamma]}{[\Lambda_{M+1}-\Lambda_i-(k_M-L_M)\gamma]}.
    \end{equation*}
    Finally, from the definition of $\Lambda$, we have 
    \[
        \Lambda_{M+1} = \lambda_{M+1} - \gamma C^{(M)}(L^\rmi,M+1) ;\qquad 
        \Lambda_{i} = \lambda_{i} - \gamma C^{(M)}(L^\rmi,i) +\gamma.
    \]
    This, with $C^{(M)}(L^\rmi,M+1)=L_M-k_M$, proves property 3.
    
    \emph{Property 4 and 5:} The proof runs parallel to the rational/trigonometric case. 
\end{proof}

Next, we introduce the elliptic multisymmetric functions.

\begin{definition}
We define the extended elliptic weight functions as
\begin{multline} \label{5-extended-wavefunction}
        \psi\Big(\bm z^{(1)}, \dots,\bm z^{(M)} \Big| \left\{ \bm w^{(1)}_{\rmi}, \bm w^{(1)}_{\rmii} \right\} , \dots, \left\{ \bm w^{(M-1)}_{\rmi}, \bm w^{(M-1)}_{\rmii} \right\} , \bm w^{(M)} \Big|
        \\
        k_1, \dots, k_M \Big| \{L_1^{\rmi}, L_1^{\rmii} \}, \dots, \{L_{M-1}^{\rmi}, L_{M-1}^{\rmii} \}, L_M \Big| \bm I \Big| \lambda\Big)
        \\
        = [\gamma]^{k_1 + \dots + k_M} \sum_{\sigma_1 \in S_{k_1}} \cdots \sum_{\sigma_M \in S_{k_M}}
        \prod_{p=1}^{M-1} \Bigg\{
            \prod_{a=1}^{k_p} \Bigg(
                \prod_{i=1}^{\widetilde{I}_{a}^{(p)}-1-L_1^{\rmi}-\cdots -L_{p-1}^{\rmi}} 
                    \left[z^{(p)}_{\sigma_p(a)}-m_i^{L_p^{\rmi},k_{p+1},L_p^{\rmii}} \left(\bm w_{\rmi}^{(p)}\middle| \bm z^{(p+1)}_{\sigma_{p+1}} \middle| \bm w^{(p)}_{\rmii}\right) \right]
                \\
                \times 
                \frac{\left[z^{(p)}_{\sigma_p(a)}-m^{L^{\rmi}_p,k_{p+1},L^{\rmii}_p}_{\widetilde{I}^{(p)}_a-L^{\rmi}_1-\dots-L^{\rmi}_{p-1}}(w^{(p)}_{\rmi}|z^{(p+1)}_{\sigma_{p+1}}|w^{(p)}_{\rmii})
                +\lambda^{[p]}_{p+1}-\lambda^{[p]}_{i^{([p,M])}_{I^{(p)}_a}}+\gamma\left( C^{(p)}(I^{(p)}_a,i^{([p,M])}_{I^{(p)}_a})-C^{(p)}(I^{(p)}_a,p+1)-1\right)\right]}{
                    [\lambda^{[p]}_{i^{([p,M])}_{I^{(p)}_a}}-\lambda^{[p]}_{p+1}+\gamma(1-C^{(p)}(I^{(p)}_a,i^{([p,M])}_{I^{(p)}_a}))]
                }
                \\
                \times 
                \prod_{i=\widetilde{I}_{a}^{(p)}+1-L_1^{\rmi}-\cdots -L_{p-1}^{\rmi}}^{L^{\rmi}_p+L^{\rmii}_p + k_{p+1}}
                    \left[z^{(p)}_{\sigma_p(a)}-m_i^{L_p^{\rmi},k_{p+1},L_p^{\rmii}} \left(\bm w_{\rmi}^{(p)}\middle| \bm z^{(p+1)}_{\sigma_{p+1}} \middle| \bm w^{(p)}_{\rmii}\right)-\gamma \right]
            \Bigg)
            \prod_{a<b}^{k_p} 
                \frac{[z^{(p)}_{\sigma_p(a)}-z^{(p)}_{\sigma_p(b)}+\gamma]}{[z^{(p)}_{\sigma_p(a)}-z^{(p)}_{\sigma_p(b)}]}
        \Bigg\}
        \\
        \times 
        \prod_{a=1}^{k_M} \Bigg(
            \prod_{i=1}^{I_a^{(M)}-1-L_1^{\rmi}-\cdots - L_{M-1}^{\rmi}} 
                \left[z_{\sigma_M(a)}^{(M)}-w_i^{(M)}\right]
            \prod_{i=I_a^{(M)}+1-L_1^{\rmi}-\cdots - L_{M-1}^{\rmi}}^{L_M} 
                \left[z_{\sigma_M(a)}^{(M)}-w_i^{(M)}-\gamma \right]
        \\
        \times 
        \frac{\left[z^{(M)}_{\sigma_M(a)}-w^{(M)}_{I^{(M)}_a-L^{\rmi}_1-\dots-L^{\rmi}_{M-1}}
        +\lambda^{[M]}_{M+1}-\lambda^{[M]}_{i^{(M)}_{I^{(M)}_a}}+\gamma\left( C^{(M)}(I^{(M)}_a,i^{(M)}_{I^{(M)}_a})-C^{(M)}(I^{(M)}_a,M+1)-1\right)\right]}{
            [\lambda^{[M]}_{i^{(M)}_{I^{(M)}_a}}-\lambda^{[M]}_{M+1}+\gamma(1-C^{(M)}(I^{(M)}_a,i^{(M)}_{I^{(M)}_a}))]
        }
        \Bigg)
        \\ \times 
        \prod_{a<b}^{k_M} 
            \frac{[z^{(M)}_{\sigma_M(a)}-z^{(M)}_{\sigma_M(b)}+\gamma]}{[z^{(M)}_{\sigma_M(a)}-z^{(M)}_{\sigma_M(b)}]},
    \end{multline}
    where
    \begin{equation*}
        m_i^{L_p^{\rmi},k_{p+1},L_p^{\rmii}} \left(\bm w_{\rmi}^{(p)}\middle| \bm z^{(p+1)} \middle| \bm w^{(p)}_{\rmii}\right) = 
        \begin{cases}
            w^{(p)}_{\rmi,i} & 1 \leq i \leq L_p^{\rmi} \\
            z^{(p+1)}_{i-L_p^{\rmi}} & L_p^{\rmi}+1 \leq i \leq L_p^{\rmi}+k_{p+1} \\
            w^{(p)}_{\rmii,i-L_p^{\rmi}-k_{p+1}} & L_p^{\rmi}+k_{p+1}+1 \leq i \leq L_p^{\rmi}+k_{p+1}+L_p^{\rmii}
        \end{cases},
    \end{equation*}
 and $C^{(p)}(k,l)$ is as defined in \eqref{dynamicalsymbol}. 
The symbol $i_k^{([p,M])}$ means $i_k^{(j)}$ for some $j \in \{p,p+1,\dots,M \}$.
\end{definition}

\begin{theorem}
We have
\begin{multline}
        \psi\Big(\bm z^{(1)}, \dots,\bm z^{(M)} \Big| \left\{ \bm w^{(1)}_{\rmi}, \bm w^{(1)}_{\rmii} \right\} , \dots, \left\{ \bm w^{(M-1)}_{\rmi}, \bm w^{(M-1)}_{\rmii} \right\} , \bm w^{(M)} \Big|
        \\
        k_1, \dots, k_M \Big| \{L_1^{\rmi}, L_1^{\rmii} \}, \dots, \{L_{M-1}^{\rmi}, L_{M-1}^{\rmii} \}, L_M \Big| \bm I \Big| \lambda\Big)
        \\
        =  W\Big(\bm z^{(1)}, \dots,\bm z^{(M)} \Big| \left\{ \bm w^{(1)}_{\rmi}, \bm w^{(1)}_{\rmii} \right\} , \dots, \left\{ \bm w^{(M-1)}_{\rmi}, \bm w^{(M-1)}_{\rmii} \right\} , \bm w^{(M)} \Big|
        \\
        k_1, \dots, k_M \Big| \{L_1^{\rmi}, L_1^{\rmii} \}, \dots, \{L_{M-1}^{\rmi}, L_{M-1}^{\rmii} \}, L_M \Big| \bm I \Big| \lambda\Big).
\end{multline}
\end{theorem}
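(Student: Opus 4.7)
My plan follows the exact template that the rational and trigonometric sections establish: show that the explicit function $W$ defined in \eqref{5-extended-wavefunction} satisfies all five characterizing properties of the preceding Korepin-type proposition for elliptic $\psi$, and invoke uniqueness (which comes from the elliptic Lagrange interpolation of Felder-Schorr together with the nested induction on $(M,L_M)$ outlined after Proposition \ref{rationalKorepinlemma}). The induction base and the step-down mechanism are the same as in the rational/trigonometric cases; only the weights to be tracked are theta-function valued and carry dynamical $\lambda$-prefactors.

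For Properties 1 and 2, I would check these directly from \eqref{5-extended-wavefunction}. Property 1 (quasiperiodicity of $w^{(M)}_{L_M}$ in $\Theta_{k_M}(\chi)$) follows from the product $\prod_a[z^{(M)}_{\sigma_M(a)}-w^{(M)}_i-\gamma]$ in the level-$M$ factor together with the dynamical $\lambda^{[M]}$-ratio, using the quasiperiodicities \eqref{theta-quasiperiodicity} and the identity $C^{(M)}(L^\rmi,M+1)=L_M-k_M$ exactly as done for $h_l$ in the proof of Property 1 of the Korepin-type proposition. Property 2 is immediate because the sum over $S_{k_M}$ symmetrizes the entire level-$M$ factor in $\bm z^{(M)}_{\sigma_M}$.

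The heart of the proof is Property 3 (Case A recursion), and I would follow the two-step strategy of the rational proof. \emph{Step 1, final factor:} Isolating the level-$M$ ``final factor'' and specializing to $w^{(M)}_{L_M}=z^{(M)}_{k_M}-\gamma$, the product $\prod_i[z^{(M)}_{\sigma_M(a)}-w^{(M)}_i-\gamma]$ for $a=k_M$ (where $I^{(M)}_{k_M}=L^\rmi$, so this product runs over $i=1,\dots,L_M$) produces a vanishing factor $[z^{(M)}_{\sigma_M(k_M)}-z^{(M)}_{k_M}]$ unless $\sigma_M(k_M)=k_M$; as in the rational case this forces $\sigma_M(k_M)=k_M$ in every surviving term. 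The two theta products and the Vandermonde factor then reorganize precisely as in the rational proof, while the new ingredient — the dynamical ratio in the $a=k_M$ slot — collapses (using $C^{(M)}(L^\rmi,M+1)=L_M-k_M$ and $\lambda^{[M]}=\lambda$) to the prefactor $[\gamma]\bigl[\lambda_{M+1}-\lambda_{i^{(M)}_{L^\rmi}}+\gamma(k_M-L_M+C^{(M)}(L^\rmi,i^{(M)}_{L^\rmi}))\bigr]/\bigl[\lambda_{i^{(M)}_{L^\rmi}}-\lambda_{M+1}+\gamma(1-C^{(M)}(L^\rmi,i^{(M)}_{L^\rmi}))\bigr]$ required by the proposition. \emph{Step 2, second-to-last factor:} Using $\sigma_M(k_M)=k_M$ from Step 1, the $m$-symbol at level $M-1$ and position $L^\rmi_{M-1}+k_M$ equals $z^{(M)}_{k_M}$, which then can be absorbed into $\{z^{(M)}_{k_M}\}\cup\bm w^{(M-1)}_\rmii$ exactly as in the rational proof; the dynamical ratios at level $M-1$ are unchanged because the map $\bm I \to \bm J$ preserves $\widetilde{\bm I}^{(i)}_{k_i}$ and does not change $i^{([p,M])}_{I^{(p)}_a}$ for $p\le M-1$, and the shifts $\lambda^{[p]}$ for $p\le M-1$ depend only on ``left'' positions, which are unaffected.

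Properties 4 and 5 run in parallel with their rational analogues. For Case B, the middle product contains the factor $[z^{(M)}_{\sigma_M(a)}-w^{(M)}_{L_M}-\gamma]$ for every $a$ (since $I^{(M)}_{k_M}<L^\rmi$), and the dynamical ratios are unchanged, yielding the prefactor $\prod_{j=1}^{k_M}[z_j-w^{(M)}_{L_M}-\gamma]$. For the initial condition $k_M=1$, the level-$M$ factor reduces to $[\gamma]\prod_{j=1}^{L_M-1}[z^{(M)}_1-w^{(M)}_j]$ times the dynamical ratio, and when $z^{(M)}_1$ is inserted into the level-$(M-1)$ middle slot of $m^{L^\rmi_{M-1},1,L^\rmi_{M-1}}_i$, the entire level-$(M-1)$ $m$-symbol becomes that of a merged quantum space $\bm w^{(M-1)}_\rmi\cup\{z^{(M)}_1\}\cup\bm w^{(M-1)}_\rmii$; simultaneously, the $\lambda^{[M]}$ at level $M-1$ is replaced by $\lambda^{[M-1]}$ by definition \eqref{shiftedlambda}, exactly reproducing the right-hand side of Property 5. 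The main obstacle in all three dynamical recursions is bookkeeping of the $\lambda^{[p]}$-shifts and the $C^{(p)}(\cdot,\cdot)$ counts under $\bm I\mapsto\bm J$; once one verifies that these invariants are preserved (essentially because $\lambda^{[p]}$ depends only on left indices and the relabeling $\bm I\to\bm J$ preserves $\widetilde{\bm I}^{(i)}_{k_i}$), the theta-function computations mirror the rational ones.
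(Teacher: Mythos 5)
Your proposal is correct and follows essentially the same route as the paper: the paper's own proof just declares the argument identical to the rational/trigonometric case and records the invariance of the symbols $C^{(p)}(I_a^{(p)},\cdot)$, $i^{([p,M])}_{I_a^{(p)}}$ and $\lambda^{[p]}$ under the relabelings $\bm I\mapsto\bm J$ in Properties 3, 4 and 5 --- exactly the bookkeeping you identify as the main new obstacle. One small slip to fix in a full write-up: in Step 1 of Property 3 the vanishing factors that force $\sigma_M(k_M)=k_M$ come from the $a<k_M$ terms $[z^{(M)}_{\sigma_M(a)}-w^{(M)}_{L_M}-\gamma]=[z^{(M)}_{\sigma_M(a)}-z^{(M)}_{k_M}]$, not from the $a=k_M$ slot (whose second product is empty since $I^{(M)}_{k_M}=L^{\rmi}$), but this does not affect the argument.
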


 \begin{proof}
The proof is the same as the rational/trigonometric case.
Note that since there are dynamical parameters, one needs to take into account the relations between the symbols \eqref{dynamicalsymbol}
associated with $\bm I$ and $\bm J$.
        For Property 3, we use the following relations
        \begin{align*}
            &i^{([p,M])}_{I^{(p)}_a} =j^{([p,M])}_{J^{(p)}_a}  \qquad p=1, \dots, M-1
            \\
            &\qquad \Rightarrow \lambda^{[p]}_{i^{([p,M])}_{I^{(p)}_a}} = \lambda^{[p]}_{j^{([p,M])}_{J^{(p)}_a}} 
            \\
            &\qquad C^{(p)}(I^{(p)}_a, i^{([p,M])}_{I^{(p)}_a}) = C^{(p)}(J^{(p)}_a, j^{([p,M])}_{J^{(p)}_a}).
        \end{align*}
 For Property 4,
        we must take into account the following relations
        \begin{align*}
            \lambda^{[p]}_{i^{([p,M])}_{I^{(p)}_a}} &= \lambda^{[p]}_{j^{([p,M])}_{J^{(p)}_a}} 
            \\
            C^{(p)}(I^{(p)}_a, i^{([p,M])}_{I^{(p)}_a}) &= C^{(p)}(J^{(p)}_a, j^{([p,M])}_{J^{(p)}_a})
            \\
            C^{(p)}(I^{(p)}_a, p+1) &= C^{(p)}(J^{(p)}_a, p+1) \qquad 1 \leq p+1 \leq M.
        \end{align*}
 For Property 5,
        we use the following relations
        \begin{align*}
            i^{([p,M])}_{I^{(p)}_a} = j^{([p,M-1])}_{J^{(p)}_a}
            \\
            C^{(p)}(I^{(p)}_a, i^{([p,M])}_{I^{(p)}_a}) &= C^{(p)}(J^{(p)}_a, j^{([p,M-1])}_{J^{(p)}_a})
            \\
            C^{(p)}(I^{(p)}_a, p+1) &= C^{(p)}(J^{(p)}_a, p+1) \qquad 1 \leq p+1 \leq M.
        \end{align*}
    \end{proof}

\section{Special cases and elliptic weight functions}
Let us compare special cases of the elliptic multisymmetric functions introduced
in this paper with the elliptic weight functions.
The original elliptic weight functions correspond to the case $L_i^{\rmi}=L_i^{\rmii}=0$ for $i=1,\dots,M-1$, so
${\bm w}_{\rmi}^{(i)}={\bm w}_{\rmii}^{(i)}=\emptyset$.
In this case, only the $p=M$ case of $C^{(p)}(k,l)$ and $i_{j}^{([p,M])}$ appears, so we just denote these as $C(k,l)$ and $i_{j}$ respectively.
Note also that only the $p=M$ case of $\lambda^{[p]}$ appear and $\lambda^{[M]}=\lambda$.
Then \eqref{5-extended-wavefunction} is written as
\begin{multline} 
        \psi\Big(\bm z^{(1)}, \dots,\bm z^{(M)} \Big|\bm w^{(M)} \Big|
        k_1, \dots, k_M \Big| L_M \Big| \bm I \Big| \lambda\Big)
        \\
        = [\gamma]^{k_1 + \dots + k_M} \sum_{\sigma_1 \in S_{k_1}} \cdots \sum_{\sigma_M \in S_{k_M}}
        \prod_{p=1}^{M-1} \Bigg\{
            \prod_{a=1}^{k_p} \Bigg(
                \prod_{i=1}^{\widetilde{I}_{a}^{(p)}-1} 
                    \left[z^{(p)}_{\sigma_p(a)}- z^{(p+1)}_{\sigma_{p+1}(i)} \right]
\prod_{i=\widetilde{I}_{a}^{(p)}+1}^{k_{p+1}}
                    \left[z^{(p)}_{\sigma_p(a)}-z^{(p+1)}_{\sigma_{p+1}(i)}-\gamma \right]
                \\
                \times 
                \frac{\left[z^{(p)}_{\sigma_p(a)}-z^{(p+1)}_{\sigma_{p+1}(i)}
                +\lambda_{p+1}-\lambda_{i_{I_a^{(p)}}}+\gamma\left( C(I_a^{(p)},i_{I_a^{(p)}})-C(I_a^{(p)},p+1)-1\right)\right]}{
                    [\lambda_{i_{I_a^{(p)}}}-\lambda_{p+1}+\gamma(1-C(I_a^{(p)},i_{I_a^{(p)}}))]
                }
            \Bigg)
            \prod_{a<b}^{k_p} 
                \frac{[z^{(p)}_{\sigma_p(a)}-z^{(p)}_{\sigma_p(b)}+\gamma]}{[z^{(p)}_{\sigma_p(a)}-z^{(p)}_{\sigma_p(b)}]}
        \Bigg\}
        \\
        \times 
        \prod_{a=1}^{k_M} \Bigg(
            \prod_{i=1}^{I_a^{(M)}-1} 
                \left[z_{\sigma_M(a)}^{(M)}-w_i^{(M)}\right]
            \prod_{i=I_a^{(M)}+1}^{L_M} 
                \left[z_{\sigma_M(a)}^{(M)}-w_i^{(M)}-\gamma \right]
        \\
        \times 
        \frac{\left[z^{(M)}_{\sigma_M(a)}-w^{(M)}_{I_a}
        +\lambda_{M+1}-\lambda_{i_{I_a^{(M)}}}+\gamma\left( C(I_a^{(M)},i_{I_a^{(M)}})-C(I_a^{(M)},M+1)-1\right)\right]}{
            [\lambda_{i_{I_a^{(M)}}}-\lambda_{M+1}+\gamma(1-C(I_a^{(M)},i_{I_a^{(M)}}))]
        }
        \Bigg)
        \prod_{a<b}^{k_M} 
            \frac{[z^{(M)}_{\sigma_M(a)}-z^{(M)}_{\sigma_M(b)}+\gamma]}{[z^{(M)}_{\sigma_M(a)}-z^{(M)}_{\sigma_M(b)}]}.
\label{ellipticweightfunction}
    \end{multline}
We multiply \eqref{ellipticweightfunction} by some factors common to all summands of the sum and set ${\bm z}^{(M+1)}:={\bm w}^{(M)}$
to get the following normalized weight function
\begin{multline} 
        \psi^{\mathrm{norm}}\Big(\bm z^{(1)}, \dots,\bm z^{(M)} \Big|\bm z^{(M+1)} \Big|
        k_1, \dots, k_M \Big| L_M \Big| \bm I \Big| \lambda\Big)
        \\
        = \sum_{\sigma_1 \in S_{k_1}} \cdots \sum_{\sigma_M \in S_{k_M}}
        \prod_{p=1}^{M} \Bigg\{
            \prod_{a=1}^{k_p} \Bigg(
                \prod_{i=1}^{\widetilde{I}_{a}^{(p)}-1} 
                    \left[z^{(p)}_{\sigma_p(a)}- z^{(p+1)}_{\sigma_{p+1}(i)} \right]
\prod_{i=\widetilde{I}_{a}^{(p)}+1}^{k_{p+1}}
                    \left[z^{(p)}_{\sigma_p(a)}-z^{(p+1)}_{\sigma_{p+1}(i)}-\gamma \right]
                \\
                \times 
                \left[z^{(p)}_{\sigma_p(a)}-z^{(p+1)}_{\sigma_{p+1}(i)}
                +\lambda_{p+1}-\lambda_{i_{I_a^{(p)}}}+\gamma\left( C(I_a^{(p)},i_{I_a^{(p)}})-C(I_a^{(p)},p+1)-1\right)\right]
            \Bigg)
            \prod_{a<b}^{k_p} 
                \frac{[z^{(p)}_{\sigma_p(a)}-z^{(p)}_{\sigma_p(b)}+\gamma]}{[z^{(p)}_{\sigma_p(a)}-z^{(p)}_{\sigma_p(b)}]}
        \Bigg\}, \label{normalizedellipticweightfunction}
    \end{multline}
where
 \begin{align}
        C(k,l)=|\{j | i^{(M)}_j=l, j \leq k \}|, \label{dynamicalsymbolforweightfunction}
 \end{align}
 and $\sigma_{M+1}$ is defined to be the identity.

We next recall the version of elliptic weight functions by Konno \cite{Konno1}.
Note that the notations used below are basically different from the ones used in this paper.
The naive label of $I$ corresponds to tuple of integers
$(\mu_1,\dots,\mu_n), \mu_j \in \{ 1,\dots,N \}$.
We define the index set $I_l:=\{ i \in \{1,\dots,n \} \ | \ \mu_i=l \}$ ($l=1,\dots,N$)
and introduce tuples of nonnegative integers $\lambda:=(\lambda_1,\dots,\lambda_N)$ where $\lambda_l=|I_l|$.
The tuple of integers $(I_1,\dots,I_N)$ can be also regarded as another equivalent description for $I$.
Introduce $\lambda^{(l)}:=\lambda_1+\cdots+\lambda_l$
and $I^{(l)}:=I_1 \cup \cdots \cup I_l$. Note $\lambda^{(N)}=n$.
We also introduce notations for the elements
$I^{(l)}=\{ i_1^{(l)}<\cdots<i_{\lambda^{(l)}}^{(l)} \}$.
Replacing $\Pi$ by $\Pi^*$ in \cite[equations (5.1) and (5.2)]{Konno1} gives
\begin{align}
\widetilde{W}_I(t,z,\Pi^*)
&=\mathrm{Sym}_{t^{(1)}} \cdots \mathrm{Sym}_{t^{(N-1)}}
\widetilde{U}_I(t,z,\Pi^*), \\
\widetilde{U}_I(t,z,\Pi^*)
&=\prod_{l=1}^{N-1} \prod_{a=1}^{{\lambda}^{(l)}}
\Bigg(
\frac{[\nu_b^{(l+1)}-\nu_a^{(l)}+P_{\mu_s,l+1}-C_{\mu_s,l+1}(s)][1]}{[\nu_b^{(l+1)}-\nu_a^{(l)}+1][P_{\mu_s,l+1}-C_{\mu_s,l+1}(s)]} \Bigg|_{i_b^{(l+1)}=i_a^{(l)}=s}
\nonumber \\
&\times
\prod_{\substack{
b=1
\\
i_b^{(l+1)} > i_a^{(l)}
}}^{\lambda^{(l+1)}}
\frac{[\nu_b^{(l+1)}-\nu_a^{(l)}]}{[\nu_b^{(l+1)}-\nu_a^{(l)}+1]}
\prod_{b=a+1}^{\lambda^{(l)}} \frac{[\nu_a^{(l)}-\nu_b^{(l)}-1]}{[\nu_a^{(l)}-\nu_b^{(l)}]}
\Bigg), \label{ellipticweightfunctionKonno}
\end{align}
where we set $t_a^{(l)}=q^{2 \nu_a^{(l)}}$
$(l=1,\dots,N-1, a=1,\dots,\lambda^{(l)})$, $z_k=q^{2u_k}$
$(k=1,\dots,n)$, $\nu_s^{(N)}=u_s$ $(s=1,\dots,n)$ and
\begin{align}
C_{\mu_s,l+1}(s):=\sum_{j=s+1}^n \langle \overline{\epsilon}_{\mu_j},
h_{\mu_s,l+1} \rangle
=\sum_{j=s}^n \langle \overline{\epsilon}_{\mu_j},
h_{\mu_s} \rangle-
\sum_{j=s+1}^n \langle \overline{\epsilon}_{\mu_j},
h_{l+1} \rangle-1,
\end{align}
where $\langle \overline{\epsilon}_{j},h_k \rangle=\delta_{jk}$ for $\mathfrak{gl}_N$ case (rather than $\mathfrak{sl}_N$).
$P_{\mu_s,l+1}=P_{\mu_s}-P_{l+1}$ where $P_j$ are complex variables,
and $\mathrm{Sym}_{t^{(l)}}$ denotes symmetrization over the variables
$t_1^{(l)},\dots,t_{\lambda^{(l)}}^{(l)}$
\begin{align}
\mathrm{Sym}_{t^{(l)}} f(t_1^{(l)},\dots,t_{\lambda^{(l)}}^{(l)})
=\sum_{\sigma \in S_{\lambda^{(l)}}}
f(t_{\sigma(1)}^{(l)},\dots,t_{\sigma(\lambda^{(l)})}^{(l)}).
\end{align}

We multiply \eqref{ellipticweightfunctionKonno} by
$\prod_{l=1}^{N-1} \prod_{a=1}^{{\lambda}^{(l)}} \prod_{b=1}^{{\lambda}^{(l+1)}} [\nu_b^{(l+1)}-\nu_a^{(l)}+1]$
and some other factors to get
\begin{align}
\widetilde{W}_I^{\mathrm{norm}}(t,z,\Pi^*)
&=(-1)^{\sum_{l=1}^{N-1} |\lambda^{(l)}||\lambda^{(l+1)}|} \mathrm{Sym}_{t^{(1)}} \cdots \mathrm{Sym}_{t^{(N-1)}}
\widetilde{U}_I^{\mathrm{norm}}(t,z,\Pi^*), \\
\widetilde{U}_I^{\mathrm{norm}}(t,z,\Pi^*)
&=\prod_{l=1}^{N-1} \prod_{a=1}^{{\lambda}^{(l)}}
\Bigg(
[\nu_b^{(l+1)}-\nu_a^{(l)}+P_{\mu_s}-P_{l+1}-
\sum_{j=s}^n \langle \overline{\epsilon}_{\mu_j},
h_{\mu_s} \rangle+
\sum_{j=s+1}^n \langle \overline{\epsilon}_{\mu_j},
h_{l+1} \rangle+1
]|_{i_b^{(l+1)}=i_a^{(l)}=s}
\nonumber \\
&\times
\prod_{\substack{
b=1
\\
i_b^{(l+1)} > i_a^{(l)}
}}^{\lambda^{(l+1)}}
[\nu_b^{(l+1)}-\nu_a^{(l)}]
\prod_{\substack{
b=1
\\
i_b^{(l+1)} < i_a^{(l)}
}}^{\lambda^{(l+1)}}
[\nu_b^{(l+1)}-\nu_a^{(l)}+1]
\prod_{b=a+1}^{\lambda^{(l)}} \frac{[\nu_a^{(l)}-\nu_b^{(l)}-1]}{[\nu_a^{(l)}-\nu_b^{(l)}]}
\Bigg).
\end{align}

Since the quasi-period of the theta functions in real direction
is $1$ in Konno's notation where it is $\gamma$ in this paper,
we replace $\pm 1$ by $\pm \gamma$ to match with the version of the elliptic weight functions.
We also need to $i_a^{(l)} \to n+1-i_a^{(l)}$, $\mu_a \to \mu_{n+1-a}$, $s \to n+1-s$,
$\nu_a^{(l)} \to \nu_{\lambda^{(l)}+1-a}^{(l)}$,
which corresponds to reversing the labels of the quantum spaces.
Also note
$\sum_{j=1}^{s-1} \langle \overline{\epsilon}_{\mu_j},h_{l+1} \rangle=\sum_{j=1}^{s} \langle \overline{\epsilon}_{\mu_j},h_{l+1} \rangle$
when $i_b^{(l+1)}=i_a^{(l)}=s$ since $\mu_s \neq l+1$ in this case.
$\widetilde{W}_I^{\mathrm{norm}}(t,z,\Pi^*)$
can be rewritten as
\begin{align}
&\widetilde{W}_I^{\mathrm{norm}}(t,z,\Pi^*)=\mathrm{Sym}_{t^{(1)}} \cdots \mathrm{Sym}_{t^{(N-1)}}
\widetilde{U}_I^{\mathrm{norm}}(t,z,\Pi^*), \\
&\widetilde{U}_I^{\mathrm{norm}}(t,z,\Pi^*)
\nonumber \\
=& \prod_{l=1}^{N-1} \prod_{a=1}^{{\lambda}^{(l)}}
\Bigg(
[\nu_a^{(l)}-\nu_b^{(l+1)}+P_{l+1}-P_{\mu_s}
+\gamma \Bigg(
\sum_{j=1}^s \langle \overline{\epsilon}_{\mu_j},
h_{\mu_s} \rangle -1 \Bigg)-\gamma
\sum_{j=1}^{s} \langle \overline{\epsilon}_{\mu_j},h_{l+1} \rangle
]|_{i_b^{(l+1)}=i_a^{(l)}=s}
\nonumber \\
&\times
\prod_{\substack{
b=1
\\
i_b^{(l+1)} < i_a^{(l)}
}}^{\lambda^{(l+1)}}
[\nu_a^{(l)}-
\nu_b^{(l+1)}]
\prod_{\substack{
b=1
\\
i_b^{(l+1)} > i_a^{(l)}
}}^{\lambda^{(l+1)}}
[\nu_a^{(l)}-\nu_b^{(l+1)}-\gamma]
\prod_{b=a+1}^{\lambda^{(l)}} \frac{[\nu_a^{(l)}-\nu_b^{(l)}+\gamma]}{[\nu_a^{(l)}-\nu_b^{(l)}]}
\Bigg).
\end{align}

Now, note there is the following correspondence between the symbols
   \begin{gather*}
   N \leftrightarrow M+1
\\
        \stackrel{Konno}{I^{(j)}}  \leftrightarrow {\bm I}^{(j)}_{k_j}
        \\
\stackrel{Konno}{i_a^{(j)}} \leftrightarrow I^{(j)}_a
\\
\stackrel{Konno}{\lambda^{(j)}} \leftrightarrow k_j, j=1,\dots,M
\\
\stackrel{Konno}{\lambda^{(N)}} \leftrightarrow L_M
       \\
\stackrel{Konno}{P_j} \leftrightarrow \lambda_j
\\
\stackrel{Konno}{\nu_a^{(j)}} \leftrightarrow z_a^{(j)}
\\
\sum_{j=1}^s \langle \overline{\epsilon}_{\mu_j},
h_{\mu_s} \rangle -1 
\Bigg|_{i_b^{(l+1)}=i_a^{(l)}=s}
\leftrightarrow C(I_a^{(l)},i_{I_a^{(l)}})-1
\\
\sum_{j=1}^{s} \langle \overline{\epsilon}_{\mu_j},h_{l+1} \rangle
\Bigg|_{i_b^{(l+1)}=i_a^{(l)}=s}
\leftrightarrow C(I_a^{(l)},l+1)
    \end{gather*}
from which we note the elliptic weight functions are equivalent.

Note that to see the correspondence, we use the following equivalence
\begin{align}
i_b^{(l+1)}=i_a^{(l)} \longleftrightarrow b=\widetilde{I}_a^{(l)}, 
\ \ \
i_b^{(l+1)}>i_a^{(l)} \longleftrightarrow b>\widetilde{I}_a^{(l)},
\ \ \
i_b^{(l+1)}<i_a^{(l)} \longleftrightarrow b<\widetilde{I}_a^{(l)}, 
\end{align}
between the notations used in this paper and those
in Konno and also Rimanyi-Tarasov-Varchenko.
For example, $i_b^{(2)}=i_a^{(1)}$
means that $I_b^{(2)}$, which
labels the position of the $b$-th place which is colored either by color 1 or
color 2, is actually colored by 1, and it is the $a$-th place which is colored by 1.
Noting that $\widetilde{{\bm I}}_{k_1}^{(1)}$ is the set induced, we find that $i_b^{(2)}=i_a^{(1)}$
corresponds to $b=\widetilde{I}_a^{(1)}$.

Finally, we give below the presentation of $\psi^{\mathrm{norm}}\Big(\bm z^{(1)}, \dots,\bm z^{(M)} \Big|\bm z^{(M+1)} \Big|
        k_1, \dots, k_M \Big| L_M \Big| \bm I \Big| \lambda\Big)$
and $\widetilde{W}_I^{\mathrm{norm}}(t,z,\Pi^*)$
using the conventions of Rimanyi-Tarasov-Varchenko
\cite[equations (2.9), (2.10)]{RTV2}
\begin{align}
W_{I}^{\mathrm{ell},\mathrm{norm}}({\bm t},{\bm z},h,{\bm \mu})
=
\mathrm{Sym}_{t^{(1)}} \dots \mathrm{Sym}_{t^{(N-1)}}
U_I^{\mathrm{ell},\mathrm{norm}}({\bm t}, {\bm z},h, {\bm \mu}),
\label{RTVellipticweightfunctions}
\end{align}
\begin{align}
U_I^{\mathrm{ell},\mathrm{norm}}({\bm t}, {\bm z},h, {\bm \mu})=
\prod_{k=1}^{N-1} \prod_{a=1}^{\lambda^{(k)}}
\Bigg(
\prod_{c=1}^{\lambda^{(k+1)}}
\psi_{I,k,a,c}^{\mathrm{ell}}(t_a^{(k)}/t_c^{(k+1)})
\prod_{b=a+1}^{\lambda^{(k)}} \frac{\vartheta(ht_a^{(k)}/t_b^{(k)})}
{\vartheta(t_b^{(k)}/t_a^{(k)})}
\Bigg), \label{component}
\end{align}
where
\begin{align}
\psi_{I,k,a,c}^{\mathrm{ell}}(x)
=
\begin{cases}
    \vartheta(x), & i_c^{(k+1)}<i_a^{(k)}, \\
   \displaystyle \vartheta(
x h^{p_{I,j(I,k,a)}(i_a^{(k)})-p_{I,k+1}(i_a^{(k)})} \mu_{k+1}/\mu_{j(I,k,a)}
),  & i_c^{(k+1)}=i_a^{(k)}, \\
     \vartheta(xh^{-1}),  & i_c^{(k+1)}>i_a^{(k)},
\end{cases}.
\label{cases}
\end{align}
Here we use the multiplicative version of the theta function
\begin{align}
\vartheta(x)=(x^{1/2}-x^{-1/2})\phi(qx)\phi(q/x),
\ \ \ \phi(x)=\prod_{s=0}^\infty (1-q^s x),
\end{align}
and $j(I,k,a) \in \{1,\dots,N \}$ is such that
$i_a^{(k)} \in I_{j(I,k,a)}$ and
\begin{align}
p_{I,j}(m)=|I_j \cap \{1,\dots,m-1 \}|, \ \ \ j=1,\dots,N.
\end{align}
Note $\psi_{I,k,a,c}^{\mathrm{ell}}(x)$ is obtained from the same symbol
by Rimanyi-Tarasov-Varchenko by replacing $x$ by $xh^{-1}$, and the $t$-variables are replaced by their inverses.
The correspondence between the symbols used in this paper and \cite{RTV2} is the following.
  \begin{gather*}
   N \leftrightarrow M+1
\\
        \stackrel{RTV}{I^{(j)}}  \leftrightarrow {\bm I}^{(j)}_{k_j}
        \\
\stackrel{RTV}{i_a^{(j)}} \leftrightarrow I^{(j)}_a
\\
\stackrel{RTV}{\lambda^{(j)}} \leftrightarrow k_j, j=1,\dots,M
\\
\stackrel{RTV}{\lambda^{(N)}} \leftrightarrow L_M
       \\
\stackrel{RTV}{h} (\mathrm{multiplicative}) \leftrightarrow \gamma \ (\mathrm{additive})
\\
\stackrel{RTV}{\mu_j}   (\mathrm{multiplicative})   \leftrightarrow \lambda_j \ (\mathrm{additive})
\\
\stackrel{RTV}{t_a^{(j)}}  (\mathrm{multiplicative})   \leftrightarrow z_a^{(j)} \ (\mathrm{additive})
        \\
        p_{I,j(I,k,a)} (\stackrel{RTV}{i_a^{(k)}} ) \leftrightarrow C^{(M)}(I^{(k)}_a,i_{I^{(k)}_a}) - 1
        \\
        p_{I,k+1} (\stackrel{RTV}{i_a^{(k)}}) \leftrightarrow C^{(M)}(I^{(k)}_a,k+1)
    \end{gather*}







\section*{Conclusion}

We have introduced a family of $\mathfrak{gl}_{M+1}$-type multisymmetric special functions in the rational, trigonometric and elliptic cases and shown, using nested Izergin-Korepin analysis, that these special functions are exactly equal to a generalization of the lattice partition function, or off-shell nested Bethe wavefunction. 
This work generalizes the result of Foda-Manabe \cite{FM}, as well as the elliptic multisymmetric functions in Konno \cite{Konno1,Konno2} and Rim{\'a}nyi-Tarasov-Varchenko \cite{RTV2}. 
It would be of interest to analyse how this generalisation of the Bethe wavefunction might make sense in the language of quiver varieties, 
or the interpretation in the Bethe-Gauge correspondence.
Some cases beyond off-shell Bethe wavefunctions
were discussed in \cite{FM} by a limiting procedure.
However, note that for the elliptic case involving theta functions,
the limiting procedure as given in \cite{FM} cannot be applied.

We imagine that these results extend to the supersymmetric versions of these functions in the standard way.
A further direction of research is to investigate the nested Bethe wavefunction associated with other simple Lie groups in the rational and trigonometric types beyond rank 1, of which significantly less is known \cite{LP} \cite{GR}.

\section*{Acknowledgements}
This work was partially supported by Grants-in-Aid for Scientific Research (C) 20K03793, (C) 21K03176 and (C) 24K06889.

\end{document}